\newcommand{\blue}{\textcolor{black}}
\newcommand{\bound}{k^2 - 6}
\newcommand{\boundcap}{K^2 - 6}
\newcommand{\cI}{\mathcal{I}}
\newcommand{\cC}{\mathcal{C}}
\newcommand{\cT}{\mathcal{T}}
\newcommand{\ab}{\underline{ab}}
\newcommand{\steven}{\textcolor{black}}
\newcommand{\NN}{\mathbb{N}}
\newcommand{\cR}{\mathcal{R}}
\newcommand{\lca}{\text{lca}}
\newcommand{\umast}{\text{umast}}
\newcommand{\n}{\{1,2,\ldots ,n\}}
\newtheorem{lemma}{Lemma}
\newtheorem{theorem}{Theorem}
\newtheorem{corollary}{Corollary}
\newtheorem{observation}{Observation}
\title{Satisfying ternary permutation constraints by multiple linear orders or phylogenetic trees}
\author{Leo van Iersel, Steven Kelk, Nela Leki\'c, Simone Linz}
\begin{document}
\maketitle

\begin{abstract}
A ternary permutation constraint satisfaction problem \steven{(CSP)} is specified by a subset $\Pi$ of the symmetric group $S_3$. An instance of such a problem consists of a set of variables $V$ and a set of constraints $\cC$, where each constraint is an ordered triple of distinct elements from $V$. The goal is to construct a linear order $\alpha$ on $V$ such that, for each constraint $(a,b,c) \in \cC$, the ordering of $a,b,c$ induced by $\alpha$ is in $\Pi$. Excluding symmetries and trivial cases there are 11  such problems, and their complexity is well known. Here we consider the variant of the problem, denoted 2-$\Pi$, where we are allowed to construct two linear orders $\alpha$ and $\beta$ and each constraint needs to be satisfied by at least one of the two. We give a full complexity classification of all 11 2-$\Pi$ problems, observing that
in the switch from one to two linear orders the complexity landscape changes quite abruptly and that hardness proofs become rather intricate. We then focus on one of the 11 problems in particular,
which is closely related to the $2$-{\sc Caterpillar Compatibility} problem in the phylogenetics literature. We show that this particular CSP remains hard on three linear orders, and also in the biologically
relevant case when we swap three linear orders for three \steven{phylogenetic} trees, yielding the $3$-{\sc Tree Compatibility} problem. \steven{Due to the biological relevance of this problem we also give extremal results concerning the minimum number of trees required, in the worst case, to satisfy a set of rooted triplet constraints on $n$ leaf labels.}
%Our empirical experiments suggest that this function grows extremely slowly.
\end{abstract}

\section{Introduction}
\steven{A \emph{ternary permutation} constraint satisfaction problem (CSP), sometimes also known as an \emph{ordering} CSP}, is specified by a subset $\Pi$ of the symmetric group $S_3$. An instance of such a problem consists of a set of variables $V$ and a set of constraints $\cC$, where each constraint is an ordered triple of distinct elements from $V$. The goal is to construct a linear order $\alpha$ on $V$ such that, for each constraint $(a,b,c) \in \cC$, the ordering of $a,b,c$ induced by $\alpha$ is in $\Pi$. For example, if $\Pi = \{ 123, 132 \}$ then for each constraint $(a,b,c)$ we require that $\alpha(a) < \alpha(b) < \alpha(c)$ or $\alpha(a) < \alpha(c) < \alpha(b)$,
which can be summarized as $\alpha(a) < \min(\alpha(b), \alpha(c))$. Excluding symmetries and trivial cases there are 11 such problems, some of which have acquired specific names in the literature, such as \emph{betweenness} \cite{chor1998geometric} and \emph{cyclic ordering} \cite{galil1977cyclic}. A full complexity classification of the 11 problems is given in \cite{guttmann2006variations} and summarized in Table \ref{tab:summary}. Due to their fundamental character
these problems have stimulated quite some interest from the approximation \cite{guruswami2011beating}, parameterized complexity \cite{gutin2012every} and algebra \cite{bodirsky2010complexity} communities.

In this article we consider the variant of the problem, denoted $k$-$\Pi$, where we are allowed to construct $k$ linear orders
%$\alpha$ and $\beta$
and each constraint needs to be satisfied by at least one of them.
%To the best of our knowledge this problem has not appeared earlier in the literature, although %as we shall discuss in due course it has a very natural application in computational biology.
We give a full complexity classification of all 11 2-$\Pi$ problems, observing that in the switch from one to two linear orders the complexity landscape does not behave monotonically (see Table \ref{tab:summary}). We note that for a single linear order the polynomial-time variants can be solved with variations of topological sorting, and the hard variants can be proven NP-complete using fairly straightforward reductions \cite{guttmann2006variations}. In the case of two linear orders all the polynomial-time variants are \emph{trivially} solveable while, for the other variants, establishing the NP-completeness 
is much more challenging, requiring a wide array of novel gadgets and constructions.

Following this classification, we shift our focus to \emph{phylogenetics}, a branch of computational biology concerned with the inferrence of evolutionary histories \cite{SempleSteel2003}. Here we are given a set of \emph{rooted triplets} which are leaf labeled, rooted binary trees on three leaves.
% \steven({see Figure X)}.
\steven{Rooted triplets have a central and recurring role within phylogenetics due to the fact that they can be viewed as the atomic building blocks of larger evolutionary histories \cite{SempleSteel2003}}. The goal in the $k$-{\sc Tree Compatibility} problem, \steven{introduced in \cite{linz2013optimizing}} is to partition the set of triplets into at most $k$ blocks such that the triplets inside each block can be topologically embedded into a tree-like hypothesis of evolution known as a phylogenetic tree. This problem is particularly topical given the growing awareness that a genome often contains multiple tree-like 
evolutionary signals that need to be untangled \cite{HusonRuppScornavacca10,morrison2011introduction,nakhleh2013computational}. Determining whether a single block (i.e. a single tree) is sufficient can be computed in polynomial-time, using the classical algorithm of Aho \cite{AhoEtAl1981}. More recently, Linz et al. \cite{linz2013optimizing} determined that it is NP-complete to determine whether two trees are sufficient. We observe that the problem 2-$\Pi_1$ (which corresponds to the
$\Pi = \{ 123, 132 \}$ example given earlier) is equivalent to the problem Linz et al. studied, with one \steven{extra} restriction: the two phylogenetic trees we construct must be ``caterpillars'', %that is, linear
%orders,
yielding the $2$-{\sc Caterpillar Compatibility} problem. Our hardness result for 2-$\Pi_1$ thus \steven{supplements} the hardness result of Linz et al. (and, as a spin-off result, establishes a link to the literature on the dichromatic number problem \cite{bokal2004circular}). To further extend this result we show that 3-$\Pi_1$
%(the problem of deciding whether or not there exist three linear orderings, say $\alpha$, %$\beta$, and $\gamma$, such that, for each constraint $(a,b,c)$, the ordering of $a$, $b$, %and $c$ that is induced by at least one of $\alpha$, $\beta$, and $\gamma$ is in $\Pi_1$)
 is hard and, building on this machinery,
$3$-{\sc Tree Compatibility} is also hard.
As with many of the hardness results in this article we make heavy use of special gadgets that are unique solutions
to the set of constraints that they {\blue imply}. Such uniqueness gadgets are likely to be of independent interest in their own right.

We then explore $k$-{\sc Tree Compatibility} from an extremal perspective. In particular, 
how many blocks (i.e. phylogenetic trees) are required, in the worst case, to satisfy a set of triplets on $n$ leaf labels?
%if we are given a set of rooted triplets on $n$ leaf labels, how many  blocks (i.e. phylogenetic %trees) are required, in the worst case, by optimum solutions to the tree compatibility problem? %This question is inspired by the results of
Empirical experiments
%, also described here, which
show that this function $\tau(n)$ grows
%\emph{extremely} slowly. In fact, $\tau(n)$ grows
so slowly that the question arises: does there exist a constant $c$ such that $c$ trees are always sufficient? We prove that the answer is \emph{no}, showing
that $\tau(n) \rightarrow \infty$ as $n \rightarrow \infty$. This shows that the
$k$-{\sc Tree Compatibility} problem does not become trivial for any $k> 1$, which supports
the conjecture of Linz et al. that the problem is NP-complete for every $k>1$.
We also show a logarithmic upper bound.

We conclude with some open problems and future directions for research.

%; the 4 problems that can be decided
%in polynomial time can be solved with variations of topological sorting. 

%A ternary Permutation-CSP is specified by a subset $\Pi$ of the symmetric group $S_3$. An instance of such a problem consists of a set of variables (or taxa) $X$ and a multiset of constraints, which are %ordered triples of distinct variables of $X$. The objective is to answer weather all constraints given by an instance of $\Pi$ can be satisfied by a number of linear orders. In this paper we will look at two %linear orders. We use 2-$\Pi_i$ notation to refer to a problem in which we ask wether constraints in $\Pi_i$ can be satisfied by two linear orders.

\begin{table}
\begin{center}

\begin{tabular}{l l c c}
	 &				&1LO	&2LO \\ \hline
 $\Pi_0 \text{ (linear ordering)}$ & 123 				& P 	& NPC \\
 $\Pi_1$ & 123, 132 			& P 	& NPC \\
 $\Pi_2$ & 123, 213, 231 		& P 	& P \\
 $\Pi_3$ & 123, 231, 312, 321 		& P 	& P \\
 $\Pi_4$ & 123, 231			& NPC 	& NPC \\
 $\Pi_5 \text{ (betweenness)}$ & 123, 321			& NPC 	& NPC \\
 $\Pi_6$ & 123, 132, 231		& NPC 	& NPC \\
 $\Pi_7 \text{ (circular ordering)}$ & 123, 231, 312 		& NPC 	& P \\
 $\Pi_8$ & $S_3 \setminus$ 123, 231  	& NPC 	& P \\
 $\Pi_9 \text{ (non-betweenness)} $ & $S_3 \setminus$ 123, 321  	& NPC 	& NPC \\
 $\Pi_{10}$ & $S_3 \setminus$ 123 	& NPC 	& P 
\end{tabular} 
\caption{The complexity of the 11 ternary permutations CSPs, in the case of 1 linear order (1LO) (see \cite{guttmann2006variations}) and 2 linear orders (2LO) (this article). }
\label{tab:summary}
\end{center}
\end{table}

%\newpage

\section{Preliminaries}
This section provides notation and terminology that is used in the remainder of the paper. Preliminaries in the context of
phylogenetics are given in the second part of this section.

\subsection{Ternary permutation constraint satisfaction problems}
%So fox example, problem 2-$\Pi_5$ consists of constraints of form (123, 321). To say that a linear order satisfies a $\Pi_5$-constraint $c=(abc)$ is to say that in that linear order either triple $abc$ or triple $cba$ is satisfied. \textcolor{red}{So a constraint is a set of triples and we use brackets to denote a constraint $(abc)$ and nothing to denote a triple $abc$}

In this paper, we investigate eleven ternary permutation constraint satisfaction problems that are all based on subsets of the symmetric group on three elements, i.e. $$S_3 = \{123, 132, 213, 231, 312, 321\}.$$ An instance $\cI=(V,\cC)$ of such a problem consists of a set $V$ of {\it variables} and a set $\cC$ of {\it constraints}, where each constraint is an ordered triple $(v_1,v_2,v_3)$ of three distinct variables of $V$.

Let $\alpha:V\rightarrow\{1,2,\ldots,m\}$ be a linear ordering of $V$, where $|V|=m$. We sometimes write $$(\alpha^{-1}(1),\alpha^{-1}(2),\ldots,\alpha^{-1}(m))$$ to denote $\alpha$. 

Let $V$ and $V'$ be two sets of variables such that $V\cap V'=\emptyset.$ Furthermore, let $\alpha$ be a linear ordering of $V$, and let $\beta$ be a linear ordering of $V'$. We use $\bar{\alpha}$ to denote the ordering obtained from $\alpha$ by inverting it and call $\bar{\alpha}$ the \blue{{\it reversal}} of $\alpha$. \blue{Furthermore, for a subset $S$ of $V$,  the {\it restriction of $\alpha$ to $S$} is the linear ordering, say $\gamma$, on $S$ with the property that $\gamma(v_j)<\gamma(v_{j'})$ if and only if $\alpha(v_j)<\alpha(v_{j'})$ for each pair of elements $v_j,v_{j'}\in S$. We denote $\gamma$ by $\alpha-(V-S)$.} Now, let $\gamma$ be a linear ordering of $V\cup V'$ such that $\gamma(v_j)<\gamma(v_{j'})$ if and only if $\alpha(v_j)<\alpha(v_{j'})$ for each pair of elements $v_j,v_{j'}\in V$. Then $\gamma$ is said to {\it preserve} $\alpha$.
Let $\gamma$ be a linear ordering of $V\cup V'$ that preserves $\alpha$ and $\beta$ and has the property that $\gamma(v_j)<\gamma(v_{j'})$ for each $v_j\in V$ and $v_{j'}\in V'$.
We use $\alpha||\beta$ to denote $\gamma$. Intuitively, $\gamma$ is the concatenation of $\alpha$ followed by $\beta$. 
%Reversely, obtain $\alpha$ from $\gamma$ by deleting each element in $V'$ and preserving the ordering of all remaining elements. We use $\gamma-\beta$ to denote $\alpha$.

\blue{Referring back to Table~\ref{tab:summary}, let  $\Pi_i$ be a subset of $S_3$ for some $i\in\{0,1,\ldots,10\}$}, and let $\cI=(V,\cC)$ be an instance of a ternary permutation constraint satisfaction problem. Furthermore, let $\alpha$ be a linear ordering of $V$. We say that a constraint $(v_1,v_{2},v_{3})$ in $\cC$ is {\it $\Pi_i$-satisfied} by $\alpha$, \blue{if  there is a permutation $\pi\in\Pi$ such that} \steven{$\alpha(v_{\pi(1)})<\alpha(v_{\pi(2)}) < \alpha(v_{\pi(3)})$ where here $\pi$ is assumed to map positions to symbols}.
%Throughout this paper, we consider eleven distinct subsets of $S_3$ that are denoted by %$\Pi_0,\Pi_1,\Pi_2,\ldots,\Pi_{10}$ and summarized in Table~\ref{tab:summary}.
For example,
%$\Pi_5$ contains the two permutations $123$ and $321$. Hence,
if $(v_1,v_2,v_3)$ is $\Pi_5$-satisfied by $\alpha$, then either \steven{$\alpha(v_1)<\alpha(v_2) < \alpha(v_3)$} or \steven{$\alpha(v_3)<\alpha(v_2) < \alpha(v_1)$}.\\

For each $i\in\{0,1,2,\ldots,10\}$, we are now in a position to define the following decision problem.\\

\noindent $k$-$\Pi_i$

\noindent {\bf Instance.} A finite set $V$ of variables and a
% multiset
set $\cC$ of ordered triples of distinct variables from $V$ and a positive integer $k$.

\noindent {\bf Question.} Do there exist at most $k$ linear orderings of $V$ such that each constraint $(v_1,v_2,v_3)$ in $\cC$ is $\Pi_i$-satisfied by one of these orderings.\\

%\noindent 2-$\Pi_i$
%
%\noindent {\bf Instance.} A finite set $V$ of variables and  a multiset $\cC$ of ordered triples of distinct variables from $V$.
%
%\noindent {\bf Question.} Do there exist two linear orderings $\alpha$ and $\beta$ of $V$ such that each constraint $(v_1,v_2,v_3)$ in $\cC$ is $\Pi_i$-satisfied by $\alpha$ or $\beta$.\\

\noindent If the answer to an instance $\cI$ of $k$-$\Pi_i$ is `yes', we say that $\cI$ is {\it $k$-$\Pi_i$-satisfiable} (or {\it $\Pi_i$-satisfiable} for short if $k=1$).

\blue{Lastly, let $\alpha$ be a linear ordering of a set $V$ of variables. We say that $\alpha$ {\it implies a constraint $(v_1,v_{2},v_{3})$ under $\Pi_i$} precisely if $(v_1,v_{2},v_{3})$ is $\Pi_i$-satisfied by $\alpha$.  Furthermore, we use $\cC_{\Pi_i}(\alpha)$ to denote the set of all constraints that are implied by $\alpha$ under $\Pi_i$. }

\subsection{Phylogenetics background}

This section contains preliminaries in the context of phylogenetics. For a more thorough overview, we refer the interested reader to \cite{SempleSteel2003}.

A \emph{binary unrooted phylogenetic tree} of order~$n$ is a tree in which all internal vertices have degree~3 and which has~$n$ leaves that are bijectively labeled with elements in $\n$. For two binary unrooted phylogenetic trees~$T$ and $T'$, we say that~$T$ \emph{displays}~$T'$ if~$T'$ can be obtained from a subtree of~$T$ by suppressing degree-2 vertices. 

A \emph{binary rooted phylogenetic tree} of order~$n$ is a rooted tree in which all edges are directed away from the root which has outdegree 2, all internal vertices have indegree 1 and outdegree~2, and which has~$n$ leaves that are bijectively labeled with elements in~$\n$. If two leaves $a$ and $b$ of a rooted phylogenetic tree $T$ are adjacent to the same parent, then $\{a,b\}$ is called a {\it cherry} of $T$. Furthermore, a binary rooted phylogenetic tree that has exactly one cherry is called a {\it caterpillar}. For two vertices~$u$ and $v$ of a binary rooted phylogenetic tree~$T$, we write~$u<_Tv$ to denote that there exists a directed path from~$u$ to~$v$ in~$T$. Moreover, $\lca_T(u,v)$ denotes the lowest common ancestor of~$u$ and~$v$ in~$T$, i.e.~$\lca_T(u,v)$ is the unique vertex~$w$ such that~$w<_Tu$, $w<_Tv$ and there is no vertex~\steven{$w' \neq w$} such that~$w'<_Tv$, $w'<_Tv$ and~$w<_Tw'$. Again, let $T$ be a rooted phylogenetic tree whose leaves are bijectively labeled with elements in $X$, and let $Y$ be a subset of $X$. We call $X$ the {\it leaf set} of $T$ and denote it by $L(T)$. Furthermore, the {\it minimal rooted subtree} of $T$ that connects all the leaves in $Y$ is denoted by $T(Y)$. Lastly, the {\it restriction of $T$ to $Y$}, denoted by $T|Y$, is the rooted phylogenetic tree obtained from $T(Y)$ by contracting all degree-two vertices apart from the root. 
%Furthermore, we call the rooted phylogenetic tree obtained from the minimal rooted subtree of $T$ that connects all vertices in $Y$  by contracting all non-root degree-$2$ vertices the {\it restriction of $T$ to $Y$} and denoted it by $T|Y$.

Next, we introduce a special type of binary rooted phylogenetic tree. A {\it (rooted) triplet} is a binary rooted phylogenetic tree on three leaves (see Figure \ref{fig:triplet}). 
%For example, for leaves labeled with $a$, $b$, and $c$ we write $ab|c$ or, equivalently, $ba|c$ if the path from $a$ to $b$ does not intersect the path from the root to $c$. 
We say that a binary rooted phylogenetic tree~$T$ \emph{displays} a triplet~$ab|c$ \blue{(or, equivalently, $ba|c$)} if $\lca_T(a,c)=\lca_T(b,c)<_T\lca_T(a,b)$. Moreover, for a triplet $ab|c$, we call $c$ the {\it witness} of $ab|c$. Now, let $\cR$ be a set of triplets. If there exists a rooted phylogenetic tree $T$ such that each triplet in $\cR$ is displayed by $T$, we say that $\cR$ is {\it compatible} and, otherwise, we say that $\cR$ is {\it incompatible.}

\begin{figure}[h]
 \centering
  \includegraphics[width=2cm]{./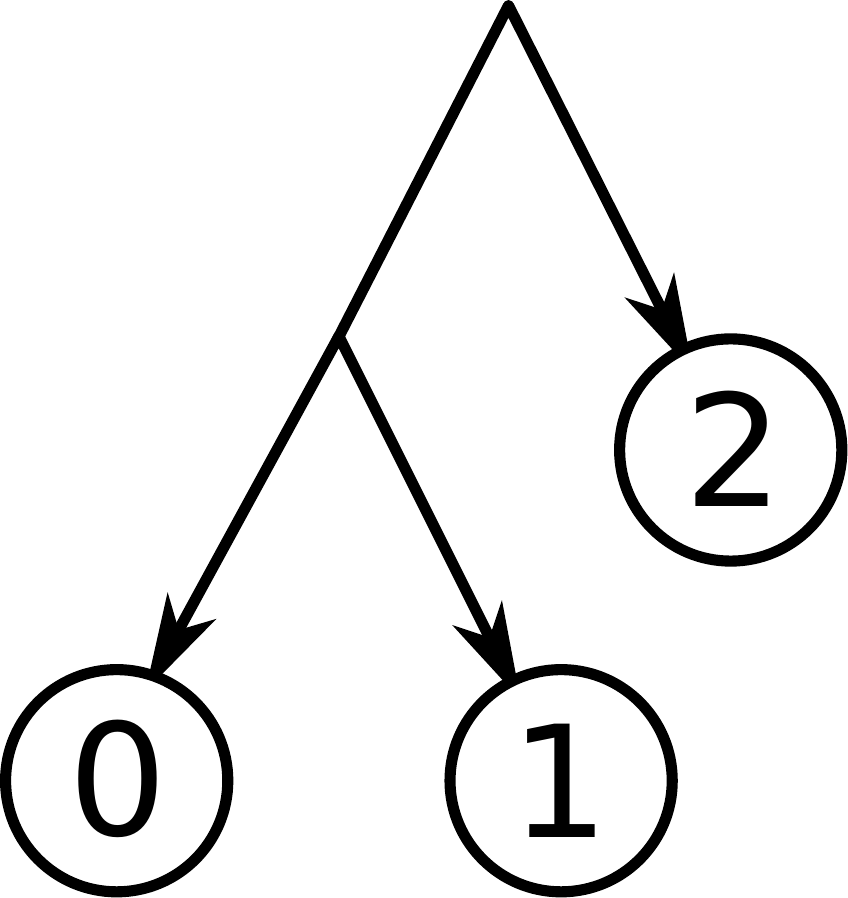}
  \caption{A rooted triplet $01|2$.}
 \label{fig:triplet}
\end{figure}

We are now in a position to state a decision problem that plays an important role in this paper and is strongly related to  \blue{$k$-$\Pi_1$}.\\

\noindent $k$-{\sc Caterpillar Compatibility}

\noindent {\bf Instance.} A set $\cR$ of rooted triplets.

\noindent {\bf Question.} Do there exist at most $k$ caterpillars such that each element in $\cR$ is displayed by at least one such caterpillar?\\

\subsection{A note on $k$-$\Pi_1$ and $k$-{\sc Caterpillar Compatibility}}\label{sec:order2phylo}
A rooted triplet $ab|c$ is a rooted binary phylogenetic tree on three leaves. It is important to note that $a$ and $b$ are indistinguishable from a phylogenetics point of view because a rooted binary phylogenetic tree $T$ on three leaves $a$, $b$, and $c$ and with $\{a,b\}$ being the cherry of $T$ such that $a$ is the right and $b$ is the left child of the common parent of $a$ and $b$ is considered to be the same as the tree obtained from $T$ by swapping $a$ and $b$. 

Let $(1,2,\ldots,n-1,n)$ denote a caterpillar $T$ on $n$ leaves whose cherry is $\{n-1,n\}$ and the path from each \blue{leaf labeled $i\in\{1,2,\ldots,n-1\}$} to the root of $T$ has length $i$ \blue{while the path from the leaf labeled $n$  to the root of $T$ to has length $n-1$}. Since $n$ and $n-1$ are indistinguishable, $T$ naturally corresponds to the two linear ordering \blue{$(n,n-1,n-2,\ldots,2,1)$ and $(n-1,n,n-2,\ldots,2,1)$}. Moreover, each constraint $(a,b,c)$ in an instance of \blue{$k$-$\Pi_1$} can be satisfied by an ordering $\alpha$ with $\alpha(a)<\alpha(b)<\alpha(c)$ or $\alpha(a)<\alpha(c)<\alpha(b)$ and corresponds to a rooted triplet $bc|a$ that can be displayed by a caterpillar whose distance from $a$ to the root is shorter than the distance from $b$ to the root and also shorter than the distance from $c$ to the root. We summarize the strong relationship between the two problems in the following observation.

\begin{observation}\label{obs}
The problem $k$-$\Pi_1$  is NP-complete if and only if $k$-{\sc Caterpillar Compatibility} is NP-complete.
\end{observation}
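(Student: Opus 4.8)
The plan is to establish the biconditional of Observation~\ref{obs} by exhibiting a polynomial-time, parsimony-preserving correspondence in both directions between instances of $k$-$\Pi_1$ and instances of $k$-\textsc{Caterpillar Compatibility}, so that a solution using $k$ objects on one side yields a solution using $k$ objects on the other. Since both problems are readily seen to lie in NP (a certificate consists of the $k$ orderings or $k$ caterpillars, and checking that every constraint/triplet is satisfied by at least one of them is polynomial), the heart of the matter is the equivalence of the decision answers, and for this a single translation of constraints to triplets and back will suffice.

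First I would make precise the dictionary already previewed in Section~\ref{sec:order2phylo}: a constraint $(a,b,c)$ in a $k$-$\Pi_1$ instance is $\Pi_1$-satisfied by $\alpha$ exactly when $\alpha(a)<\min(\alpha(b),\alpha(c))$, and I would map it to the triplet $bc|a$. Conversely, given a set $\cR$ of triplets, I would map each triplet $bc|a$ to the constraint $(a,b,c)$ (the order of $b$ and $c$ being irrelevant, consistent with the stated indistinguishability of the two non-witness leaves). This gives a bijection between constraint sets and triplet sets on the same label set $V = \n$. The key technical claim I would then prove is the equivalence at the level of a single object: a linear order $\alpha$ of $V$ $\Pi_1$-satisfies a constraint $(a,b,c)$ if and only if the caterpillar $(\alpha^{-1}(m),\dots,\alpha^{-1}(1))$ corresponding to $\alpha$ (reading the order from the cherry down to the root, as in the excerpt) displays the triplet $bc|a$.

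That single-object equivalence is where I would spend the most care, and I expect it to be the main obstacle. For one direction I would observe that in the caterpillar associated to $\alpha$, the leaf with the smallest $\alpha$-value is closest to the root, so if $\alpha(a)<\alpha(b)$ and $\alpha(a)<\alpha(c)$ then $a$ branches off below the subtree containing $b$ and $c$, giving $\lca(a,b)=\lca(a,c)<_T\lca(b,c)$, which is precisely $bc|a$. For the converse I would argue that a caterpillar displays $bc|a$ only if $a$ is the leaf among $\{a,b,c\}$ nearest the root, which forces $\alpha(a)<\alpha(b)$ and $\alpha(a)<\alpha(c)$. The delicate points to handle explicitly are (i) the two valid orderings arising from the indistinguishable cherry $\{n-1,n\}$, so that the map from orders to caterpillars is well defined up to this harmless ambiguity, and (ii) the fact that on a caterpillar the witness of any displayed triplet is uniquely the root-nearest of the three leaves, so no information is lost when we forget the left/right labeling of the cherry.

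Finally I would assemble the global statement: a family of $k$ linear orders $\alpha_1,\dots,\alpha_k$ satisfies every constraint of $\cC$ (each by at least one $\alpha_i$) if and only if the corresponding $k$ caterpillars display every triplet of $\cR$ (each by at least one caterpillar), since the per-object, per-constraint equivalence above is preserved under taking the disjunction over the $k$ objects. This yields that $k$-$\Pi_1$ is a yes-instance precisely when the translated $k$-\textsc{Caterpillar Compatibility} instance is, and vice versa; combined with membership in NP and the polynomiality of the translation, it gives the claimed biconditional on NP-completeness.
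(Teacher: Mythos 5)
Your proposal is correct and follows essentially the same route as the paper, which justifies the observation purely via the correspondence sketched in Section~\ref{sec:order2phylo}: constraints $(a,b,c)$ translate to triplets $bc|a$, linear orders translate to caterpillars up to the harmless two-fold ambiguity of the cherry, and satisfaction is preserved constraint-by-constraint and object-by-object. The only thing to tidy is the orientation convention: your displayed caterpillar $(\alpha^{-1}(m),\dots,\alpha^{-1}(1))$ places the $\alpha$-maximal leaf nearest the root, whereas your verification assumes the $\alpha$-minimal leaf is nearest the root (the paper's own sketch has the same wobble), so fix one convention and use it throughout.
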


\section{Hardness results for all eleven CSP problems on two linear orderings}\label{sec:two}
In this section, we settle the complexity of each ternary permutation CSP problem 2-$\Pi_i$ with $i\in\{0,1,2,\ldots,10\}$. We start with the following observation.
%an observation that shows that 2-$\Pi_i$ is polynomial-time solvable for each $i \in %\{2,3,7,8,10\}$.

\steven{
\begin{observation}
For each $i \in \{2,3,7,8,10\}$, the problem \em 2-$\Pi_i$ is trivially polynomial-time solveable. In particular, every instance is a `yes'-instance.
\end{observation}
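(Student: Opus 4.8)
The plan is to exhibit, for every instance $(V,\cC)$, an explicit pair of linear orderings that simultaneously $\Pi_i$-satisfies all constraints; this shows at once that every instance is a `yes'-instance and that a solution is produced in polynomial (indeed linear) time. The single idea behind all five cases is to take an arbitrary linear ordering $\alpha$ of $V$ together with its reversal $\bar\alpha$. On any triple of variables these two orderings induce opposite permutations: if a constraint $(v_1,v_2,v_3)$ is ordered by $\alpha$ according to some $\pi\in S_3$, then it is ordered by $\bar\alpha$ according to the string reverse of $\pi$. Under this reverse operation $S_3$ partitions into exactly three two-element classes, namely $\{123,321\}$, $\{132,231\}$ and $\{213,312\}$.

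The key observation to verify is that for each $i\in\{2,3,7,8,10\}$ the set $\Pi_i$ meets every one of these three classes, that is, contains at least one permutation from each pair. Granting this, the argument closes immediately: given any constraint $(v_1,v_2,v_3)$, let $\pi$ be the permutation it induces under $\alpha$, so that its reverse is the permutation induced under $\bar\alpha$. Since $\pi$ and its reverse form one of the three classes and $\Pi_i$ meets that class, at least one of the two lies in $\Pi_i$, whence the constraint is $\Pi_i$-satisfied by $\alpha$ or by $\bar\alpha$. As this holds for every constraint, the pair $\{\alpha,\bar\alpha\}$ witnesses a `yes'-instance, and choosing $\alpha$ to be, say, the identity ordering makes the whole construction linear-time.

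All that is left is the routine membership check underlying the key observation. For $\Pi_2=\{123,213,231\}$ the permutations $123$, $231$, $213$ hit the three classes; for $\Pi_3=\{123,231,312,321\}$ one may take $123$, $231$, $312$; for $\Pi_7=\{123,231,312\}$ take $123$, $231$, $312$; for $\Pi_8=\{132,213,312,321\}$ take $321$, $132$, $213$; and for $\Pi_{10}=S_3\setminus\{123\}$ every class is met trivially, since each class contains a permutation distinct from $123$. There is no genuine obstacle in the proof; the only point requiring care is getting the reverse pairs right and checking each of the five sets against them. It is instructive to note that the $\Pi_i$ absent from this list, for instance $\Pi_0=\{123\}$ or $\Pi_5=\{123,321\}$, fail precisely because they miss the class $\{132,231\}$, which is exactly the gap exploited by the hardness reductions in the remainder of this section.
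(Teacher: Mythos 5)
Your proposal is correct and follows the same route as the paper, which simply observes that $\{\alpha,\bar\alpha\}$ is a valid solution for any linear ordering $\alpha$ and leaves the verification to the reader. Your explicit criterion --- that $\Pi_i$ must meet each of the three string-reversal classes $\{123,321\}$, $\{132,231\}$, $\{213,312\}$ --- is exactly the right way to carry out that verification, and your membership checks for $i\in\{2,3,7,8,10\}$ are all accurate.
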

\begin{proof}
It can easily be verified that, for each of the described problems, $\{ \alpha, \bar{\alpha} \}$ is a valid solution, for \emph{any} linear ordering $\alpha$.
\end{proof}
}

The next \blue{observation is easily verified and implicitly used throughout the remainder of this section.}
% shows that, for each $i \in \{0,1,2,\ldots,10\}$, the decision problem 2-$\Pi_i$ is in NP. The result is subsequently used to establish NP-completeness of 2-$\Pi_i$ for each $i\in\{0,1,4,5,6,9\}$.

\begin{observation}\label{obs:np}
For each $i\in\{0,1,2,\ldots, 10\}$, the problem {\em 2}-$\Pi_i$ is in NP.
\end{observation}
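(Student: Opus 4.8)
The plan is to verify that $2$-$\Pi_i$ satisfies the two defining conditions of membership in \textsf{NP}: that ``yes''-instances admit a polynomial-size certificate, and that such a certificate can be verified in polynomial time. The natural certificate for a ``yes''-instance $\cI=(V,\cC)$ is simply a pair of linear orderings $(\alpha,\beta)$ of $V$ that witnesses $2$-$\Pi_i$-satisfiability, i.e. such that every constraint in $\cC$ is $\Pi_i$-satisfied by at least one of $\alpha$ and $\beta$. First I would observe that each linear ordering of $V$ is a bijection $V\rightarrow\{1,2,\ldots,|V|\}$ and can therefore be encoded using $O(|V|\log|V|)$ bits, so the certificate $(\alpha,\beta)$ has size polynomial in $|V|$, hence polynomial in the size of $\cI$.

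The remaining step is to describe the verification procedure and bound its running time. Given a candidate pair $(\alpha,\beta)$, I would iterate over each constraint $(v_1,v_2,v_3)\in\cC$ and check whether it is $\Pi_i$-satisfied by $\alpha$ or by $\beta$. For a fixed ordering and a fixed constraint, deciding $\Pi_i$-satisfaction amounts to reading off the relative order that the ordering induces on the three variables $v_1,v_2,v_3$ and testing whether the corresponding permutation lies in $\Pi_i$; since $\Pi_i\subseteq S_3$ has constant size, this test takes constant time. Repeating this over both orderings and all $|\cC|$ constraints, the verifier runs in time $O(|\cC|)$, which is polynomial in the size of $\cI$. The verifier accepts precisely when every constraint is satisfied by one of the two orderings, so it accepts some certificate if and only if $\cI$ is a ``yes''-instance.

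There is no real obstacle here, as the statement is an entirely routine membership claim; the only points requiring any care are the bookkeeping ones noted above, namely that the certificate is of polynomial size and that $\Pi_i$-satisfaction of a single constraint by a single ordering is a constant-time check because $\Pi_i$ is a subset of the fixed group $S_3$. I would phrase the argument uniformly in $i$ so that a single paragraph covers all eleven problems at once, which is why the statement quantifies over all $i\in\{0,1,\ldots,10\}$.
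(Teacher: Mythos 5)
Your proof is correct and is exactly the routine certificate-and-verifier argument that the paper has in mind; the paper itself gives no proof, merely remarking that the observation ``is easily verified.'' Your write-up simply makes explicit the standard reasoning (polynomial-size certificate consisting of two linear orders, constant-time check of each constraint against each order) that the authors leave implicit.
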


%\begin{proof}
%Let $\cI=(V,\cC)$ be an instance of 2-$\Pi_i$, and let $\alpha$ and $\beta$ be two linear orderings of $V$. It can be verified in polynomial time whether or not each constraint $(v_1,v_2,v_3)$ in $\cC$ is $\Pi_i$-satisfied by $\alpha$ or $\beta$.
%\end{proof}

%%%%%%%%%%%%%%%%%%%%%%%%%%%%%%%%%%%%%%%%%%%%%%%%%%%%%%%%%%%%%%%%%%%%%%%%%%%%%%%%%%%

\begin{theorem}\label{t:2-pi0}
The problem {\em 2}-$\Pi_0$ is NP-complete.
\end{theorem}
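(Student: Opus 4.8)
The plan is to prove NP-hardness by reduction from \textsc{Monotone Not-All-Equal 3-Sat}; membership in NP is already supplied by Observation~\ref{obs:np}. The starting point is a reformulation. Since $\Pi_0=\{123\}$, a constraint $(a,b,c)$ is satisfied by an ordering $\alpha$ exactly when $\alpha(a)<\alpha(b)<\alpha(c)$, so assigning $(a,b,c)$ to $\alpha$ is equivalent to imposing the precedence arcs $a\to b$ and $b\to c$. Hence a set of constraints is $\Pi_0$-satisfiable by a single ordering if and only if the union of all these arcs is acyclic, since any acyclic precedence digraph extends to a linear order by topological sorting. Therefore $2$-$\Pi_0$ asks precisely whether $\cC$ can be split into two groups so that each group induces an acyclic precedence digraph, i.e.\ whether the implied arcs admit a $2$-colouring with no monochromatic directed cycle. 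This is the cycle-avoidance / dichromatic-type phenomenon the introduction alludes to, and it is the lens I would reduce into.

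Working in this arc language I would build three primitives, each expressible directly with $\Pi_0$-constraints. An \emph{arc} $a\to b$ is realised by a single constraint $(a,b,t)$ whose third coordinate $t$ is a private fresh variable: the side effects $b\to t$ and $a\to t$ point into a sink that lies on no cycle, so the constraint behaves as one controllable arc $a\to b$ whose colour is simply the order it is assigned to. A \emph{negation gadget} is an opposing pair $(a,b,t)$, $(b,a,s)$ on a shared vertex pair; these force $a<b$ and $b<a$ respectively, cannot share an order, and so must receive different colours while forming only a harmless bichromatic digon. Chaining two negation gadgets through a shared intermediate arc yields an \emph{equality gadget} (applying two negations returns to equality with two colours), which lets a variable's colour be propagated and copied onto fresh arcs positioned wherever they are needed. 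Finally, a \emph{clause gadget} is a directed triangle $A\to B\to C\to A$ assembled from three such arcs: it is monochromatic, and thus forbidden, exactly when all three arcs lie in the same order.

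Given a monotone \textsc{Nae-3Sat} instance, I would create for each variable $x$ a base arc whose colour encodes its truth value, and for each occurrence of $x$ a fresh triangle arc tied to the base arc by an equality gadget, so that all copies carry the same colour. For each clause I would wire its three occurrence-arcs head-to-tail into a clause triangle. A truth assignment then corresponds to a colouring in which a clause triangle is monochromatic precisely when its three literals are all equal, so the triangle avoids a monochromatic cycle if and only if the clause is not-all-equal satisfied. For the forward direction a satisfying assignment yields a $2$-partition with no monochromatic cycle, which topological sorting turns into two valid linear orders; for the backward direction a valid pair of orders is read as a colouring, a consistent truth value is extracted from each variable's base arc through the negation/equality gadgets, and the clause triangles certify that every clause is satisfied.

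The delicate part, which I expect to absorb most of the work, is guaranteeing that the \emph{only} directed cycles introduced are the intended digons and clause triangles. I would route every gadget through private sink variables and lay the base and propagation arcs on an acyclic backbone, so that no combination of arcs from different gadgets closes a spurious directed cycle; otherwise an unintended monochromatic cycle could make a satisfiable instance look infeasible, or surplus colouring freedom could defeat the forcing. Carefully wiring the equality gadgets so each copied literal arc arrives at the prescribed endpoints of its clause triangle, checking that the whole construction has polynomial size, and proving the gadget forcings are tight enough to make the backward direction go through, are where the argument becomes genuinely intricate.
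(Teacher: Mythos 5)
Your reformulation of $2$-$\Pi_0$ as ``$2$-colour the constraints so that each colour class induces an acyclic precedence digraph'' is correct, and your arc primitive (a constraint $(a,b,t)$ with a private sink $t$) and negation gadget (the opposing pair $(a,b,t)$, $(b,a,s)$, which must receive different colours because otherwise one order would contain the digon $a\to b\to a$) are both sound. The genuine gap is the equality gadget. A negation gadget is a digon, and a digon can only relate an arc $u\to v$ to its reverse $v\to u$ on the \emph{same} vertex pair. Chaining two negations therefore only yields equality between two arcs that both sit at the position $u\to v$: the ``fresh arc'' produced by the chain is \emph{not} ``positioned wherever it is needed'' --- it is pinned to the original vertex pair. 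Any longer directed cycle gives only a not-all-equal constraint on three or more arcs, which is strictly weaker than equality, so with the gadgets as you have defined them there is no way to transport the colour of a variable's base arc to an occurrence arc sitting on an edge of a clause triangle elsewhere in the digraph. Since every variable of a non-trivial \textsc{Nae-3Sat} instance occurs in at least two clauses, and the three arcs of a clause triangle must pairwise share endpoints, cross-position propagation is unavoidable, and your reduction cannot be completed as described. (The feature you would need to exploit is precisely the one your arc primitive discards: a single constraint $(a,b,c)$ forces \emph{two} arcs $a\to b$ and $b\to c$, at two different positions, into the same colour class, and chaining constraints that overlap in digons on their second arcs does move colours to new vertex pairs --- but that is a substantially different gadget design whose soundness, in particular the absence of spurious cycles, you would still have to establish.)

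For comparison, the paper avoids all of this by reducing from $1$-$\Pi_5$ (betweenness) rather than from \textsc{Nae-3Sat}: each betweenness constraint $(v_1,v_2,v_3)$ is replaced by the pair $(v_1,v_2,v_3)$, $(v_3,v_2,v_1)$, which in your language is exactly one negation gadget (the two constraints form digons on $\{v_1,v_2\}$ and $\{v_2,v_3\}$ and so must go to different orders), and then whichever of the two a given order receives forces $v_2$ to lie between $v_1$ and $v_3$ in that order. All the combinatorial hardness is inherited from betweenness, so no clause gadgets and no colour propagation are needed. If you want to keep your digraph-colouring viewpoint, the cleanest fix is to adopt this source problem rather than to repair the equality gadget.
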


\begin{proof}
To establish the result, we use a polynomial-time reduction from 1-$\Pi_5$. Let $\cI=(V,\cC)$ be an instance of 1-$\Pi_5$. Let $\cC'$ be the set of ternary constraints in which each constraint in $\cC$ is represented by two constraints. In particular, set $$\cC'=\bigcup_{(v_1,v_2,v_3)\in\cC}\{(v_1,v_2,v_3),(v_3,v_2,v_1)\}.$$ Now, let $\cI'=(V,\cC')$ be an instance of 2-$\Pi_0$. As $|\cC'|=2|\cC|$, we have that $\cI'$ has polynomial size and that the reduction can be carried out in polynomial time. \blue{We now claim that $\cI$ is 1-$\Pi_5$-satisfiable if and only if $\cI'$ is 2-$\Pi_0$-satisfiable.} 

Suppose that $\cI$ is 1-$\Pi_5$-satisfiable. Let $\alpha$ be a linear ordering of $V$ that satisfies $\cI$. It is now easily checked that, for each constraint $(v_1,v_2,v_3)$ in $\cC$, either $\alpha(v_1)<\alpha(v_2)<\alpha(v_3)$ in which case $\bar{\alpha}(v_3)<\bar{\alpha}(v_2)<\bar{\alpha}(v_1)$ or $\alpha(v_3)<\alpha(v_2)<\alpha(v_1)$ in which case $\bar{\alpha}(v_1)<\bar{\alpha}(v_2)<\bar{\alpha}(v_3)$. Hence, $\alpha$ and $\bar{\alpha}$ are a solution to $\cI'$ and, so, \blue{$\cI'$ is 2-$\Pi_0$-satisfiable}.

On the other hand, suppose that $\cI'$ is 2-$\Pi_0$-satisfiable. Let $\alpha$ and $\beta$ be two linear orderings of $V$ that satisfy $\cI'$, and let $(v_1,v_2,v_3)$ be an element of $\cC$. Then exactly one element in $\{(v_1,v_2,v_3),(v_3,v_2,v_1)\}$ is $\Pi_0$-satisfied by $\alpha$. Hence, $\alpha(v_1)<\alpha(v_2)<\alpha(v_3)$ or $\alpha(v_3)<\alpha(v_2)<\alpha(v_1)$. Both cases imply that $(v_1,v_2,v_3)$ is $\Pi_5$-satisfied by $\alpha$. Hence, $\alpha$ is a solution to $\cI$ and, so $\cI$ is \blue{1-$\Pi_5$-satisfiable}. \blue{The theorem now follows.}
\end{proof}

%%%%%%%%%%%%%%%%%%%%%%%%%%%%%%%%%%%%%%%%%%%%%%%%%%%%%%%%%%%%%%%%%%%%%%%%%%%%%%%%%%%
\begin{theorem}\label{t:2-pi1}
The problem {\em 2-}$\Pi_1$ is NP-complete.
\end{theorem}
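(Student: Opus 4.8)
The plan is to first reformulate 2-$\Pi_1$ in purely combinatorial terms and then reduce a suitable partition problem on digraphs to it. The key observation is that a single linear ordering $\alpha$ $\Pi_1$-satisfies a constraint $(a,b,c)$ exactly when $a$ precedes both $b$ and $c$. Associate with any set $\cC'$ of constraints the digraph $G(\cC')$ on vertex set $V$ having arcs $a\to b$ and $a\to c$ for every $(a,b,c)\in\cC'$. Then $\cC'$ is $\Pi_1$-satisfiable by one ordering if and only if $G(\cC')$ is acyclic: a topological ordering of an acyclic $G(\cC')$ satisfies every constraint, and conversely any satisfying $\alpha$ is a topological ordering of $G(\cC')$. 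Consequently, $\cI=(V,\cC)$ is a `yes'-instance of 2-$\Pi_1$ if and only if $\cC$ can be partitioned into $\cC_1\cup\cC_2$ with both $G(\cC_1)$ and $G(\cC_2)$ acyclic. Membership in NP is already given by Observation~\ref{obs:np}, so only NP-hardness remains.

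Given this reformulation, the natural source problem is: \emph{can the arcs of a digraph $D$ be partitioned into two acyclic subdigraphs?} I would reduce this to 2-$\Pi_1$ via a dump-vertex gadget that turns each single arc into a legal ternary $\Pi_1$-constraint without introducing spurious cycles. Concretely, for each arc $(u,v)$ of $D$ introduce a fresh variable $d_{uv}$ and the constraint $(u,v,d_{uv})$, taking as variable set $V(D)\cup\{d_{uv}:(u,v)\in A(D)\}$. The constraint $(u,v,d_{uv})$ contributes the arc $u\to v$ together with a pendant arc $u\to d_{uv}$ whose head has out-degree $0$ and hence never lies on a directed cycle, while distinct arcs use distinct dump variables so the gadgets cannot interfere. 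Thus a 2-partition of the constraints induces a 2-partition of $A(D)$ with identical acyclicity behaviour on the `real' arcs, and conversely; so $\cI$ is a `yes'-instance of 2-$\Pi_1$ if and only if $A(D)$ admits a partition into two acyclic subdigraphs. This reduction is clearly polynomial.

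The crux is therefore to establish that partitioning the arcs of a digraph into two acyclic subdigraphs is itself NP-complete, and this is where I expect the real difficulty and the need for the intricate, uniqueness-forcing gadgets alluded to in the introduction. The condition ``no monochromatic directed cycle'' is a not-all-equal constraint over the arcs of every directed cycle, which links the problem to the dichromatic number literature~\cite{bokal2004circular}; I would aim to encode the variables and clauses of an NP-hard problem such as a monotone not-all-equal SAT variant, using short digons to force pairs of arcs into opposite classes and carefully designed cyclic gadgets whose only acyclic 2-colourings encode a consistent truth assignment. The main obstacle will be designing these gadgets so that they admit essentially a unique legal 2-colouring and compose correctly, since an uncontrolled gadget could create an unintended monochromatic cycle (a false negative) or unintended colouring freedom (a false positive). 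Once such gadgets are in hand, combining them with the dump-vertex reduction above yields NP-completeness of 2-$\Pi_1$ and, via Observation~\ref{obs}, of 2-{\sc Caterpillar Compatibility}.
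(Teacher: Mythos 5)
Your reformulation is correct: a constraint $(a,b,c)$ is $\Pi_1$-satisfied by $\alpha$ exactly when $\alpha(a)<\min(\alpha(b),\alpha(c))$, a single order satisfies a constraint set iff the associated digraph is acyclic, and hence 2-$\Pi_1$ asks for a partition of the constraints into two sets with acyclic digraphs. The dump-vertex step is also sound. The problem is that everything you have actually proved is a reduction \emph{from} the problem ``partition the arcs of a digraph into two acyclic subdigraphs'' \emph{to} 2-$\Pi_1$; the NP-hardness of that source problem is exactly the crux, and you leave it as a plan (``I would aim to encode \dots carefully designed cyclic gadgets'') rather than a proof. It is not a problem you can simply cite: the dichromatic-number result you invoke concerns partitioning the \emph{vertices} into two acyclic-inducing classes, not the arcs, and the translation between the two (e.g.\ via line digraphs, or via forcing all constraints sharing a source to land in the same class) is itself a nontrivial gadget construction. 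As it stands the proposal establishes only that 2-$\Pi_1$ is at least as hard as an unproven intermediate problem, so the hardness claim is not yet established.

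For comparison, the paper's main proof avoids this entirely: it reduces from 2-$\Pi_0$ (already shown NP-complete via betweenness), attaching to each constraint $(v_1^i,v_2^i,v_3^i)$ two fresh variables and five $\Pi_1$-constraints whose interaction forces $(v_1^i,v_2^i,v_d^i)$ and $(v_2^i,v_3^i,v_e^i)$ into the \emph{same} linear order, which then yields $v_1^i<v_2^i<v_3^i$ there. The paper's appendix does follow a route close in spirit to yours --- through caterpillar compatibility, triplet digraphs and acyclicity --- but it reduces from \emph{vertex} 2-dicolorability restricted to out-degree at most 3, and the missing bridge in your sketch is supplied there by a pigeonhole argument: a vertex with three children is encoded by the three triplets $ab|v$, $ac|v$, $bc|v$, any two of which cover all three out-arcs of $v$, so a partition of the constraints induces a well-defined vertex colouring. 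If you want to pursue your route, that is the kind of gadget you would need to design and verify.
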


\begin{proof}
Let $\cI=(V,\cC)$ be an instance of 2-$\Pi_0$, where $\cC=\{C^1,C^2,\ldots,C^n\}$. Furthermore, for each $i\in\{1,2,\ldots,n\}$, let $C^i=(v_1^i,v_2^i,v_3^i)$, with $\{v_1^i,v_2^i,v_3^i\}\subseteq V$, and let $v_d^i$ and $v_e^i$ be two new variables,
not contained in $V$, corresponding to \blue{each constraint}.
% that \steven{is} not contained in $V$.
To show that the theorem holds, we reduce $\cI$ to an instance of 2-$\Pi_1$. Set $$V'=V\cup\{v_d^1,v_e^1,v_d^2,v_e^2,\ldots,v_d^n,v_e^n\},$$ and set $$\cC'=\bigcup_{C^i\in\cC}\{(v_1^i,v_2^i,v_d^i),(v_2^i,v_3^i,v_e^i),(v_e^i,v_1^i,v_2^i),(v_d^i,v_1^i,v_2^i),(v_1^i,v_e^i,v_d^i)\}.$$ Now, let $\cI'=(V',\cC')$ be an instance of 2-$\Pi_1$. Since $|V'|=|V|+2n$ and $|\cC'|=5|\cC|$, the reduction can clearly be carried out in polynomial time and has polynomial size. The remainder of the proof consists of establishing that $\cI$ is 2-$\Pi_0$-satisfiable if and only if $\cI'$ is 2-$\Pi_1$-satisfiable. 

First, suppose that $\cI$ is 2-$\Pi_0$-satisfiable. Let $\alpha$ and $\beta$ be two linear orderings of $V$ that satisfy $\cI$. Let $$W=\{v_d^i,v_e^i:C^i \textnormal{ is }\Pi_0\textnormal{-satisfied by }\alpha\}$$ and, similarly, let $$W'=\{v_d^i,v_e^i:C^i \textnormal{ is not }\Pi_0\textnormal{-satisfied by }\alpha\}.$$ Furthermore, let $\gamma$ be \blue{an arbitrary} linear ordering of $W$, and let $\gamma'$ be \blue{an arbitrary} linear ordering of $W'$. Then $\alpha'=\gamma'||\alpha||\gamma$ and $\beta'=\gamma||\beta||\gamma'$ are two linear orderings on $V'$. Now, for each $C^i=(v_1^i,v_2^i,v_3^i)$ that is $\Pi_0$-satisfied by $\alpha$ (resp. $\beta$), the three  constraints $(v_1^i,v_2^i,v_d^i)$, $(v_2^i,v_3^i,v_e^i)$, and  $(v_1^i,v_e^i,v_d^i)$ are $\Pi_1$-satisfied by $\alpha'$ (resp $\beta'$) while the two constraints $(v_e^i,v_1^i,v_2^i)$ and $(v_d^i,v_1^i,v_2^i)$ are $\Pi_1$-satisfied by $\beta'$ (resp. $\alpha'$). Hence $\cI'$ is 2-$\Pi_1$-satisfiable. 

Second, suppose that $\cI'$ is 2-$\Pi_1$-satisfiable. Let $\alpha'$ and $\beta'$ be two linear orderings of $V'$ that satisfy $\cI'$. Assume that, for some $i\in\{1,2,\ldots,n\}$, the two constraints $(v_1^i,v_2^i,v_d^i)$ and $(v_2^i,v_3^i,v_e^i)$ are not both $\Pi_1$-satisfied by exactly one of $\alpha'$ and $\beta'$. Then without loss of generality, we may assume that $(v_1^i,v_2^i,v_d^i)$ is $\Pi_1$-satisfied by $\alpha'$ and that $(v_2^i,v_3^i,v_e^i)$ is $\Pi_1$-satisfied by $\beta'$. Since $\beta'(v_2^i)<\beta'(v_e^i)$, it follows that  $(v_e^i,v_1,v_2^i)$ is $\Pi_1$-satisfied by $\alpha'$. Similarly, since $\alpha'(v_1^i)<\alpha'(v_d^i)$,  it follows that  $(v_d^i,v_1^i,v_2^i)$ is $\Pi_1$-satisfied by $\beta'$. Moreover, since $\alpha'(v_e^i)<\alpha'(v_1^i)$ and $\beta'(v_d^i)<\beta'(v_1^i)$, this implies that  neither $\alpha'$ nor by $\beta$  $\Pi_1$-satisfies  $(v_1^i,v_e^i,v_d^i)$. Thus, $(v_1^i,v_2^i,v_d^i)$ and $(v_2^i,v_3^i,v_e^i)$ are  both $\Pi_1$-satisfied by either $\alpha'$ or $\beta'$. 
Hence, we have $\alpha'(v_1^i)<\alpha'(v_2^i)<\alpha'(v_3^i)$ or $\beta'(v_1^i)<\beta'(v_2^i)<\beta'(v_3^i)$. It now follows that $\cI$ is 2-$\Pi_0$-satisfied by the two linear orderings \blue{$\alpha'$ restricted to $V$ and $\beta'$ restricted to $V$. The theorem now follows.}
\end{proof}

%%%%%%%%%%%%%%%%%%%%%%%%%%%%%%%%%%%%%%%%%%%%%%%%%%%%%%%%%%%%%%%%%%%%%%%%%%%%%%%%%%%
\begin{theorem}\label{t:2-pi_4}
The problem {\em 2-}$\Pi_4$ is NP-complete.
\end{theorem}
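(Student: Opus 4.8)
The plan is to prove membership in NP via Observation~\ref{obs:np} and NP-hardness by a polynomial reduction from 2-$\Pi_0$, which is NP-complete by Theorem~\ref{t:2-pi0}, reusing the overall template of the 2-$\Pi_1$ reduction (Theorem~\ref{t:2-pi1}) but with a gadget tailored to $\Pi_4=\{123,231\}$. The starting point I would exploit is a clean reformulation: a constraint $(a,b,c)$ is $\Pi_4$-satisfied by an order exactly when $b$ precedes $c$ \emph{and} $a$ does not lie strictly between $b$ and $c$. Thus a $\Pi_4$-constraint behaves like two primitives at once: in either admissible pattern the second coordinate precedes the third, so the constraint always enforces the binary relation ``$b$ before $c$'' in whichever of the two orders satisfies it, and its only further effect is to keep $a$ outside the interval spanned by $b$ and $c$. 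Consequently $\Pi_4$ can cheaply express ``$b$ before $c$'' (with a throwaway first coordinate), but, unlike $\Pi_0$, it cannot by itself pin down the position of the first coordinate, since the pattern $231$ (that is, $b<c<a$) is also admissible. The entire difficulty is to suppress this spurious $231$ pattern.

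For each constraint $C^i=(v_1^i,v_2^i,v_3^i)$ of the 2-$\Pi_0$ instance I would introduce a bounded number of fresh auxiliary variables together with a small gadget of $\Pi_4$-constraints whose role is to force that, in whichever of the two orders $\alpha'$, $\beta'$ the triple $v_1^i,v_2^i,v_3^i$ is ``used,'' the induced pattern is exactly $v_1^i<v_2^i<v_3^i$ rather than $v_2^i<v_3^i<v_1^i$. As in the 2-$\Pi_1$ proof, the forward direction should then be routine: given orders $\alpha,\beta$ solving the 2-$\Pi_0$ instance, I would partition the auxiliary variables attached to each $C^i$ into two blocks according to which of $\alpha,\beta$ satisfies $C^i$, append these blocks on the outside of $\alpha$ and $\beta$ using the concatenation operation $\alpha\,||\,\beta$, and verify that every gadget constraint is $\Pi_4$-satisfied by one of the two resulting orders on $V'$.

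The backward direction is where the real work lies, and it is the step I expect to be the main obstacle. One is handed an arbitrary solution $\alpha',\beta'$ of the 2-$\Pi_4$ instance and must argue that the gadget constraints leave no room to ``cheat'': the auxiliary variables cannot be placed so as to satisfy all gadget constraints while allowing the $231$ pattern on some $v_1^i,v_2^i,v_3^i$ in \emph{both} orders. Establishing this amounts to proving a local uniqueness property of the gadget --- that its constraints admit, up to the unavoidable symmetry of the problem, only the intended configurations --- which is precisely the kind of ``uniqueness gadget'' advertised in the introduction. The delicate point, absent from the $\Pi_0$ and $\Pi_1$ cases, is that the two permutations of $\Pi_4$ are cyclic rotations of one another rather than both left-anchored (indeed, if one could force $\beta'$ to be the reversal of $\alpha'$ the surviving constraints would express non-betweenness, which is already NP-complete), so the gadget must actively break this cyclic symmetry; combining this with the two-order semantics, in which any single constraint can always retreat to the other order, is what forces the gadget to be genuinely intricate. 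Once the uniqueness property is in hand, restricting $\alpha'$ and $\beta'$ to $V$ yields a pair of orders under which each $C^i$ is $\Pi_0$-satisfied by one of them, completing the equivalence and hence the reduction.
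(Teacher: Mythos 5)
Your proposal has a genuine gap: the entire argument hinges on a ``uniqueness gadget'' of $\Pi_4$-constraints that forces the pattern $v_1^i<v_2^i<v_3^i$ rather than $v_2^i<v_3^i<v_1^i$ in whichever order is used, but you never construct this gadget, and you yourself flag its uniqueness property as ``the main obstacle'' without resolving it. A hardness proof of this shape stands or falls on exhibiting the gadget and verifying (by hand or computationally, as the paper does for $\Pi_5$, $\Pi_6$ and $\Pi_9$) that it admits only the intended configurations; without that, nothing has been proved. So as written the proposal is a plan, not a proof.

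Ironically, your parenthetical remark --- that if one could force $\beta'=\bar{\alpha'}$ the surviving constraints would express non-betweenness --- contains essentially the whole of the paper's actual proof, which needs no gadget at all. The paper reduces from 1-$\Pi_9$ (non-betweenness): each constraint $(v_1,v_2,v_3)$ is replaced by the pair $(v_2,v_1,v_3)$ and $(v_2,v_3,v_1)$ on the \emph{same} variable set. Forward, a single order $\alpha$ solving the $\Pi_9$ instance yields the solution $\{\alpha,\bar\alpha\}$, since the four admissible patterns of the two $\Pi_4$-constraints under $\alpha$ and $\bar\alpha$ are exactly the four non-betweenness patterns. Backward, one does not need to force $\beta'=\bar{\alpha'}$: the four patterns satisfying either constraint of the pair are pairwise disjoint, so any single order satisfying at least one of the pair already places $v_2$ outside the interval spanned by $v_1$ and $v_3$, hence $\alpha'$ alone $\Pi_9$-satisfies every original constraint. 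This sidesteps the cyclic-symmetry difficulty you correctly identified, by choosing a source problem for which that symmetry is harmless rather than trying to break it with gadgetry.
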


\begin{proof}
To establish the result, we use a polynomial-time reduction from 1-$\Pi_9$. Let $\cI=(V,\cC)$ be an instance of 1-$\Pi_9$. Let $\cC'$ be the set of ternary constraints in which each constraint in $\cC$ is represented by two constraints. In particular, set $$\cC'=\bigcup_{(v_1,v_2,v_3)\in\cC}\{(v_2,v_1,v_3),(v_2,v_3,v_1)\}.$$ Now, let $\cI'=(V,\cC')$ be an instance of 2-$\Pi_4$. As $|\cC'|=2|\cC|$, we have that $\cI'$ has polynomial size and that the reduction can be carried out in polynomial time. We now claim that $\cI$ is 1-$\Pi_9$-satisfiable if and only if $\cI'$ is 2-$\Pi_4$-satisfiable.

Suppose that $\cI$ is 1-$\Pi_9$-satisfiable. Let $\alpha$ be a linear ordering of $V$ that satisfies $\cI$. It follows that, for each constraint $(v_1,v_2,v_3)$ in $\cC$, either $\alpha(v_2)<\alpha(v_i)<\alpha(v_j)$, or $\alpha(v_{i'})<\alpha(v_{j'})<\alpha(v_2)$ with $\{i,j\}=\{i',j'\}=\{1,3\}$. Moreover, in both cases, it is easily checked that exactly one of $(v_2,v_1,v_3)$ and $(v_2,v_3,v_1)$ is $\Pi_4$-satisfied by $\alpha$ while the other constraint is  $\Pi_4$-satisfied by $\bar{\alpha}$. Hence, $\alpha$ and $\bar{\alpha}$ are a solution to $\cI'$ and, so, $\cI'$ is 2-$\Pi_4$-satisfiable.

Now, suppose that $\cI'$ is 2-$\Pi_4$-satisfiable. Let $\alpha$ and $\beta$ be two linear orderings of $V$ that satisfy $\cI'$, and let $(v_1,v_2,v_3)$ be an element of $\cC$. Then exactly one element in $\{(v_2,v_1,v_3),(v_2,v_3,v_1)\}$ is $\Pi_4$-satisfied by $\alpha$. In particular, this implies that exactly one of the following holds:
\begin{itemize}
\item [(i)] $\alpha(v_2)<\alpha(v_1)<\alpha(v_3)$,
\item [(ii)] $\alpha(v_1)<\alpha(v_3)<\alpha(v_2)$,
\item [(iii)] $\alpha(v_2)<\alpha(v_3)<\alpha(v_1)$, or
\item [(iv)] $\alpha(v_3)<\alpha(v_1)<\alpha(v_2)$.
\end{itemize}
Regardless of which of (i)-(iv) holds, $(v_1,v_2,v_3)$ is $\Pi_9$-satisfied by $\alpha$. Hence, $\alpha$ is a solution to $\cI$. The theorem now follows.
\end{proof}

%%%%%%%%%%%%%%%%%%%%%%%%%%%%%%%%%%%%%%%%%%%%%%%%%%%%%%%%%%%%%%%%%%%%%%%%%%%%%%%%%%%

\begin{lemma}\label{l:2-Pi_5-caterpillar}
\blue{Let $V=\{1,2,3,4,5\}$, and let $\gamma=(1,2,3,4,5)$ and $\gamma'=(5,2,3,4,1)$ be two linear orderings of $V$. Then the instance $\cI=(V,\cC_{\Pi_5}(\gamma)\cup\cC_{\Pi_5}(\gamma'))$ of {\em 2-}$\Pi_5$ has a unique solution (up to reversal). In particular, each solution of $\cI$ consists of \steven{one element from $\{ \gamma, \bar \gamma\}$ and
one element from $\{ \gamma', \bar \gamma' \}$.}}
\end{lemma}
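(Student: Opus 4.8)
The plan is to establish existence of the claimed solution first, and then to show no other solution exists. For existence, note that $\Pi_5$-satisfaction of a constraint $(a,b,c)$ simply says that $b$ lies \emph{between} $a$ and $c$, a condition invariant under reversing the order. Hence $\gamma$ satisfies every constraint in $\cC_{\Pi_5}(\gamma)$ and $\gamma'$ satisfies every constraint in $\cC_{\Pi_5}(\gamma')$, so $\{\gamma,\gamma'\}$ is a solution; by reversal-invariance so is every pair consisting of one element of $\{\gamma,\bar{\gamma}\}$ and one of $\{\gamma',\bar{\gamma'}\}$.

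For uniqueness I would work entirely in the language of betweenness, recording for each $3$-subset $S$ and each order $\sigma$ the unique \emph{middle} element of $S$ in $\sigma$. Reading off $\gamma$ and $\gamma'$, the ten $3$-subsets split into four on which they prescribe the \emph{same} middle (namely $\{1,2,5\},\{1,3,5\},\{1,4,5\}$ with middles $2,3,4$, and $\{2,3,4\}$ with middle $3$) and six \emph{conflicting} subsets on which they prescribe different middles. The key observation is that, on a conflicting subset $S$ with $\gamma$-middle $m$ and $\gamma'$-middle $m'$, the instance contains the two betweenness requirements ``$m$ central in $S$'' and ``$m'$ central in $S$''; since a single order has only one middle for $S$, any solution $\{\alpha,\beta\}$ must realise $\{m,m'\}$ as $\{\,\text{middle of }S\text{ in }\alpha,\ \text{middle of }S\text{ in }\beta\,\}$, and in particular the third element of $S$ is central in \emph{neither} order.

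Applying this to the conflicting subsets containing $1$, namely $\{1,2,3\},\{1,2,4\},\{1,3,4\}$, shows that $1$ is never between two elements of $\{2,3,4\}$ in either order, so $1$ lies outside the span of $\{2,3,4\}$ (not strictly between two of them) in both $\alpha$ and $\beta$; symmetrically, the conflicting subsets containing $5$ place $5$ outside that span as well. I would then rule out the possibility that $1$ and $5$ lie on the \emph{same} side of $\{2,3,4\}$ in one of the orders, say $\alpha$: in that case each of $\{1,2,5\},\{1,3,5\},\{1,4,5\}$ would have its middle in $\{1,5\}$ under $\alpha$, forcing $\beta$ (via the agreeing constraints) to place all of $2,3,4$ between $1$ and $5$; but then $\beta$ has $1,5$ at its two ends, so it assigns \emph{different} middles to $\{1,2,4\}$ and $\{2,4,5\}$, whereas the same-side assumption makes $\alpha$ assign the \emph{same} middle to both, contradicting that on each of these conflicting subsets $\alpha$ and $\beta$ must disagree. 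Hence in both orders $1$ and $5$ occupy the two extreme positions and $\{2,3,4\}$ fills the middle three.

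To finish, I would observe that an order of this ``endpoints $1,5$, block $\{2,3,4\}$'' shape is determined up to reversal by the three elements it places in the middle of $\{1,2,3\},\{1,2,4\},\{1,3,4\}$, giving exactly six possibilities. The three conflicting subsets containing $1$ force $\alpha$ and $\beta$ to disagree on all three of these middles, and the agreeing subset $\{2,3,4\}$ forces $3$ to be central in at least one of the two orders; a short enumeration of the six possibilities then leaves $\{\gamma,\gamma'\}$ (up to reversal) as the only compatible pair, as claimed. I expect the main obstacle to be precisely that each constraint need only be met by one of the two orders, so one cannot simply invoke the standard fact that betweenness data determines an order up to reversal; the argument instead leans on the six conflicting subsets to force the two orders into complementary roles, and the most delicate single step is excluding the ``same side'' configuration, which does not follow from the extremeness of $1$ and $5$ alone.
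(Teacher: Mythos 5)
Your argument is correct, but it takes a genuinely different route from the paper: the paper proves this lemma purely computationally (a brute-force search in Java whose output is then re-verified by the MiniZinc constraint solver, as described in the appendix), whereas you give a human-readable combinatorial proof. Your reformulation of $\Pi_5$-satisfaction as ``middle element of each $3$-subset'' is exactly the right language, and the three key steps all check out: (i) on each of the six conflicting subsets the two prescribed middles must be realised by \emph{different} orders, so the third element is central in neither, which pins $1$ and $5$ outside the span of $\{2,3,4\}$ in both orders; (ii) the same-side configuration is excluded because it forces $\alpha$ to give $\{1,2,4\}$ and $\{2,4,5\}$ the \emph{same} middle while the induced shape of $\beta$ gives them \emph{different} middles, which is incompatible with both subsets being conflicting; and (iii) the conflicting subsets through $1$ force the middle blocks of $\alpha$ and $\beta$ to be reversals of one another, so they share the same central element, which the agreeing subset $\{2,3,4\}$ forces to be $3$, leaving only $\{\gamma,\bar{\gamma'}\}$ up to reversal. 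What each approach buys: the paper's computational method is mechanical, is trivially re-targeted to the analogous uniqueness gadgets for $\Pi_6$, $\Pi_9$ and the three-tree variant, and doubles as a search procedure for \emph{finding} such gadgets in the first place; your proof gives genuine insight into \emph{why} this particular pair $(\gamma,\gamma')$ is rigid (the interplay between the six conflicting and four agreeing triples) and removes any reliance on software, but it would have to be redesigned from scratch for each of the other gadgets in the paper.
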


\begin{proof}
\steven{Computational proof (see appendix).}
\end{proof}

\begin{theorem}
The problem {\em 2-}$\Pi_5$ is NP-complete.
\end{theorem}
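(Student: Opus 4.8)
The plan is to prove membership in NP via Observation~\ref{obs:np} and then to establish NP-hardness by a polynomial-time reduction from 1-$\Pi_5$ (betweenness), which is NP-complete. The feature that makes the two-ordering version subtle is that betweenness is invariant under reversal: a constraint $(a,b,c)$ is $\Pi_5$-satisfied by a linear ordering precisely when it is $\Pi_5$-satisfied by its reversal, since both merely assert that $b$ lies strictly between $a$ and $c$. Hence, unlike for $\Pi_0$, a single ordering $\alpha$ together with $\bar\alpha$ has no more satisfying power than $\alpha$ alone; yet two \emph{genuinely different} orderings can satisfy strictly more betweenness constraints than one. The whole point of the reduction is to build an instance in which this extra power is neutralised, so that satisfiability by two orderings collapses back to satisfiability by a single one.

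Given a betweenness instance $\cI=(V,\cC)$, I would construct a 2-$\Pi_5$ instance $\cI'=(V',\cC')$ with $V\subseteq V'$ as follows. First, copy $\cC$ into $\cC'$ unchanged, so that all original betweenness constraints must still be met. Second, introduce the five variables of Lemma~\ref{l:2-Pi_5-caterpillar} as a rigid \emph{anchor} and add the constraint set $\cC_{\Pi_5}(\gamma)\cup\cC_{\Pi_5}(\gamma')$ on them; by the lemma, in every solution the two orderings restricted to the anchor are, up to global reversal, exactly $\gamma=(1,2,3,4,5)$ and $\gamma'=(5,2,3,4,1)$. This anchor acts as a fixed coordinate frame: its middle symbols $2,3,4$ occupy the central positions in the same relative order in both orderings, giving a stable reference against which other variables can be located. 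Third, I would attach each $v\in V$ to the anchor, and to small auxiliary uniqueness gadgets as needed, by betweenness constraints whose job is to force the two orderings to be \emph{coordinated} on $V$---that is, to guarantee that any constraint of $\cC$ witnessed by the second ordering is already witnessed by the first. With only a polynomial number of additional variables and constraints, the reduction runs in polynomial time.

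For the forward direction, from a betweenness ordering $\alpha$ of $\cI$ I would exhibit a pair $(\alpha',\beta')$ extending $\alpha$ consistently to the anchor and auxiliary variables, using a configuration of Lemma~\ref{l:2-Pi_5-caterpillar} compatible with the attachment gadgets, and verify that every constraint of $\cC'$ is satisfied. For the reverse direction, I would take an arbitrary solution $(\alpha',\beta')$ of $\cI'$, invoke the uniqueness part of the lemma on the anchor to pin down the coordinate frame, and then use the attachment gadgets to argue that $\alpha'$ restricted to $V$ already $\Pi_5$-satisfies every constraint of $\cC$, yielding a solution of $\cI$.

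The hard part will be the middle step. The anchor is determined only up to reversal, so each rigid gadget carries a reversal ambiguity: each of its two orderings may be read forwards or backwards, and the symbols playing the roles of $1$ and $5$ may or may not swap between the two orderings. I expect the bulk of the work to lie in designing the attachment constraints so that none of these local ambiguities can be combined into a global solution that exploits the freedom of two orderings to satisfy $\cC$ in a way unattainable by a single ordering, all while keeping $|V'|$ and $|\cC'|$ polynomial. Establishing this neutralisation---rather than the two satisfiability implications, which should become routine once the gadgets are in place---is where the genuine content resides, consistent with the intricacy of the hardness arguments flagged in the introduction.
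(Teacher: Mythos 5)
Your scaffolding matches the paper's: membership in NP, a reduction from 1-$\Pi_5$, and the rigid five-element anchor of Lemma~\ref{l:2-Pi_5-caterpillar} used as a coordinate frame that is fixed up to reversal. However, there is a genuine gap at exactly the point you flag as ``the middle step,'' and the specific plan you commit to before that point steers away from what actually works. You propose to copy $\cC$ into $\cC'$ unchanged and then add attachment constraints that ``coordinate'' the two orderings on $V$, so that anything $\beta'$ satisfies is already satisfied by $\alpha'$. The difficulty is that betweenness constraints pinning each $v\in V$ relative to the anchor can only confine $V$ to an interval of $\beta'$; within that interval $\beta'$ may still order $V$ arbitrarily and hence $\Pi_5$-satisfy original constraints that $\alpha'$ misses. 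There is no evident way to express ``$\alpha'$ and $\beta'$ induce the same order on $V$'' as a set of ternary constraints each of which need only be satisfied by \emph{one} of the two orders, and forcing $\beta'|_V$ to be some fixed order would itself let $\beta'$ satisfy an uncontrolled subset of $\cC$ for free, breaking the reverse direction.

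The paper's proof does not copy $\cC$ and does not coordinate the two orders. Instead, for each constraint $C^i=(v_1^i,v_2^i,v_3^i)$ it introduces two fresh dummies $v_d^i,v_e^i$ and replaces $C^i$ by the three constraints $(v_1^i,v_d^i,v_3^i)$, $(v_1^i,v_e^i,v_3^i)$, $(v_d^i,v_2^i,v_e^i)$; separately, anchor constraints force every original variable into the interval between $3$ and $4$ of $\beta'$ and every dummy into the interval between $2$ and $3$ of $\beta'$. Because each encoded constraint interleaves a dummy with original variables, none of them can be $\Pi_5$-satisfied by $\beta'$, so all three must be satisfied by $\alpha'$; and jointly they force $v_2^i$ to lie between $v_1^i$ and $v_3^i$ in $\alpha'$ (since $v_d^i$ and $v_e^i$ both lie between $v_1^i$ and $v_3^i$, and $v_2^i$ lies between $v_d^i$ and $v_e^i$). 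The second order is thus rendered \emph{useless} for the working constraints rather than coordinated with the first. This per-constraint dummy encoding is the essential idea your proposal is missing, and without it (or an equivalent device) the reduction as you have set it up does not go through.
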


\begin{proof}
\blue{Throughout the proof, let $\gamma=(1,2,3,4,5)$ and $\gamma'=(5,2,3,4,1)$ be two linear orderings of $\{1,2,3,4,5\}$.}
Let $\cI=(V,\cC)$ be an instance of 1-$\Pi_5$, where $\cC=\{C^1,C^2,\ldots,C^n\}$. Furthermore, for each $i\in\{1,2,\ldots,n\}$, let $C^i=(v_1^i,v_2^i,v_3^i)$, with $\{v_1^i,v_2^i,v_3^i\}\subseteq V$, and let $v_d^i$ and $v_e^i$ be two new variables,
not contained in $V$, \blue{for each constraint.} To show that the theorem holds, we reduce $\cI$ to an instance of 2-$\Pi_5$. Let $D=\{v_d^1,v_e^1,v_d^2,v_e^2,\ldots,v_d^n,v_e^n\}$, and let $$V'=V\cup D\cup\{1,2,3,4,5\},$$ \blue{where $\{1,2,3,4,5\}$ neither intersects with $D$ nor $V$.} Furthermore, we define the following four new sets of constraints.
\begin{itemize}
\item [(i)] Let \blue{$\cC_1=\cC_{\Pi_5}(\gamma)\cup\cC_{\Pi_5}(\gamma')$.}
\item [(ii)] Let $\cC_2=\bigcup_{C^i\in\cC}\{(v_1^i,v_d^i,v_3^i),(v_1^i,v_e^i,v_3^i),(v_d^i,v_2^i,v_e^i)\}$.
\item [(iii)] Let $\cC_3=\bigcup_{v_j\in V}\{(3,v_j,4),(4,v_j,5),(1,v_j,2),(1,v_j,3)\}$.
\item [(iv)] Let $\cC_4=\bigcup_{i\in \{1,2,\ldots,n\}}\{(2,v_d^i,3),(2,v_e^i,3),(1,v_d^i,2),(1,v_e^i,2),(4,v_d^i,5),(4,v_e^i,5),(3,v_d^i,5),(3,v_e^i,5)\}$.
\end{itemize}
Now, let $\cC'=\cC_1\cup\cC_2\cup\cC_3\cup\cC_4$, and let $\cI'=(V',\cC')$ be an instance of 2-$\Pi_5$. Since $\cC_1$ contains a constant number of constraints it is easily checked that $\cI'$ has size polynomial in $|V|$ and $n$ and, so, the reduction can be carried out in polynomial time. We now claim that $\cI$ is 1-$\Pi_5$-satisfiable if and only if $\cI'$ is 2-$\Pi_5$-satisfiable.

First, suppose that $\cI$ is 1-$\Pi_5$-satisfiable. Let $\alpha$ be a linear ordering of $V$ that satisfies $\cI$. Let $\delta$ be a linear ordering of $V\cup D$ such that $$\delta(v_1^i)<\delta(v_d^i)<\delta(v_2^i)<\delta(v_e^i)<\delta(v_3^i) \textnormal{ or }\delta(v_3^i)<\delta(v_e^i)<\delta(v_2^i)<\delta(v_d^i)<\delta(v_1^i)$$ for each $i\in\{1,2,\dots,n\}$. Since $\alpha$ is a solution to $\cI$, note that $\delta$ exists.  

Now, let $$\alpha'=(1,2,3,4,5)||\delta,$$ and let  \blue{$$\beta'=(5,2)||\delta-V||(3)||\alpha||(4,1)$$} be two linear orderings of $V'$. We next argue that each constraint in $\cC'$ is $\Pi_5$-satisfied by $\alpha'$ or $\beta'$. Since $\alpha'$ preserves $\gamma$ and since $\beta'$ preserves $\gamma'$, it follows that each constraint in $\cC_1$ is $\Pi_5$-satisfied by $\alpha'$ or $\beta'$. Furthermore, for each $C^i\in \cC$, the three corresponding constraints in $\cC_2$ are, by construction, $\Pi_5$-satisfied by $\alpha'$. Turning to the constraints in $\cC_3$, we observe that \blue{
$\max(\beta'(2),\beta'(3),\beta'(5))<\beta'(v_j)$ and $\beta'(v_j)<\min(\beta'(1),\beta'(4))$ and, hence,} all four constraints that correspond to $v_j$ in $\cC_3$ are $\Pi_5$-satisfied by $\beta'$. \blue{Similarly, for $k\in\{d,e\}$, a straightforward check shows that} all eight constraints that correspond to $v_k^i$ in $\cC_4$ are $\Pi_5$-satisfied by $\beta'$. Now, as each constraint in $\cC'$ is $\Pi_5$-satisfied by $\alpha'$ or $\beta'$, we deduce that $\cI'$ is 2-$\Pi_5$-satisfiable.

Second, suppose that $\cI'$ is 2-$\Pi_5$-satisfiable. Let $\alpha'$ and $\beta'$ be two linear orderings of $V'$ that satisfy $\cI'$. Note that, by Lemma~\ref{l:2-Pi_5-caterpillar}, each solution to the instance $(\{1,2,3,4,5\},\cC_1)$ of 2-$\Pi_5$ consists of
\steven{one element from $\{ \gamma, \bar \gamma\}$ and
one element from $\{ \gamma', \bar \gamma' \}$.}
Assume \steven{for the time being} that
$\alpha'$ preserves $\gamma$ and that $\beta'$ preserves $\gamma'$. Now assume that $(3,v_j,4)$ is $\Pi_5$-satisfied by $\alpha'$ for some $v_j\in V$. Then each constraint in $\{(4,v_j,5),(1,v_j,2),(1,v_j,3)\}$ \blue{and, hence, $(3,v_j,4)$} is $\Pi_5$-satisfied by $\beta'$. \blue{On the other hand, assume that $(3,v_j,4)$ is not $\Pi_5$-satisfied by $\alpha'$ for some $v_j\in V$. Then, $(3,v_j,4)$ is $\Pi_5$-satisfied by $\beta'$. Thus, regardless of whether $(3,v_j,4)$ is $\Pi_5$-satisfied by $\alpha'$ or not, we have $\beta'(3)<\beta'(v_j)<\beta'(4)$}. Similarly, assume that $(2,v_k^i,3)$ is $\Pi_5$-satisfied by $\alpha'$ for some $k\in\{d,e\}$ and $i \in \{1,2,\ldots,n\}$. Then each constraint in $\{(1,v_k^i,2),(4,v_k^i,5),(3,v_k^i,5)\}$ \blue{and, hence, $(2,v_k^i,3)$} is $\Pi_5$-satisfied by $\beta'$. \blue{On the other hand, assume that $(2,v_k^i,3)$ is not $\Pi_5$-satisfied by $\alpha'$ for some $v_k^i\in V$. Then, $(2,v_k^i,3)$ is $\Pi_5$-satisfied by $\beta'$. Thus, regardless of whether $(2,v_k^i,
3)$ is $\Pi_5$-satisfied by $\alpha'$ or not, we have $\beta'(2)<\beta'(v_k^i)<\beta'(3)$}. 
In summary, it follows that each constraint in \blue{$\cC_3\cup\cC_4$} is $\Pi_5$-satisfied by $\beta'$. \blue{Now, since $\beta'(v_k^i)<\beta'(3)<\beta'(v_j)$ for each $i\in\{1,2,\ldots,n\}$, $k\in\{d,e\}$, and $v_j\in V$, it follows that,}
for each $C^i\in \cC$, the three constraints in $\{(v_1^i,v_d^i,v_3^i),(v_1^i,v_e^i,v_3^i),(v_d^i,v_2^i,v_e^i)\}$ are $\Pi_5$-satisfied by $\alpha'$. It is now straightforward to check that $\alpha'(v_1^i)<\alpha'(v_2^i)<\alpha'(v_3^i)$ or $\alpha'(v_3^i)<\alpha'(v_2^i)<\alpha'(v_1^i)$. Hence, $\alpha'-(\{1,2,3,4,5\}\cup D)$ is a linear ordering of $V$ that 1-$\Pi_5$-satisfies each constraint in $\cC$ and, therefore, we have that $\cI$ is 1-$\Pi_5$-satisfiable. \steven{We complete the proof of the converse by noting that 
symmetrical arguments can be used to show that $\cI$ is 1-$\Pi_5$-satisfiable if
$\alpha'$ preserves $\bar \gamma$ (rather than $\gamma$) and/or $\beta'$ preserves
$\bar \gamma'$ (rather than $\gamma'$.)}
The theorem now follows by combining both cases.
\end{proof}

\begin{lemma}\label{l:2-Pi_6-caterpillar}
Let $V=\{1,2,3,4\}$, and let $\gamma=(1,2,3,4)$ and $\gamma'=(2,4,1,3)$ be two linear orderings of $V$. Then the instance $\cI=(V,\cC_{\Pi_6}(\gamma)\cup\cC_{\Pi_6}(\gamma'))$ of {\em 2-}$\Pi_6$ has a unique solution. In particular, $\gamma$ and $\gamma'$ are a solution of $\cI$.
\end{lemma}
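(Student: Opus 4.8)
The forward direction is immediate: by definition $\gamma$ $\Pi_6$-satisfies every constraint in $\cC_{\Pi_6}(\gamma)$ and $\gamma'$ every constraint in $\cC_{\Pi_6}(\gamma')$, so $\{\gamma,\gamma'\}$ is a solution. The whole content of the lemma is uniqueness, and my plan is to exploit the fact that $|V|=4$ so that everything reduces to a finite, and in fact short, case analysis over the four $3$-element subsets of $V$.

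The key observation is that whether a constraint $(v_1,v_2,v_3)$ is $\Pi_6$-satisfied by a linear ordering depends only on the restriction of that ordering to $\{v_1,v_2,v_3\}$, and that $(v_1,v_2,v_3)$ is satisfied precisely when $v_1$ is smallest of the three, or $v_1$ is largest and $v_2$ precedes $v_3$. Consequently, for any solution $\{\alpha,\beta\}$ and any $3$-subset $T\subseteq V$, the unordered pair $\{\alpha-(V-T),\beta-(V-T)\}$ of induced orderings must jointly $\Pi_6$-satisfy every constraint of $\cC$ whose variables lie in $T$; I will call such a pair a \emph{covering pair} for $T$. Since each of the six orderings of a $3$-set satisfies exactly three of the six possible constraints on that set, for each $T$ one can list by hand the (few) covering pairs of the target set obtained by restricting $\cC_{\Pi_6}(\gamma)\cup\cC_{\Pi_6}(\gamma')$ to $T$.

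First I would analyse the subset $\{1,2,3\}$, which is the decisive lever. There the orderings induced by $\gamma$ and $\gamma'$ are $1,2,3$ and $2,1,3$, and their satisfied-constraint sets are disjoint and together exhaust all six constraints on $\{1,2,3\}$; hence $\cC$ restricted to $\{1,2,3\}$ contains all six. A covering pair for a six-element target must therefore consist of two orderings with complementary satisfied-sets, and there are exactly three such complementary pairs. This pins $\{\alpha-(V-\{1,2,3\}),\beta-(V-\{1,2,3\})\}$ down to three cases (in particular $\alpha$ and $\beta$ cannot agree on $\{1,2,3\}$). In each case I would extend the two induced orderings to $V$ by inserting the element $4$ in each of the four possible positions, and then test the covering-pair condition on the remaining subsets $\{1,2,4\}$, $\{1,3,4\}$ and $\{2,3,4\}$, for which the admissible covering pairs are again short explicit lists. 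Propagating these conditions eliminates every combination except $\alpha=\gamma$, $\beta=\gamma'$; the two \emph{wrong} cases for $\{1,2,3\}$ already die at the subset $\{1,3,4\}$, respectively $\{2,3,4\}$.

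The main obstacle is purely organisational: the covering pairs are \emph{not} unique on $\{1,2,4\},\{1,3,4\},\{2,3,4\}$ (they are forced only on $\{1,2,3\}$), so uniqueness is a genuinely global phenomenon, and the work lies in showing that no two of the per-subset choices can be glued into a single consistent pair of global orderings other than $\{\gamma,\gamma'\}$. I would keep the bookkeeping manageable by tabulating, for each of the eight candidate extensions, its restrictions to the three subsets and intersecting with the admissible lists. Two points deserve emphasis. First, unlike the $\Pi_5$ gadget of Lemma~\ref{l:2-Pi_5-caterpillar}, the solution here is unique \emph{exactly}, not merely up to reversal, because $\Pi_6$ is not closed under reversing an ordering; the case analysis confirms this. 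Second, since the statement is a finite assertion about the $24$ linear orderings of $\{1,2,3,4\}$, it can alternatively be certified by exhaustive computer search over all pairs, exactly as was done for Lemma~\ref{l:2-Pi_5-caterpillar}; the structured argument above is what I would present, with the computation as a cross-check.
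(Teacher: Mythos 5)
Your proposal is correct in substance but takes a genuinely different route from the paper: the paper's proof of this lemma is purely computational (a brute-force search over all pairs of linear orderings, cross-checked with a constraint solver, as described in the appendix), whereas you outline a structured hand argument via restrictions to the four $3$-element subsets of $V$. Your key ingredients check out: the characterisation of $\Pi_6$-satisfaction ($v_1$ smallest, or $v_1$ largest with $v_2$ preceding $v_3$), the fact that each ordering of a $3$-set satisfies exactly three of the six constraints, the observation that $\cC$ restricted to $\{1,2,3\}$ is all six constraints so the induced pair must be one of the three complementary pairs $\{123,213\}$, $\{132,312\}$, $\{231,321\}$, and the exactness of uniqueness (no reversal symmetry, since $\Pi_6$ is not closed under reversal --- the paper's appendix confirms no symmetries are eliminated for $\Pi_6$). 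What your approach buys is a human-verifiable certificate in place of an appeal to code; what the paper's approach buys is brevity and uniform treatment across all the uniqueness gadgets. One detail in your outline is inaccurate: the two wrong cases for $\{1,2,3\}$ do \emph{not} die at a single further subset. For instance, in the case $\{132,312\}$ the target on $\{1,3,4\}$ (which consists of five constraints, all but $(3,1,4)$) still admits the covering pair $\{1<3<4,\,3<1<4\}$ compatible with the induced $\{1,3\}$-orders, and the case is only killed by combining the forced restrictions on $\{1,3,4\}$ and $\{1,2,4\}$ (one forces $1<4$ in both orderings, the other forces $4<1$ in one of them); the case $\{231,321\}$ similarly survives each single additional subset and dies only on combining three of them. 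So the elimination is, as you yourself anticipate, a genuinely global gluing argument, and the bookkeeping is somewhat heavier than your closing summary suggests; but the method does terminate in the unique solution $\{\gamma,\gamma'\}$, so the plan is sound.
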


\begin{proof}
\steven{Computational proof (see appendix).}
\end{proof}

\begin{theorem}
The problem {\em 2-}$\Pi_6$ is NP-complete.
\end{theorem}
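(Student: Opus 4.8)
The plan is to reduce from the single-ordering problem 1-$\Pi_6$, which is NP-complete by Table~\ref{tab:summary}; membership of 2-$\Pi_6$ in NP is already supplied by Observation~\ref{obs:np}. The guiding idea, exactly as in the proof for 2-$\Pi_5$, is to append a small ``scaffold'' on fresh variables $\{1,2,3,4\}$ whose sole role is to pin down the two orderings in any solution, and then to bolt the real instance onto this scaffold so that one of the two orderings becomes useless for the original constraints, forcing the other ordering to solve 1-$\Pi_6$ on its own.

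Concretely, given an instance $\cI=(V,\cC)$ of 1-$\Pi_6$ I would build $\cI'=(V',\cC')$ with $V'=V\cup D\cup\{1,2,3,4\}$, where $D$ contains a constant number of auxiliary variables for each constraint of $\cC$, and with $\cC'$ partitioned into four groups. The first is the scaffold $\cC_{\Pi_6}(\gamma)\cup\cC_{\Pi_6}(\gamma')$ taken verbatim from Lemma~\ref{l:2-Pi_6-caterpillar}; by that lemma any solution $\{\alpha',\beta'\}$ of $\cI'$ restricts on $\{1,2,3,4\}$ to exactly $\{\gamma,\gamma'\}$, so I may assume $\alpha'$ preserves $\gamma=(1,2,3,4)$ and $\beta'$ preserves $\gamma'=(2,4,1,3)$. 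Crucially, because $\Pi_6$ is not invariant under reversal, this lemma fixes both orderings outright, with none of the ``up to reversal'' case analysis that complicated the 2-$\Pi_5$ argument. The remaining three groups then re-encode each constraint of $\cC$ locally using its private auxiliary variables, and ``anchor'' the original and auxiliary variables into prescribed gaps of the scaffold, exploiting the fact that $\gamma'$ induces the slot order $2<4<1<3$ in $\beta'$ while $\gamma$ induces the more permissive order $1<2<3<4$ in $\alpha'$.

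For correctness, the forward direction starts from a single ordering $\alpha$ solving $\cI$, places the relevant variables into the appropriate scaffold gaps to build $\alpha'$ (which satisfies all re-encoded constraints), and uses $\beta'$ to absorb every anchoring constraint; one then checks, group by group, that each constraint of $\cC'$ is $\Pi_6$-satisfied by $\alpha'$ or by $\beta'$. The backward direction is the delicate one: after fixing $\alpha'$ and $\beta'$ via Lemma~\ref{l:2-Pi_6-caterpillar}, the anchoring must force every variable of $V\cup D$ into specific regions of $\beta'$, and here I would exploit the key asymmetry of $\Pi_6$, namely that a triple is \emph{forbidden} precisely when its first coordinate sits in the middle (patterns $213$ and $312$) or is last above a descending pair (pattern $321$). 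If the anchoring routes $v_1^i$ into a middle slot of $\beta'$ with its two partners forced onto opposite sides, the encoded constraint is forbidden under $\beta'$ no matter what the intra-slot order is, so $\alpha'$ must satisfy it; restricting $\alpha'$ to $V$ then yields a 1-$\Pi_6$ solution.

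The main obstacle is designing the re-encoding and the anchoring groups simultaneously so that both directions go through. Unlike the symmetric betweenness relation behind 2-$\Pi_5$ --- where a single ``element to the left of both endpoints'' obstruction disables a constraint --- the asymmetry of $\Pi_6$ means I cannot neutralise a constraint in $\beta'$ merely by clustering its variables into one region; I must send the three coordinates of each constraint to three different scaffold gaps, with the first coordinate in the middle, while still leaving $\alpha'$ free enough to realise every satisfying pattern of $\Pi_6$. Arranging the slot order $2<4<1<3$ of $\gamma'$ so that it produces the required ``middle with partners on both sides'' configuration for every constraint, without opening up spurious ways for $\beta'$ to satisfy an encoded constraint, is the crux; the verification of the scaffold's uniqueness itself is deferred to the computational check underlying Lemma~\ref{l:2-Pi_6-caterpillar}.
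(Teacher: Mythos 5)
Your proposal is a plan rather than a proof: the object that would actually carry the argument --- the concrete re-encoding of each constraint and the anchoring constraints $\cC_3,\cC_4$ --- is never written down, and you yourself flag its design as ``the crux''. That crux is a genuine obstacle, not a routine verification. In the 2-$\Pi_5$ reduction the neutralisation of $\beta'$ works because betweenness is killed by a purely \emph{positional} fact (the middle element of each re-encoded constraint lives in a different scaffold gap from both endpoints), and because all variables of $V$ can share a single gap. For $\Pi_6$ this collapses: a constraint $(a,b,c)$ is unsatisfied by $\beta'$ only if $\beta'$ induces $213$, $312$ or $321$ on it, so you must control the \emph{relative order of $a$, $b$, $c$ themselves} in $\beta'$, with the first coordinate in the middle. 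But each variable of $V$ occurs as first coordinate in some constraints and as second or third in others, so no fixed assignment of the variables of $V$ to scaffold gaps can realise ``first coordinate in the middle'' for every constraint simultaneously. Escaping this forces per-occurrence copies of variables plus equality gadgets across two orders, or a local re-encoding through auxiliaries that is exactly equivalent on $\alpha'$ to the disjunction ``$a<\min(b,c)$ or $b<c<a$''; neither is exhibited, and the natural candidate (imposing both $(a,b,c)$ and $(a,c,b)$) collapses the condition to the $\Pi_1$ condition $a<\min(b,c)$, losing the $231$ branch you need when reducing from 1-$\Pi_6$.

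For comparison, the paper sidesteps all of this by reducing from 2-$\Pi_1$ rather than 1-$\Pi_6$ and by \emph{not} neutralising either ordering. It discards the original variables entirely: each $v_j$ is replaced by a fresh copy $1_{v_j}$ of the gadget element $1$, the whole uniqueness gadget of Lemma~\ref{l:2-Pi_6-caterpillar} is replicated once per variable (sharing $2,3,4$), and each constraint $(v_1^i,v_2^i,v_3^i)$ is encoded by the pair $(1_{v_1^i},1_{v_2^i},1_{v_3^i})$, $(1_{v_1^i},1_{v_3^i},1_{v_2^i})$ together with $(1_{v_1^i},1_{v_2^i},3)$. The pair forces the $\Pi_1$ condition on whichever of the two orderings satisfies both, which is exactly what a 2-$\Pi_1$ source instance requires, and the third constraint excludes the degenerate split across the two orderings. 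Your observation that the $\Pi_6$ uniqueness gadget has no reversal ambiguity is correct and is indeed exploited by the paper, but as it stands your reduction has a missing core and cannot be checked for correctness.
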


\begin{proof}
Throughout the proof, let $\gamma=(1,2,3,4)$ and $\gamma'=(2,4,1,3)$ be two linear orderings of $\{1,2,3,4\}$.
Let $\cI=(V,\cC)$ be an instance of 2-$\Pi_1$ with $V=\{v_1,v_2,\ldots,v_m\}$ and $\cC=\{C^1,C^2,\ldots,C^n\}$. Furthermore, for each $i\in\{1,2,\ldots,n\}$, let $C^i=(v_1^i,v_2^i,v_3^i)$, with $\{v_1^i,v_2^i,v_3^i\}\subseteq V$.  Lastly, let $D=\{1_{v_1},1_{v_2},\ldots,1_{v_m}\}$ such that $D\cap\{1,2,3,4\}=\emptyset$. To show that the result holds, we reduce $\cI$ to an instance $\cI'$ of 2-$\Pi_6$ in the following way. For each $v_j\in V$, we use $\cC(v_j)$ to denote the set obtained from $\cC_{\Pi_6}(\gamma)\cup\cC_{\Pi_6}(\gamma')$ by replacing each occurrence of the variable $1$ with $1_{v_j}$. Let $V'=D \cup\{2,3,4\}$ be a set of variables. We next define two new sets of constraints. In particular, let $$\cC_1=\bigcup_{v_j\in V}\cC(v_j),$$ and let $$\cC_2=\bigcup_{C^i\in\cC}\{(1_{v_1^i},1_{v_2^i},1_{v_3^i}),(1_{v_1^i},1_{v_3^i},1_{v_2^i}),(1_{v_1^i},1_{v_2^i},3)\},$$ where each of $1_{v_1^i}$, $1_{v_2^i}$, and \steven{$1_{v_3^i}$} is an element in $D$.

Now, let $\cI'=(V',\cC_1\cup\cC_2)$ and observe that each constraint in $\cC_1\cup\cC_2$ consists indeed of three elements in $V'$. Moreover, since $\cC_1$ contains a number of constraints that is polynomial in $m$, it is easily checked that $\cI'$ has size polynomial in $m$ and $n$ and, so, the reduction can be carried out in polynomial time.  To complete the proof, we show that $\cI$ is 2-$\Pi_1$-satisfiable if and only if $\cI'$ is 2-$\Pi_6$-satisfiable. 

First, suppose that $\cI$ is 2-$\Pi_1$-satisfiable. Then there exist two linear orderings $\alpha$ and $\beta$ on $V$ such that each $C^j\in \cC$ is $\Pi_1$-satisfied by $\alpha$ or $\beta$. Let $\alpha_{1}$ be the linear ordering of $D$ obtained from $\alpha$ by replacing each $v_j$ with $1_{v_j}$ and, similarly, let $\beta_{1}$ be the linear ordering of $D$ obtained from $\beta$ by replacing each $v_j$ with $1_{v_j}$. Now, let $$\alpha'=\alpha_{1}||(2,3,4),$$ and let $$\beta'=(2,4)||\beta_{1}||(3)$$ be two linear orderings on $V'$. We next show that each constraint in $\cC_1\cup\cC_2$ is $\Pi_6$-satisfied by $\alpha'$ or $\beta'$. Since no constraint in $\cC_1$ contains two elements of $D$, it follows by Lemma~\ref{l:2-Pi_6-caterpillar} and regarding $1$ as a placeholder for $\alpha_1$ (resp. $\beta_1$), that each constraint in $\cC_1$ is $\Pi_6$-satisfied by $\alpha'$ or $\beta'$. Turning to the constraints in $\cC_2$, we have that, if $C^i\in\cC$ is $\Pi_1$-satisfied by $\alpha$, then the first two 
constraints that correspond to $C^i$ in $\cC_2$ are $\Pi_6$-satisfied by $\alpha'$ while, if $C^i$ is $\Pi_1$-satisfied by $\beta$, then the first two constraints that correspond to $C^i$ in $\cC_2$ are $\Pi_6$-satisfied by $\beta'$. Moreover, since $\max(\alpha'(1_{v_1^i}),\alpha'(1_{v_2^i}))<\alpha'(3)$ and $\max(\beta'(1_{v_1^i}),\beta'(1_{v_2^i}))<\beta'(3)$, it follows that $(1_{v_1^i},1_{v_2^i},3)$ is $\Pi_6$-satisfied by $\alpha'$ or $\beta'$. Now, as each constraint in $\cC_1\cup\cC_2$ is $\Pi_6$-satisfied by $\alpha'$ or $\beta'$, we deduce that $\cI'$ is 2-$\Pi_6$-satisfiable.

Second, suppose that $\cI'$ is 2-$\Pi_6$-satisfiable. Then, there exist two linear orderings $\alpha'$ and $\beta'$ on $V'$ such that each constraint in $\cC_1\cup\cC_2$ is $\Pi_6$-satisfied by $\alpha'$ or $\beta'$. We next show that, for each $1_{v_j}\in D$ with $j\in\{1,2,\ldots,m\}$, we have $\alpha'(1_{v_j})<\alpha'(2)$ and $\beta'(4)<\beta'(1_{v_j})<\beta'(3)$ (up to interchanging the roles of $\alpha'$ and $\beta'$). For a contradiction, assume that this is not the case. Then, there exists an element $1_{v_{j'}}\in D$, such that one of the followings holds:
\begin{itemize}
\item [(i)] $\alpha'(2)<\alpha'(1_{v_{j'}})$, 
\item [(ii)] $\beta'(1_{v_{j'}})<\min(\beta'(3),\beta'(4))$,
\item [(iii)] $\max(\beta'(3),\beta'(4))<\beta'(1_{v_{j'}})$, or
\item [(iv)] $\beta'(3)<\beta'(1_{v_{j'}})<\beta'(4)$. 
\end{itemize}
Let $\delta$ be the restriction of $\alpha'$ to $\{1_{v_{j'}},2,3,4\}$ and, similarly, let $\delta'$ be the restriction of $\beta'$ to the same four-element set. Regardless of which of (i), (ii), (iii), or (iv) holds, we obtain two linear orderings on \blue{$\{1_{v_{j'}},2,3,4\}$} of which at least one is different from $(1_{v_{j'}},2,3,4)$ and $(2,4,1_{v_{j'}},3)$. This contradicts the fact that, by Lemma~\ref{l:2-Pi_6-caterpillar}, the instance $(\{1_{v_{j'}},2,3,4\},\cC(v_{j'}))$ of 2-$\Pi_6$, whose set of constraints is a subset of $\cC'$, has the unique solution $(1_{v_{j'}},2,3,4)$ and $(2,4,1_{v_{j'}},3)$. For the remainder of the proof, we may therefore assume that, for each $1_{v_j}\in D$ with $j\in\{1,2,\ldots,m\}$, we have $\alpha'(1_{v_j})<\alpha'(2)$ and $\beta'(4)<\beta'(1_{v_j})<\beta'(3)$. 

Now, let $\alpha$ be the linear ordering of $V$ that is obtained from $\alpha'-\{2,3,4\}$ by replacing each $1_{v_j}$ with $v_j$ for $j\in\{1,2,\ldots,m\}$. Similarly, let $\beta$ be the linear ordering of $V$ that is obtained from $\beta'-\{2,3,4\}$ by replacing each $1_{v_j}$ with $v_j$. Consider an element $C^i=(v_1^i,v_2^i,v_3^i)$ in $\cC$ with $i\in\{1,2,\ldots,n\}$ and its three corresponding constraints in $\cC_2$. If $\alpha'(1_{v_1^i})<\min(\alpha'(1_{v_2^i}),\alpha'(1_{v_3^i}))$ or $\beta'(1_{v_1^i})<\min(\beta'(1_{v_2^i}),\beta'(1_{v_3^i}))$, then it is easily checked that $C^i$ is $\Pi_1$-satisfied by $\alpha$ or $\beta$. Otherwise, as $\alpha'$ and $\beta'$ is a solution to $\cI'$, we have $\alpha'(1_{v_2^i})<\alpha'(1_{v_3^i})<\alpha(1_{v_1^i})$ and $\beta'(1_{v_3^i})<\beta'(1_{v_2^i})<\beta'(1_{v_1^i})$ (up to interchanging the roles of $\alpha'$ and $\beta'$). In particular, by Lemma~\ref{l:2-Pi_6-caterpillar} and the argument in the last paragraph, we have $$\alpha'(1_{v_2^i})<\alpha'(1_{v_
3^i})<\alpha(1_{v_1^i})<\alpha'(2)<\alpha'(3)<\alpha'(4)$$ and $$\beta'(2)<\beta'(4)<\beta'(1_{v_3^i})<\beta'(1_{v_2^i})<\beta'(1_{v_1^i})<\beta'(3).$$ It now follows that $(1_{v_1^i},1_{v_2^i},3)$ is neither $\Pi_6$-satisfied by $\alpha'$ nor  $\Pi_6$-satisfied by $\beta'$; a contradiction. Thus,  we have $\alpha'(1_{v_1^i})<\min(\alpha'(1_{v_2^i}),\alpha'(1_{v_3^i}))$ or $\beta'(1_{v_1^i})<\min(\beta'(1_{v_2^i}),\beta'(1_{v_3^i}))$; thereby implying that $C^i$ is $\Pi_1$-satisfied by $\alpha$ or $\beta$. Thus, $\cI$ is 2-$\Pi_1$-satisfiable.

The theorem now follows by combining both cases.
\end{proof}

\begin{lemma}\label{l:2-Pi_9-caterpillar}
Let $V=\{1,2,3,4,5,6,7\}$, and let $\gamma=(1,2,3,4,5,6,7)$ and $\gamma'=(2,5,7,3,1,6,4)$ be two linear orderings of $V$. Then the instance $\cI=(V,\cC_{\Pi_9}(\gamma)\cup\cC_{\Pi_9}(\gamma'))$ of {\em 2-}$\Pi_9$ has a unique solution (up to reversal). 
\steven{In particular, each solution of $\cI$ consists of one element from $\{ \gamma, \bar \gamma\}$ and one element from $\{ \gamma', \bar \gamma' \}$.}
%\blue{In particular, each solution of $\cI$ consists of two elements in $\{\gamma,%\gamma'\bar\gamma,\bar\gamma'\}$}. 
%\steven{Same point as in the earlier betweenness proof holds here.}
\end{lemma}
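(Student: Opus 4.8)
The plan is to mirror the two earlier uniqueness gadgets (Lemmas~\ref{l:2-Pi_5-caterpillar} and~\ref{l:2-Pi_6-caterpillar}): first record that $\gamma$ and $\gamma'$ (together with their reversals) do form a solution, and then establish rigidity, i.e.\ that no other pair of linear orders works. The easy direction is immediate: by definition $\gamma$ $\Pi_9$-satisfies every constraint in $\cC_{\Pi_9}(\gamma)$ and $\gamma'$ $\Pi_9$-satisfies every constraint in $\cC_{\Pi_9}(\gamma')$, so $\{\gamma,\gamma'\}$ is a solution; and since reversing a linear ordering preserves the median of every triple, $\Pi_9$ is closed under reversal, whence $\cC_{\Pi_9}(\bar\gamma)=\cC_{\Pi_9}(\gamma)$ and $\cC_{\Pi_9}(\bar\gamma')=\cC_{\Pi_9}(\gamma')$. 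This already yields the four announced solutions $\{\gamma,\gamma'\}$, $\{\bar\gamma,\gamma'\}$, $\{\gamma,\bar\gamma'\}$, $\{\bar\gamma,\bar\gamma'\}$, which are identical up to reversal.

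Before attacking uniqueness I would reformulate $\Pi_9$-satisfaction purely in terms of medians, which makes the whole instance transparent. Recall $\Pi_9=S_3\setminus\{123,321\}$, so a constraint $(v_1,v_2,v_3)$ is $\Pi_9$-satisfied by $\alpha$ precisely when $v_2$ is \emph{not} the $\alpha$-median of $\{v_1,v_2,v_3\}$; consequently $\cC_{\Pi_9}(\gamma)$ consists of exactly those ordered triples whose middle coordinate is not the $\gamma$-median. For a fixed unordered triple $t=\{x,y,z\}$ let $m_\gamma(t),m_{\gamma'}(t),m_\alpha(t),m_\beta(t)$ denote the four medians. A short case check then shows, for any candidate pair $\{\alpha,\beta\}$, that all constraints of the union involving $t$ are satisfied if and only if: (C1) whenever $m_\gamma(t)\ne m_{\gamma'}(t)$ (a ``disagreeing'' triple, for which all six orderings lie in the union) one has $m_\alpha(t)\ne m_\beta(t)$; and (C2) whenever $m_\gamma(t)=m_{\gamma'}(t)$ (an ``agreeing'' triple) either $m_\alpha(t)\ne m_\beta(t)$ or $m_\alpha(t)=m_\beta(t)=m_\gamma(t)$. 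Thus the instance reduces to a purely median-theoretic condition over the $\binom{7}{3}=35$ triples of $V$.

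With this reformulation the uniqueness proof is a finite verification, exactly as in the computational proofs of the earlier gadget lemmas. One enumerates the $7!=5040$ linear orderings of $V$, tabulates each one's median function, and checks that the only pairs $\{\alpha,\beta\}$ meeting (C1)--(C2) on all $35$ triples are the four listed above; the reformulation keeps this search cheap, since each order is compared through its $35$-entry median table rather than its full constraint set. A useful intermediate sanity check, which also explains why such a gadget can exist at all, is that a \emph{single} ordering $\delta$ satisfies all of $\cC_{\Pi_9}(\gamma)$ if and only if $\delta$ and $\gamma$ induce the same median on every triple, i.e.\ if and only if $\delta\in\{\gamma,\bar\gamma\}$.

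I expect the main obstacle to be the uniqueness direction, and specifically the fact that constraints may be \emph{split} between $\alpha$ and $\beta$. It is tempting to argue that each order must individually satisfy an entire set $\cC_{\Pi_9}(\cdot)$, and hence be pinned to $\{\gamma,\bar\gamma\}$ or $\{\gamma',\bar\gamma'\}$ by the sanity check above; but condition (C2) lets an agreeing triple be covered by either order, so this naive argument fails and one must rule out all clever distributions of the constraint set. This is precisely where the specific choice of $\gamma'=(2,5,7,3,1,6,4)$ relative to $\gamma$ matters: it must produce enough disagreeing triples, arranged so that (C1) forces $m_\alpha$ and $m_\beta$ apart on a family rich enough to propagate, via the linear-order realizability of the two median functions, to the conclusion that one of $\alpha,\beta$ reconstructs $\gamma$ (up to reversal) and the other reconstructs $\gamma'$. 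Establishing this rigidity by hand would require tracking the interaction of the disagreeing and agreeing triples across all $35$ triples simultaneously, which is why a verified exhaustive check is the pragmatic route.
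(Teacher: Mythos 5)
Your proposal is correct and ultimately takes the same route as the paper, which proves this lemma by an exhaustive computational check (brute-force search verified with a constraint solver, as described in the appendix); your reduction of $\Pi_9$-satisfaction to the median conditions (C1)--(C2) is accurate and a nice way to organize that search, but it does not replace the finite verification, which you rightly identify as unavoidable because constraints can be split between $\alpha$ and $\beta$. Your side observations --- that $\cC_{\Pi_9}(\bar\alpha)=\cC_{\Pi_9}(\alpha)$ and that a single order satisfying all of $\cC_{\Pi_9}(\gamma)$ must agree with $\gamma$ on every median and hence lie in $\{\gamma,\bar\gamma\}$ --- are also correct.
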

\begin{proof}
\steven{Computational proof (see appendix).}
\end{proof}

\begin{theorem}
The problem {\em 2-}$\Pi_9$ is NP-complete.
\end{theorem}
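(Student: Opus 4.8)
The plan is to imitate the reduction already used for 2-$\Pi_5$, exploiting the fact that Lemma~\ref{l:2-Pi_9-caterpillar} supplies a rigid skeleton with exactly the same shape as in that proof: any solution uses one order from $\{\gamma,\bar\gamma\}$ and one from $\{\gamma',\bar\gamma'\}$. Since $\Pi_9$ (non-betweenness) is NP-complete on a single linear ordering (Table~\ref{tab:summary}), I would reduce from 1-$\Pi_9$. Starting from an instance $\cI=(V,\cC)$ of 1-$\Pi_9$ with $\cC=\{C^1,\ldots,C^n\}$ and $C^i=(v_1^i,v_2^i,v_3^i)$, I introduce the seven skeleton variables $\{1,\ldots,7\}$ together with the gadget constraint set $\cC_1=\cC_{\Pi_9}(\gamma)\cup\cC_{\Pi_9}(\gamma')$, and (as in the 2-$\Pi_5$ construction) a pair of auxiliary variables $v_d^i,v_e^i$ per constraint $C^i$, collected into a set $D$. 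The full instance is $\cI'=(V',\cC_1\cup\cC_2\cup\cC_3\cup\cC_4)$ with $V'=V\cup D\cup\{1,\ldots,7\}$, and the reduction is clearly polynomial since $\cC_1$ has constant size.

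By Lemma~\ref{l:2-Pi_9-caterpillar}, in any solution $\{\alpha',\beta'\}$ the restriction to the skeleton forces one ordering to preserve $\gamma$ (up to reversal) and the other to preserve $\gamma'$ (up to reversal); assume first that $\alpha'$ preserves $\gamma$ and $\beta'$ preserves $\gamma'$. The central idea is to use $\beta'$ as a \emph{sink}: I add region constraints $\cC_3$ on the original variables and $\cC_4$ on the auxiliary variables whose only effect, once $\beta'$ realises the $\gamma'$-skeleton, is to pin every non-skeleton variable into a fixed, narrow window of $\beta'$. The crucial subtlety is that under non-betweenness a constraint $(a,v_j,b)$ forces $v_j$ \emph{outside} the interval spanned by $a$ and $b$; so, unlike the betweenness case, trapping $v_j$ inside one window requires several such constraints that jointly forbid all other windows, which is precisely why a seven-element skeleton (rather than five) is required.

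Once every non-skeleton variable is confined to its window in $\beta'$, the encoding constraints $\cC_2$ (which probe the relative order of $v_1^i,v_2^i,v_3^i$, mediated by $v_d^i,v_e^i$) can no longer be $\Pi_9$-satisfied by $\beta'$ and must therefore all be satisfied by $\alpha'$. A short local check should then show that $\alpha'$ satisfying the constraints associated with $C^i$ is equivalent to $v_2^i$ not lying between $v_1^i$ and $v_3^i$ in $\alpha'$, i.e.\ to $(v_1^i,v_2^i,v_3^i)$ being $\Pi_9$-satisfied by $\alpha'-(\{1,\ldots,7\}\cup D)$; this yields a 1-$\Pi_9$ solution of $\cI$. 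For the forward direction, given a 1-$\Pi_9$ solution $\alpha$ of $\cI$ I would build $\alpha'$ by inserting $\alpha$ (with the $v_d^i,v_e^i$ placed appropriately) behind the $\gamma$-skeleton, let $\beta'$ be the $\gamma'$-skeleton with all remaining variables packed into the prescribed windows, and then verify $\cC_1,\cC_2,\cC_3,\cC_4$ directly.

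The main obstacle will be the design and verification of $\cC_3$ and $\cC_4$: because $\Pi_9$ constraints are disjunctive (they place a variable \emph{outside} an interval rather than inside it), forcing each variable into a unique window while still leaving enough freedom for the forward construction is considerably more delicate than in the betweenness reduction, and this difficulty is exactly what the larger skeleton is there to absorb. A secondary, but routine, obstacle is the reversal symmetry: as at the end of the 2-$\Pi_5$ proof, the four combinations obtained by replacing $\gamma$ by $\bar\gamma$ and/or $\gamma'$ by $\bar\gamma'$ must be addressed, but symmetric arguments dispose of them and the theorem then follows by combining all cases.
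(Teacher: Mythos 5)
There is a genuine gap: your argument is a plan rather than a proof. The constraint sets $\cC_2$, $\cC_3$ and $\cC_4$ — which carry the entire technical content of the reduction — are never written down, and you yourself flag their design as ``the main obstacle.'' The difficulty is real and not obviously surmountable in the form you describe: under non-betweenness a single constraint $(a,v_j,b)$ only excludes $v_j$ from one interval, and it can always be discharged by the \emph{other} linear order, so the mechanism that makes the $2$-$\Pi_5$ confinement work (four constraints on $v_j$ that are pairwise incompatible within one order, forcing three of them onto $\beta'$ and thereby pinning $v_j$ into a window) has no direct analogue. You would need to exhibit a family of $\Pi_9$-constraints whose satisfaction pattern across two orders provably traps each variable, verify that this is consistent with the forward construction, and then check that the surviving $\cC_2$ constraints on $\alpha'$ are exactly equivalent to $\Pi_9$-satisfaction of $(v_1^i,v_2^i,v_3^i)$. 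None of this is done, so the proof cannot be accepted as it stands.

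For comparison, the paper avoids all of this. It reduces from $1$-$\Pi_5$ (not $1$-$\Pi_9$) and uses no auxiliary variables, no sink order and no window constraints. Instead, for each constraint $C^i=(v_1^i,v_2^i,v_3^i)$ it instantiates a \emph{separate copy} of the seven-element uniqueness gadget of Lemma~\ref{l:2-Pi_9-caterpillar}, identifying $v_1^i,v_2^i,v_3^i$ with the skeleton positions $4$, $6$ and $7$ (so $\gamma^i=(1^i,2^i,3^i,v_1^i,5^i,v_2^i,v_3^i)$ and $\delta^i=(2^i,5^i,v_3^i,3^i,1^i,v_2^i,v_1^i)$), and takes $\cC'$ to be the union of the induced constraint sets. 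The uniqueness lemma then forces, for every $i$, one of the two orders to restrict to $\gamma^i$ or $\bar\gamma^i$ on $V^i$; since both $\gamma^i$ and its reversal place $v_2^i$ between $v_1^i$ and $v_3^i$, the restriction of that order to $V$ is automatically a betweenness solution. The whole burden is shifted onto the computationally verified rigidity of the gadget, which is exactly the kind of move that would let you sidestep the confinement machinery you are struggling to build.
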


\begin{proof}
Let $\cI=(V,\cC)$ be an instance of 1-$\Pi_5$ with $\cC=\{C^1,C^2,\ldots,C^n\}$. Furthermore, for each $i\in\{1,2,\ldots,n\}$, let $C^i=(v_1^i,v_2^i,v_3^i)$, with $\{v_1^i,v_2^i,v_3^i\}\subseteq V$. We next reduce $\cI$ to an instance of 2-$\Pi_9$. For each $C^i\in\cC$, let $V^i$ be the set $\{1^i,2^i,3^i,5^i,v_1^i,v_2^i,v_3^i\}$ of variables, let $\gamma^i=(1^i,2^i,3^i,v_1^i,5^i,v_2^i,v_3^i)$ and $\delta^i=(2^i,5^i,v_3^i,3^i,1^i,v_2^i,v_1^i)$ be two linear orderings of $V^i$. By replacing the variables $4$, $6$, and $7$ with $v_1^i$, $v_2^i$, and $v_3^i$, respectively, in the statement of Lemma~\ref{l:2-Pi_9-caterpillar}, note that \blue{a solution to} the instance $(V^i,\cC_{\Pi_9}(\gamma^i)\cup\cC_{\Pi_9}(\delta^i))$ of 2-$\Pi_9$
\steven{consists of one element from $\{\gamma^i, \bar \gamma^i\}$ and
one element from $\{\delta^i, \bar \delta^i\}$}.
%\blue{consists of two elements in $\{\gamma^i,\delta^i,\bar\gamma^i,\bar\delta^i\}$.}
Now, let $\cI'=(V',\cC')$ be an instance of 2-$\Pi_9$ with $$V'=\bigcup_{i\in\{1,2,\ldots,n\}}V^i$$ and $$\cC'=\bigcup_{i\in\{1,2,\ldots,n\}}(\cC_{\Pi_9}(\gamma^i)\cup\cC_{\Pi_9}(\delta^i)).$$ Since the number of elements in  $\cC'$ and $V'$ is polynomial in $n$, the reduction can be carried out in polynomial time.  To complete the proof, we show that $\cI$ is 1-$\Pi_5$-satisfiable if and only if $\cI'$ is 2-$\Pi_9$-satisfiable. 

First, suppose that $\cI$ is 1-$\Pi_5$-satisfiable. Then there exists a linear ordering $\alpha$ on $V$ such that each constraint in $\cC$ is $\Pi_5$-satisfied by $\alpha$. \blue{Let} $\alpha'$ be a linear ordering of $V'$ obtained from $\alpha$ by preserving $\alpha$ and adding all elements in $V'-V$ such that, for each $C^i\in \cC$, the constraints in $\cC_{\Pi_9}(\gamma^i)$ are $\Pi_9$-satisfied. More precisely, if $\alpha(v_1^i)<\alpha(v_2^i)<\alpha(v_3^i)$, then $$\alpha'(1^i)<\alpha'(2^i)<\alpha'(3^i)<\alpha'(v_1^i)<\alpha'(5^i)<\alpha'(v_2^i)<\alpha'(v_3^i)$$ and, if $\alpha(v_3^i)<\alpha(v_2^i)<\alpha(v_1^i)$, then $$\alpha'(v_3^i)<\alpha'(v_2^i)<\alpha'(5^i)<\alpha'(v_1^i)<\alpha'(3^i)<\alpha'(2^i)<\alpha'(1^i).$$ Similarly, let $\beta'$ be a linear ordering of $V'$ obtained from $\alpha$ by preserving $\alpha$ and adding all elements in $V'-V$ such that, for each $C^i\in \cC$, the constraints in $\cC_{\Pi_9}(\delta^i)$ are $\Pi_9$-satisfied. More precisely, if $\alpha(v_1^i)<\alpha(v_2^i)<\alpha(v_
3^i)$, then $$\beta'(v_1^i)<\beta'(v_2^i)<\beta'(1^i)<\beta'(3^i)<\beta'(v_3^i)<\beta'(5^i)<\beta'(2^i)$$ and, if $\alpha(v_3^i)<\alpha(v_2^i)<\alpha(v_1^i)$, then $$\beta'(2^i)<\beta'(5^i)<\beta'(v_3^i)<\beta'(3^i)<\beta'(1^i)<\beta'(v_2^i)<\beta'(v_1^i).$$ By repeated applications of Lemma~\ref{l:2-Pi_9-caterpillar}, it now follows that each constraint in $\cC'$ is $\Pi_9$-satisfied by $\alpha'$ or $\beta'$ and, thus, $\cI'$ is 2-$\Pi_9$-satisfiable.

Second, suppose that $\cI'$ is 2-$\Pi_9$-satisfiable. Then there exist two linear orderings $\alpha'$ and $\beta'$ on $V'$ such that each constraint in $\cC'$ is $\Pi_9$-satisfied by $\alpha'$ or $\beta'$. It follows from Lemma~\ref{l:2-Pi_9-caterpillar} that, \blue{each solution to the instance $(V^i,\cC_{\Pi_9}(\gamma^i)\cup\cC_{\Pi_9}(\delta^i))$
\steven{consists of one element from $\{\gamma^i, \bar \gamma^i\}$ and
one element from $\{\delta^i, \bar \delta^i\}$}.
%consists of two elements in $\{\gamma^i,\delta^i,\bar\gamma^i,\bar\delta^i\}$.
We will show that the result holds if $\alpha'$ preserves $\gamma^i$ and if that $\beta'$ preserves $\delta^i$, noting that}
\steven{symmetrical arguments hold as long as one linear order preserves 
one element from
$\{\gamma^i, \bar \gamma^i\}$ and
the other linear order preserves one element from $\{\delta^i, \bar \delta^i\}$.}
% an analogous argument can be used to show that $\cI$ is 1-$\Pi_5$-satisfiable if $\alpha'$ and %$\beta'$ do not preserve $\gamma^i$ and $\delta^i$, respectively, but any other combination %of elements in $\{\gamma^i,\delta^i,\bar\gamma^i,\bar\delta^i\}$.}
Let $\alpha$ be the linear ordering $\alpha'-(V'-V)$ on $V$. Since, for each $C^i\in\cC$, we have $\alpha'(v_1^i)<\alpha'(v_2^i)<\alpha'(v_3^i)$ or $\alpha'(v_3^i)<\alpha'(v_2^i)<\alpha'(v_1^i)$, it is easily checked that $C^i$ is $\Pi_5$ satisfied by $\alpha$ and thus, $\cI$ is 1-$\Pi_5$-satisfiable.

The theorem now follows by combining both cases.
\end{proof}

%%%%%%%%%%%%%%%%%%%%%%%%%%%%%%%%%%%%%%%%%%%%%%%%%%%%%%%%%%%%%%%%

\section{Three linear orderings and an extension to phylogenetic trees}\label{sec:three}
In the last section, we have shown that 2-$\Pi_1$ is NP-complete. In this section we first show that the problem remains computationally hard if we allow for three instead of only two linear orderings. More formally, we show that  3-{\sc Caterpillar Compatibility} is NP-complete and, hence, by Observation~\ref{obs}, 3-$\Pi_1$ is NP-complete as well.\\

In the second part of the section, we extend the hardness result of 3-{\sc Caterpillar Compatibility} and show that the following decision problem is NP-complete.\\

\noindent 3-{\sc Tree Compatibility}

\noindent {\bf Instance.} A set $\cR$ of rooted triplets.

\noindent {\bf Question.} Do there exist at most three rooted phylogenetic trees such that each element in $\cR$ is displayed by at least one such tree?\\

In the flavor of Section~\ref{sec:two}, we start with a lemma whose proof is computational.
\steven{It shows that the three caterpillars shown in Figure \ref{fig:caterpillars} are
uniquely defined by their induced triplets, not just in the space of caterpillars but also in the
wider space of phylogenetic trees.}
 
%Throughout this section we use represent each $\Pi_1$-constraint $(a,b,c)$ as the corresponding rooted triplet $bc|a$ (see Section~\ref{sec:order2phylo}).

\begin{lemma} \label{lem:uniqueness}
 Let $C_1$, $C_2$, and $C_3$ be the three caterpillars that are shown in Figure~\ref{fig:caterpillars}, and let $\cC$ be the set of all triplets that are displayed by $C_1$, $C_2$, or $C_3$. Then, for each set of three trees, say $T_1$, $T_2$, and $T_3$, that is a solution to {\em 3}-{\sc Tree Compatibility} with input $\cC$, we have $C_1=T_1|\{0,1,\ldots,5\}$, $C_2=T_2|\{0,1,\ldots,5\}$, and $C_3=T_3|\{0,1,\ldots,5\}$.
 \end{lemma}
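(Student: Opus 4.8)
The plan is to translate the statement about phylogenetic trees into a purely combinatorial statement about triplet sets, and then to drive the argument with the observation that on certain triples of leaves all three solution trees are \emph{simultaneously} forced. I would begin with two standard reductions. First, for any solution tree $T_i$ the triplets it displays on $\{0,1,\ldots,5\}$ are exactly those displayed by its restriction $T_i':=T_i|\{0,1,\ldots,5\}$, so it suffices to argue about $T_1',T_2',T_3'$, each a rooted phylogenetic tree on the six leaves $\{0,1,\ldots,5\}$; and since there are only finitely many such trees, the lemma is a finite assertion. Second, a rooted \emph{binary} tree on six leaves displays exactly one triplet on each of the $\binom{6}{3}=20$ three-element leaf sets, and is completely determined by this data; more strongly, a suitable subset of its triplets is already \emph{identifying}, meaning $C_i$ is the unique rooted tree displaying all triplets in that subset (cf.\ the uniqueness guaranteed by the construction of Aho et al.~\cite{AhoEtAl1981}). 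Hence the target reduces to: after reindexing the three trees, each $T_i'$ displays every triplet of $C_i$; since those $20$ triplets meet all $20$ triples and a six-leaf tree displays at most $20$ triplets, this forces $T_i'=C_i$ (and in particular binary), with no separate handling of polytomies required.

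The engine of the argument is the behaviour of \emph{rainbow} triples, by which I mean a triple $\{a,b,c\}$ on which $C_1,C_2,C_3$ display three pairwise distinct topologies, so that all of $ab|c,\,ac|b,\,bc|a$ lie in $\cC$. On such a triple each $T_i'$ displays at most one of the three topologies, and because the covering condition demands that all three topologies occur among $T_1',T_2',T_3'$, the three trees must realise the three topologies \emph{bijectively}; in particular each of $T_1',T_2',T_3'$ is forced to be binary on that triple, and each rainbow triple $t$ determines a bijection $\pi_t$ matching the trees $T_j'$ to the caterpillars $C_i$ (namely $T_{\pi_t(i)}'$ carries the topology of $C_i$ on $t$). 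I would then read off from Figure~\ref{fig:caterpillars} that the rainbow triples are numerous, and use their pairwise leaf-overlaps to link the local bijections into a single global permutation $\sigma$ with the property that $T_{\sigma(i)}'$ agrees with $C_i$ on every rainbow triple. Finally I would check that, for each $i$, the rainbow triplets carried by $C_i$ form an identifying set for $C_i$; since $T_{\sigma(i)}'$ displays all of them, this yields $T_{\sigma(i)}'=C_i$, as required.

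The main obstacle is establishing the global coherence step, i.e.\ promoting the per-triple bijections $\pi_t$ to one permutation $\sigma$. A priori nothing in the local constraints prevents a ``Frankenstein'' solution tree that behaves like $C_1$ on one triple and like $C_2$ on another; what must be ruled out is that such a hybrid is itself a genuine phylogenetic tree capable of covering its share of $\cC$. This is exactly the point where the specific geometry of the three caterpillars is essential, and it is why I expect the argument to be carried out, in the spirit of Lemmas~\ref{l:2-Pi_5-caterpillar}, \ref{l:2-Pi_6-caterpillar} and~\ref{l:2-Pi_9-caterpillar}, by an exhaustive finite verification: enumerate the $9!!=945$ rooted binary trees on $\{0,1,\ldots,5\}$ (non-binary trees display strictly fewer triplets and are dominated), form the candidate triples, and confirm that the only triple whose union of displayed triplets contains $\cC$ is $\{C_1,C_2,C_3\}$. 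The rainbow constraint of the previous paragraph is what makes this search both trustworthy and tractable, since it fixes each tree's topology on every rainbow triple and thereby prunes the space of candidates drastically; and the reduction to six-leaf restrictions is what allows this finite check to settle the statement for arbitrary, possibly large and possibly non-binary, solution trees $T_1,T_2,T_3$.
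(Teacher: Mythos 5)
Your proposal is correct and ultimately rests on the same foundation as the paper's own argument: the paper proves Lemma~\ref{lem:uniqueness} purely computationally (an exhaustive search over rooted trees on $\{0,1,\ldots,5\}$, encoded via their clusters and verified with a constraint solver, as described in the appendix), which is precisely the finite verification you reduce to after restricting each $T_i$ to the six special leaves. Your preliminary structural observations (restriction preserves displayed triplets, rainbow triples force a bijective assignment of topologies, and an identifying set of triplets pins down a binary six-leaf tree) are all sound and would make the brute-force check more transparent, but they play the same role as, rather than replace, the computational core of the paper's proof.
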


\begin{proof}
Computational proof (see appendix).
\end{proof}

\begin{figure}[h]
 \centering
  \includegraphics{./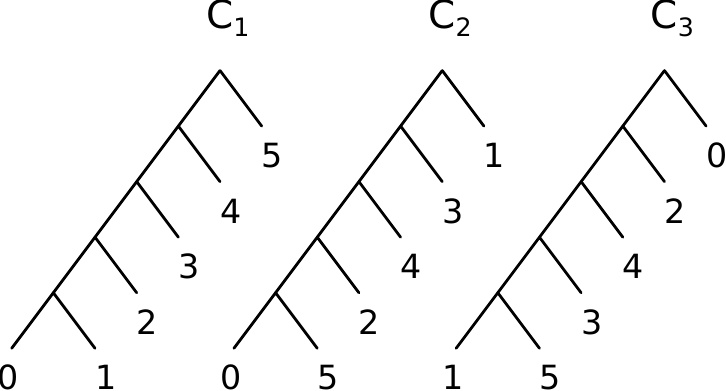}
  \caption{The three trees that are used in the proofs of Lemma~\ref{lem:uniqueness}, and Theorems~\ref{t:3caterpillar} and~\ref{t:3tree}.}
 \label{fig:caterpillars}
\end{figure}

To prove the first main result of this section (Theorem~\ref{t:3caterpillar}), we need a new definition. Let $e$ and $f$ be two leaves of a caterpillar $C$. If there exists a directed path from the parent of $f$ to $e$ in $C$, we say that $e$ is \textit{below} $f$ or, equivalently, $f$ is \textit{above} $e$ in $D$ and write $e \prec f$.

\begin{theorem}\label{t:3caterpillar}
The problems {\em 3-}{\sc Caterpillar Compatibility} and, hence,  {\em 3}-$\Pi_1$ are NP-complete.
\end{theorem}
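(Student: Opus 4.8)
The plan is to prove NP-completeness of 3-\textsc{Caterpillar Compatibility} by a polynomial-time reduction from a known hard problem, and then invoke Observation~\ref{obs} to transfer the hardness to 3-$\Pi_1$. Membership in NP is routine (a witness consists of three caterpillars, and verifying that each triplet is displayed by at least one of them is polynomial), so the substance lies entirely in the hardness direction. Given the structure of the paper so far, the natural source problem is a graph-theoretic partitioning/coloring problem: each reduction in Section~\ref{sec:two} builds a gadget whose \emph{unique} solution (up to symmetry) is pinned down by a computational uniqueness lemma, and here Lemma~\ref{lem:uniqueness} plays exactly that role. I would therefore reduce from a problem whose solutions correspond naturally to assigning each ``object'' to one of three classes --- for instance a 3-coloring or 3-partition problem --- where the three caterpillars $C_1, C_2, C_3$ of Figure~\ref{fig:caterpillars} serve as a fixed, rigid ``frame'' that forces any solution to split into three prescribed linear-order-like roles.

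Concretely, I would construct an instance of 3-\textsc{Caterpillar Compatibility} consisting of two parts. First, a \emph{frame gadget}: a copy of the triplet set $\cC$ from Lemma~\ref{lem:uniqueness} on the distinguished leaves $\{0,1,\ldots,5\}$. By Lemma~\ref{lem:uniqueness}, any solution $T_1, T_2, T_3$ must satisfy $C_j = T_j|\{0,1,\ldots,5\}$, so the three trees are forced into a known configuration and, crucially, are \emph{distinguishable} from one another --- this breaks the symmetry that would otherwise make a three-way partition ambiguous. Second, for each element of the source instance (each vertex/variable), I would attach \emph{variable triplets} that tie that element's placement to the frame via the ``below''/``above'' relation $\prec$ just defined, so that the only way to display all of its triplets on \emph{some} caterpillar is to place it consistently in one of the three frames. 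Then, for each constraint of the source problem (each edge/clause), I would add \emph{constraint triplets} that are jointly displayable across $\{C_1, C_2, C_3\}$ if and only if the incident elements are assigned in a permitted way. The forward direction (source `yes' $\Rightarrow$ reduction `yes') is then the straightforward check that a valid coloring/partition yields a valid assignment of triplets to the three caterpillars; the reverse direction uses Lemma~\ref{lem:uniqueness} to fix the frame and then argues, constraint by constraint, that a valid triple of caterpillars induces a valid source solution.

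I expect the main obstacle to be the \emph{reverse} direction, and specifically the interplay between the frame-forcing lemma and the constraint gadgets. The uniqueness lemma only controls the restriction $T_j|\{0,1,\ldots,5\}$; it says nothing a priori about where the variable leaves sit relative to the frame in each $T_j$. The delicate work is to design the variable and constraint triplets so that, once the frame is pinned, each variable leaf is \emph{pushed} into exactly one of three possible positions (one per caterpillar) and cannot ``hide'' in an unintended location that would spuriously satisfy a constraint triplet. This is precisely the role of the $\prec$ relation: I would use triplets of the form ``leaf $x$ together with two frame leaves'' to force $x$ strictly below or above specified frame leaves, thereby coupling $x$'s vertical position in each caterpillar to its assigned class. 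Getting these coupling triplets simultaneously (i) forcing in the reverse direction and (ii) satisfiable in the forward direction --- without them interfering across different variables sharing the same caterpillar --- is the intricate part, and it is likely why the paper introduced both the rigid three-caterpillar gadget and the dedicated $\prec$ terminology immediately before the theorem. I would handle the routine position bookkeeping by arguing about relative orders on the underlying caterpillar spine (exploiting the correspondence between caterpillars and linear orders established in Section~\ref{sec:order2phylo}) rather than manipulating tree topologies directly.
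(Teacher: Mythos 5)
Your proposal correctly identifies the scaffolding --- membership in NP, a reduction anchored by the rigid three-caterpillar frame of Lemma~\ref{lem:uniqueness}, the use of the $\prec$ relation to couple new leaves to the frame, and the transfer to 3-$\Pi_1$ via Observation~\ref{obs} --- but it stops short of a proof, and the one concrete commitment you do make points in a direction different from, and harder than, the paper's. You propose to reduce from a three-way partitioning problem (``for instance a 3-coloring or 3-partition problem''), with variable gadgets assigning each element to one of three classes and constraint gadgets checking the assignment. The paper instead reduces from 2-\textsc{Caterpillar Compatibility} (already known to be NP-complete), and the whole point of the construction is \emph{not} to realize a three-way choice but to \emph{neutralize the third tree}: for each input triplet $ab|c$ a fresh taxon $\ab$ is introduced, and the auxiliary triplet families force $\ab$ below $2$ but $a,b$ above $2$ in $T_3$ (so $T_3$ is useless for $ab|c$), force $\ab$ above $a$ and $b$ in both $T_1$ and $T_2$, and force $\ab$ below $c$ in at least one of $T_1,T_2$ --- which together squeeze $ab|c$ into being displayed by $T_1$ or $T_2$, recovering a solution to the 2-caterpillar instance. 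This ``bootstrap from $k-1$'' idea, and the per-constraint auxiliary taxon $\ab$ that decouples the positions of $a,b$ from the question of which tree witnesses $ab|c$, are the essential content of the proof and are absent from your plan.

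Beyond the missing gadgets, the specific route you suggest is not obviously viable. The paper's own appendix shows that 2-\textsc{Caterpillar Compatibility} corresponds to 2-dicolorability of the triplet digraph, and the analogous correspondence for three caterpillars would assign colors to \emph{triplets} (arcs), not vertices; the pigeonhole step that converts a triplet-partition into a vertex-coloring in the two-color case (two of three triplets sharing a witness land in the same tree) breaks with three trees. So ``reduce from 3-coloring, one class per caterpillar'' is not a routine adaptation, and you give no gadget that would repair it. In short: the frame lemma and the $\prec$ bookkeeping that you describe are the easy part; the reduction itself --- choice of source problem, the taxa $\ab$, and the five triplet families that make both directions of the equivalence go through --- is the proof, and it is not present in your proposal.
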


\begin{proof}
Trivially, 3-{\sc Caterpillar Compatibility} is  in NP. Let $\cI$ be an instance of 2-{\sc Caterpillar Compatibility}, i.e. $\cI$ is a set of triplets. To show that the theorem holds, we reduce $\cI$ to an instance of 3-{\sc Caterpillar Compatibility}. Let $\cC$ be the set of triplets as defined in the statement of Lemma~\ref{lem:uniqueness}. Furthermore, we define the following five sets of triplets, where, for each triplet $ab|c\in\cI$, $\ab$ denotes a new taxon that is not a leaf label of a triplet in $\cI$.

\begin{align*}
 \cR_1 &= \bigcup_{ab|c\in\cI}\{3a|5, 3a|1, 4a|0,3b|5, 3b|1, 4b|0,3c|5, 3c|1, 4c|0,3\ab|5, 3\ab|1, 4\ab|0\},\\
 \cR_2 &= \bigcup_{ab|c\in\cI}\{4\ab|2\},\\
 \cR_3 &= \bigcup_{ab|c\in\cI}\{0\ab|c\},\\
 \cR_4 &= \bigcup_{ab|c\in\cI}\{25|a, 12|a, 0a|2, 25|b, 12|b, 0b|2\},\textnormal{ and}\\
 \cR_5 &= \bigcup_{ab|c\in\cI}\{5a|\ab, 5b|\ab, 1a|\ab, 1b|\ab\}.
\end{align*}
Now, let $\cR=\cR_1\cup\cR_2\cup\ldots\cup\cR_5$. Clearly, the number of elements in $\cC$ and $\cR$ is polynomial in 6 and $|\cI|$, respectively. To complete the proof, we show that $\cI$ is a `yes'-instance of 2-{\sc Caterpillar Compatibility} if and only if \steven{$\cC\cup\cR$} is a `yes'-instance of 3-{\sc Caterpillar Compatibility}. 

First, suppose that $\cI$ is a `yes'-instance of 2-{\sc Caterpillar Compatibility}. Then, there exist two caterpillars  $S_1$ and $S_2$ such that each triplet in $\cI$ is displayed by $S_1$ or $S_2$. Illustrated in Figure~\ref{fig:solution}, we obtain three new caterpillars $T_1$, $T_2$, and $T_3$ as follows. For $i\in\{1,2,3\}$, start by setting $T_i=C_i$, where $C_i$ is the  caterpillar shown in Figure~\ref{fig:caterpillars} and insert $S_1$ into $T_1$ below 2 and above 1, and $S_2$ into $T_2$ below 2 and above 5. It is easily checked that the resulting trees display all  triplets in \steven{$\cC$}. Furthermore, for each triplet $ab|c \in \cI$ that is displayed by $S_1$, add the corresponding $\ab$ taxon to $T_1$ such that $\ab$ is above \steven{$a$ and $b$} and below $c$ and add $\ab$ to $T_2$ such that $\ab$ is above $a$ and $b$, and below 2. Similarly, for each triplet $ab|c \in \cI$ that is not displayed by $S_1$, add the corresponding $\ab$ taxon to $T_2$ such that $\ab$ is is above \steven{$a$ and $b$} and below $c$ and add $\ab$ to $T_1$ such that $\ab$ is above $a$ and $b$, and below 2.
Lastly, for each triplet $ab|c\in \cI$, add each taxon in $\{a,b,\ab\}$ to $T_3$ such that $4\prec\ab\prec 2\prec a$, $2\prec a\prec 0$, and $2\prec b\prec 0$ holds. We next show that each triplet in $\cR$ is displayed by at least one tree of $T_1$, $T_2$, and $T_3$. For each $x\in\{a,b,c,\ab\}$, the triplet $3x|5$ is displayed by $T_1$, the triplet   $3x|1$ is displayed by $T_2$, and the triplet  $4x|0$ is displayed by $T_3$. Hence, each triplet in $\cR_1$ is displayed by $T_1$, $T_2$, or $T_3$. Furthermore, each triplet in $\cR_2$ is displayed by $T_3$ and, depending on whether or not a triplet $ab|c\in\cI$ is displayed by $S_1$, the corresponding triplet in $\cR_3$ is displayed by $T_1$ or $T_2$. Turning, to the triplets in $\cR_4$, a straightforward check shows that, for each $ab|c\in\cI$, the two corresponding  triplets $0a|2$ and $0b|2$ in $\cR_4$ are displayed by $T_1$ (and $T_2$), while the remaining four triplets in $\cR_4$ are displayed by $T_3$. Lastly, for each triplet $ab|c\in\cI$, the first two corresponding triplets in $\cR_5$ are displayed by $T_2$ while the last two corresponding triplets in $\cR_5$ are displayed by $T_1$. Hence, each triplet in \steven{$\cC\cup\cR$} is displayed by one of $T_1$, $T_2$, or $T_3$ and, therefore, \steven{$\cC\cup\cR$} is a `yes'-instance of 3-{\sc Caterpillar Compatibility}.

\begin{figure}[h]
 \centering
 \includegraphics[width=11cm]{./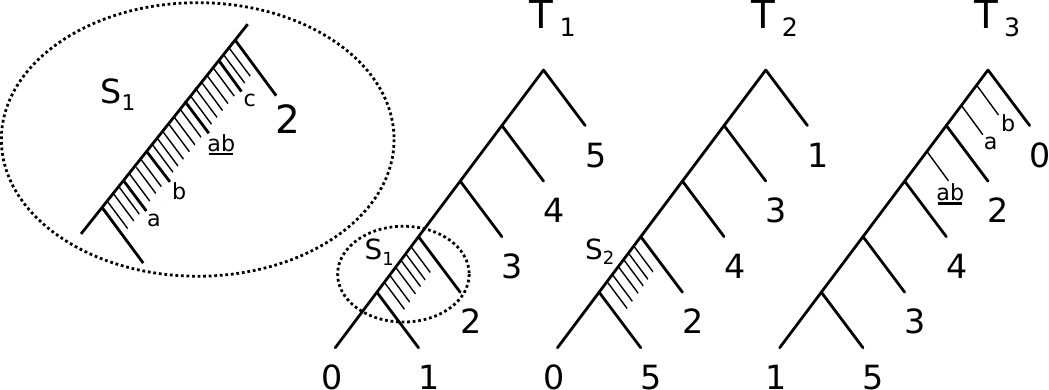}
 \caption{Right: A solution $T_1$, $T_2$, and $T_3$ to the transformed instance of 3-{\sc Caterpillar Compatibility}, given a solution $S_1$ and $S_2$ to some 2-{\sc Caterpillar Compatibility} instance. Left: A more detailed view of the part of the caterpillar $T_1$ inside the dotted circle. }
 \label{fig:solution}
\end{figure}

% \marginpar{For Figure~\ref{fig:solution}, please replace $D_i$ with $T_i$.}

Second, suppose that \steven{$\cC\cup\cR$} is a `yes'-instance of 3-{\sc Caterpillar Compatibility}. Then, there exist three caterpillars  $T_1$, $T_2$,  and $T_3$ such that each triplet in \steven{$\cC\cup\cR$} is displayed by $T_1$, $T_2$,  or $T_3$. By Lemma~\ref{lem:uniqueness}, we may assume without loss of generality that $C_1=T_1|\{0,1,\ldots,5\}$, $C_2=T_2|\{0,1,\ldots,5\}$, and $C_3=T_3|\{0,1,\ldots,5\}$, where $C_1$, $C_2$, and $C_3$ are the three caterpillars that are shown in Figure~\ref{fig:caterpillars}. We make three observations that follow from the different triplet sets $\cR_1$, $\cR_2$, and $\cR_3$. Let $ab|c\in\cI$. 
\begin{enumerate}
\item [(1)]  For each $x\in\{a,b,c,\ab\}$, the triplet $3x|5$ is only displayed by $T_1$, the triplet $3x|1$ is only displayed by $T_2$, and the triplet  $4x|0$ is only displayed by $T_3$. In particular the root of $T_1$ is the parent of 5 and, similarly, the root of $T_2$ (resp. $T_3$) is the parent of 1 (resp. 0). 
\item [(2)] The triplet $4\ab|2$ is only displayed by $T_3$ and, as a consequence, $\ab$ is below 2 in $T_3$.
\item [(3)] By Observation (1), the triplet $0\ab|c$ is not displayed by $T_3$ and, hence,  $\ab$ is below  $c$ in $T_1$ or $T_2$.
\end{enumerate}
Now consider the triplets in $\cR_4$. We claim that, for each triplet $ab|c\in\cI$, the two taxa $a$ and $b$ are above 2 in $T_3$. Assume that $a$ is not above 2 in $T_3$.  Then, by Observation (1), $25|a$ is only displayed by $T_2$ and $12|a$ is only displayed by $T_1$ and, therefore, $a$ is above 2 in both of $T_1$ and $T_2$. Hence, $0a|2$ is not displayed by $T_1$ or $T_2$ and, by Observation (1), certainly not by $T_3$; a contradiction. Similarly, assume that $b$ is not above 2 in $T_3$. Then, by Observation (1), $25|b$ is only displayed by $T_2$ and $12|b$ is only displayed by $T_1$ and, therefore, $b$ is above 2 in both $T_1$ and $T_2$. Hence, $0b|2$ is not displayed by $T_1$ or $T_2$ and, by Observation (1), certainly not by $T_3$; a contradiction. Now, since $a$ and $b$ are both above 2 in $T_3$, it follows from Observation (2) that $a$ and $b$ are both above $\ab$ in $T_3$. In turn, this implies that $T_3$ does not display a triplet $ya|\ab$ or $yb|\ab$  with $y\in\{1,5\}$. 
Finally consider the triplets in $\cR_5$. For each triplet $ab|c\in\cI$, the triplets $5a|\ab$ and $5b|\ab$ are only displayed by  $T_2$ due to Observation (1) and the previous claim. Similarly, the triplets $1a|\ab$ and $1b|\ab$ are only displayed by $T_1$ due to Observation (1) and the previous claim. Hence, $\ab$ is above $a$ and $b$ in both $T_1$ and $T_2$. Now, recall that $\ab$ is below $c$ in at least one of $T_1$ or $T_2$ by Observation (3). It follows that each $ab|c\in\cI$ is displayed by $T_1$ or $T_2$.  and, therefore, $\cI$ is a `yes'-instance of 2-{\sc Caterpillar Compatibility}. By combining both cases, it follows that 3-{\sc Caterpillar Compatibility} is NP-complete and, hence, by Observation~\ref{obs}, 3-$\Pi_1$ is also NP-complete.
\end{proof}

For a given set of triplets, we next show that it is not only NP-complete to decide if there exist three caterpillars such that each element in $\cR$ is displayed by at least one of these caterpillars, but also NP-complete to decide if there exist three (arbitrary) rooted phylogenetic trees such that each element in $\cR$ is displayed by at least one of these trees. We start with a few new definitions.

Let $T$ be a caterpillar. Furthermore, let $\{c,c'\}$ be the unique cherry in $T$ and let $x$ be the leaf of $T$ such that the directed path from the root of $T$ to $x$ contains precisely one edge. We refer to the directed path from the root of $T$ to the parent of $c$ as the {\it spine} of $T$ and to each other edge in $T$ as a {\it leg} of $T$. Note that the definitions of spine and leg naturally carry over to each tree obtained from $T$ by \steven{subdividing edges} of $T$ except for the edges directed into $c$, $c'$, and $x$ respectively. We call such a tree a {\it relaxed caterpillar}. Lastly, a rooted subtree of a rooted phylogenetic tree $T$ \steven{is \emph{pendant}} if it can be detached from $\cT$ by deleting a single edge. Note that each leaf of $T$ is a pendant subtree of $T$.

\begin{theorem}\label{t:3tree}
The problem {\em 3-}{\sc Tree Compatibility} is NP-complete.
\end{theorem}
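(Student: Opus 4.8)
The plan is to reduce from 3-\textsc{Caterpillar Compatibility}, which was just shown to be NP-complete in Theorem~\ref{t:3caterpillar}. The central difficulty in moving from caterpillars to arbitrary trees is that a solution to 3-\textsc{Tree Compatibility} may use trees with complicated branching structure, so we cannot simply invoke the caterpillar solution directly. The key idea is to force, via additional triplet constraints, that any three trees solving the transformed instance must in fact behave like caterpillars (or relaxed caterpillars) on the relevant taxa. Since Lemma~\ref{lem:uniqueness} already establishes that the three caterpillars $C_1,C_2,C_3$ of Figure~\ref{fig:caterpillars} are uniquely pinned down even in the wider space of arbitrary phylogenetic trees, this lemma is the crucial tool: it lets us anchor the three trees so that $C_i = T_i|\{0,1,\ldots,5\}$ up to relabeling, exactly as in the caterpillar proof.

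First I would take an instance $\cI$ of 2-\textsc{Caterpillar Compatibility} (a set of triplets) and build essentially the same triplet set $\cC \cup \cR$ as in Theorem~\ref{t:3caterpillar}, possibly augmented with extra ``relaxing'' triplets whose role is to prevent the three trees from branching in ways unavailable to caterpillars. The forward direction is then immediate: any caterpillar solution $T_1,T_2,T_3$ to the 3-\textsc{Caterpillar Compatibility} instance is in particular a valid solution to 3-\textsc{Tree Compatibility}, since caterpillars are phylogenetic trees and ``displays'' has the same meaning. So if $\cI$ is a `yes'-instance, then $\cC\cup\cR$ is a `yes'-instance of 3-\textsc{Tree Compatibility}.

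For the converse I would suppose $\cC\cup\cR$ admits three arbitrary rooted phylogenetic trees $T_1,T_2,T_3$ displaying all triplets. The main work is to show these trees must be \emph{relaxed caterpillars} on the taxa of interest, and that the structural observations (1)--(3) used in the proof of Theorem~\ref{t:3caterpillar} still go through. By Lemma~\ref{lem:uniqueness} I may assume $C_i = T_i|\{0,1,\ldots,5\}$, which fixes the spine structure and the positions of $0,1,5$ as (near-)root leaves in $T_3,T_2,T_1$ respectively. I would then argue, using the $\prec$ relation and the notions of \emph{spine}, \emph{leg}, and \emph{pendant subtree}, that each taxon $a,b,c,\underline{ab}$ must attach along the spine in a forced position, so that the same betweenness/above-below constraints derived in the caterpillar proof remain valid. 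The key point is that a triplet like $3x|5$ being displayed \emph{only} by $T_1$ still forces $x$ to sit below the relevant internal vertex regardless of how $T_1$ branches elsewhere; these triplet arguments are topological and do not rely on $T_i$ being a caterpillar.

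I expect the main obstacle to be ruling out ``cheating'' configurations in arbitrary trees: in a caterpillar every non-cherry leaf hangs off a distinct spine vertex, whereas in a general tree several taxa could share a pendant subtree or branch off together, potentially satisfying constraints without respecting the intended linear structure. Handling this is where the \emph{relaxed caterpillar} and \emph{pendant subtree} definitions earn their keep: I would show that the dense set of triplets in $\cR_1$ (the $3x|5$, $3x|1$, $4x|0$ constraints for all $x$) forces enough pairwise separations that each $T_i$ restricted to the active taxa is a relaxed caterpillar, after which the observations (1)--(3) and the $\cR_4,\cR_5$ analysis transfer verbatim. Concluding that each $ab|c\in\cI$ is displayed by $T_1$ or $T_2$ then yields a 2-caterpillar solution (again invoking that the restrictions of $T_1,T_2$ to the relevant taxa are relaxed caterpillars, hence caterpillars after suppression), completing the equivalence and establishing NP-completeness.
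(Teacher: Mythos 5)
Your overall architecture matches the paper's: reduce from \mbox{2-{\sc Caterpillar Compatibility}} (note that despite your opening sentence you do not actually reduce from 3-{\sc Caterpillar Compatibility}; the construction you then describe starts from a 2-caterpillar instance, which is what the paper does too), augment the triplet set of Theorem~\ref{t:3caterpillar}, anchor the three trees via Lemma~\ref{lem:uniqueness}, and push the observations through. The forward direction is indeed immediate. But there is a genuine gap in your converse, located exactly where you predicted the difficulty would be. Your plan rests on the claim that the dense constraints in $\cR_1$ force each $T_i$ \emph{restricted to all active taxa} to be a relaxed caterpillar, after which you extract two caterpillars ``after suppression.'' This claim is false: the triplets in $\cR_1$ (and their strengthened versions) only relate each taxon $x$ to the anchor leaves $0,\ldots,5$; they impose no relation between two taxa $x$ and $y$ coming from different triplets of $\cI$. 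Consequently several such taxa can sit together inside an arbitrary pendant subtree hanging off a single spine vertex, and $T_1,T_2$ need not be relaxed caterpillars on their full leaf sets. What the constraints \emph{do} force (via the paper's $\cR_1$ of the form $2x|5,3x|5,4x|5,\ldots$ together with $\cR_2$'s $01|x$ and $05|x$) is only that $T_i(\{0,\ldots,5\})$ is a relaxed caterpillar, i.e.\ the spine is pinned down but the attachments are not.

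Because of this, the final step of your argument is unjustified: knowing that each $ab|c\in\cI$ is displayed by the \emph{tree} $T_1$ or $T_2$ does not immediately yield two \emph{caterpillars} displaying $\cI$. The paper closes this gap with an explicit flattening procedure: each maximal pendant subtree of $T_i$ avoiding $\{0,\ldots,5\}$ is detached and its leaves are re-attached consecutively along the spine (distinguishing spine subtrees from $\ell$-leg subtrees), producing caterpillars $S_1,S_2$. It then proves, by a case analysis driven by the specific triplets $0\ab|c$, $1a|\ab$, $1b|\ab$ (resp.\ $5a|\ab$, $5b|\ab$), that no triplet $ab|c\in\cI$ is destroyed by this transformation --- e.g.\ ruling out that $a$, $b$, $c$ all lie in one pendant subtree, or in three $\ell$-leg subtrees for the same $\ell$. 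This preservation argument is the real content of the theorem beyond Theorem~\ref{t:3caterpillar}, and it is the piece your proposal is missing; without it (or a correct substitute forcing genuine caterpillar structure on all taxa, which the stated constraints do not achieve), the reduction does not go through.
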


\begin{proof}
The proof of this theorem is similar to that of Theorem~\ref{t:3caterpillar}.
Trivially, 3-{\sc Tree Compatibility} is  in NP. Let $\cI$ be an instance of 2-{\sc Caterpillar Compatibility}, i.e. $\cI$ is a set of triplets. To show that the theorem holds, we reduce $\cI$ to an instance of 3-{\sc Tree Compatibility}. Let $\cC$ be the set of triplets as defined in the statement of Lemma~\ref{lem:uniqueness}. Furthermore, we define the following six sets of triplets, where, for each triplet $ab|c\in\cI$, $\ab$ denotes a new taxon that is not a leaf label of a triplet in $\cI$.

\begin{align*}
 \cR_1 &= \bigcup_{ab|c\in\cI}\mbox{  }\bigcup_{x\in\{a,b,c,\ab\}}\{2x|5, 3x|5, 4x|5, 2x|1, 3x|1, 4x|1, 2x|0, 3x|0, 4x|0\},\\
 \cR_2 &= \bigcup_{ab|c\in\cI}\{01|a, 05|a,01|b, 05|b,01|c, 05|c,01|\ab, 05|\ab\},\\
 \cR_3 &= \bigcup_{ab|c\in\cI}\{3\ab|2, 4\ab|2, 5\ab|2\},\\
 \cR_4 &= \bigcup_{ab|c\in\cI}\{0\ab|c\},\\
 \cR_5 &= \bigcup_{ab|c\in\cI}\{25|a, 12|a, 0a|2, 25|b, 12|b, 0b|2\}, \textnormal{ and} \\
 \cR_6 &= \bigcup_{ab|c\in\cI}\{5a|\ab, 5b|\ab, 1a|\ab, 1b|\ab\}.
\end{align*}
Note that only $\cR_1$, $\cR_2$, and $\cR_3$ contain triplets that are not used in the reduction presented in the proof of Theorem~\ref{t:3caterpillar}.
Now, let $\cR=\cR_1\cup\cR_2\cup\ldots\cup\cR_6$. Clearly, the number of elements in $\cC$ and $\cR$ is polynomial in 6 and $|\cI|$, respectively. To complete the proof, we show that $\cI$ is a `yes'-instance of 2-{\sc Caterpillar Compatibility} if and only if \steven{$\cC\cup\cR$} is a `yes'-instance of 3-{\sc Tree Compatibility}. 

First, suppose that $\cI$ is a `yes'-instance of 2-{\sc Caterpillar Compatibility}. Then, there exist two caterpillars  $S_1$ and $S_2$ such that each triplet in $\cI$ is displayed by $S_1$ or $S_2$. Let $T_1$, $T_2$, and $T_3$ be the same phylogenetic trees (caterpillars) reconstructed from $S_1$ and $S_2$ as in the proof of Theorem~\ref{t:3caterpillar}. Building on this proof, it remains to show that all triplets in $\cR_1$, $\cR_2$, and $\cR_3$ are displayed by at least one of $T_1$, $T_2$, and $T_3$. For each $x\in\{a,b,c,\ab\}$, the triplets $2x|5$, $3x|5$, and $4x|5$ are displayed by $T_1$, the triplets $2x|1$, $3x|1$, and $4x|1$ are displayed by $T_2$, and the triplets $2x|0$, $3x|0$, $4x|0$ are displayed by $T_3$. Hence, each triplet in $\cR_1$ is displayed by $T_1$, $T_2$, or $T_3$. Furthermore, turning to the triplets in $\cR_2$, each triplet $01|x$ is displayed by $T_1$ and each triplet $05|x$ is displayed by $T_2$. Lastly, each triplet in $\cR_3$ is displayed by $T_3$. In conclusion, each triplet in \steven{$\cC\cup\cR$} is displayed by one of $T_1$, $T_2$, or $T_3$ and, therefore, \steven{$\cC\cup\cR$} is a `yes'-instance of 3-{\sc Tree Compatibility}.

Second, suppose that \steven{$\cC\cup\cR$} is a `yes'-instance of 3-{\sc Tree Compatibility}. Then, there exist three trees  $T_1$, $T_2$,  and $T_3$ such that each triplet in \steven{$\cC\cup\cR$} is displayed by $T_1$, $T_2$,  or $T_3$. Again, by Lemma~\ref{lem:uniqueness}, we may assume without loss of generality that $C_1=T_1|\{0,1,\ldots,5\}$, $C_2=T_2|\{0,1,\ldots,5\}$, and $C_3=T_3|\{0,1,\ldots,5\}$, where $C_1$, $C_2$, and $C_3$ are the three caterpillars that are shown in Figure~\ref{fig:caterpillars}. For the remainder of the  proof, we relax the definition of `above' (resp. `below') as defined prior to the proof of Theorem~\ref{t:3caterpillar}. In particular, for two leaves $x$ and $y$ in $L(T_i)$ with $i\in\{1,2,3\}$, we say that $x$ is {\it below} $y$ or, equivalently, $y$ is {\it above} $x$ if there is a directed path from $\lca_{T_i}(c,y)$ to $\lca_{T_i}(c,x)$ that contains at least one edge and where $c$ is a leaf of the cherry in $C_i$. We next consider the triplets in $\cR$. Let $ab|c\in\cI$. 
\begin{enumerate}
\item [(1)]  We claim that the root of $T_1$ is the parent of 5. To see that this is indeed true, consider the triplets $2x|5, 3x|5, 4x|5$ with $x\in\{a,b,c,\ab\}$. For $x$ being fixed, no two of these three triplets can  simultaneously be displayed by $T_2$ or $T_3$. Hence, at least one such triplet is only displayed by $T_1$; thereby implying the correctness of the claim. A similar argument can be used to show that the parent of 1 is the root of $T_2$ (by exploiting the triplets $2x|1$, $3x|1$, and $4x|1$) and that the parent of 0 is the root of $T_3$ (by exploiting the triplets $2x|0$, $3x|0$, and $4x|0$). 
\item[(2)] By Observation (1), the triplets in $\cR_2$ guarantee that, for each $x\in\{a,b,c,\ab\}$, the triplet $01|x$ is only displayed by $T_1$ and the triplet $05|x$ is only displayed by $T_2$. In other words $\{0,1\}$ is a cherry of $T_1$ and $\{0,5\}$ is a cherry of $T_2$. 
\item [(3)] At least one of the three corresponding triplets in $\cR_3$ is only displayed by $T_3$ and, hence, $\ab$ is below 2 in $T_3$.
\item [(4)] By Observation (1), the triplet $0\ab|c$ in $\cR_4$ is not displayed by $T_3$ and, hence,  $\ab$ is below  $c$ in $T_1$ or $T_2$.
\item [(5)] Considering $\cR_5$ and using the same argument as in the proof of Theorem~\ref{t:3caterpillar}, it follows that $a$ and $b$ are above 2 in $T_3$.
\item[(6)] By Observations (3) and (5), no triplet in $\cR_6$ is displayed by $T_3$. Moreover, due to Observation (1), we have that $5a|\ab$ and $5b|\ab$ are only displayed by $T_2$ and that $1a|\ab$ and $1b|\ab$ are only displayed by $T_1$; thereby implying that $\ab$ is above $a$ and $b$ in $T_1$ and $T_2$.
\end{enumerate}
Now, by combining Observations (2), (4) and (6), it now follows that each $ab|c\in\cI$ is displayed by $T_1$ or $T_2$. 

Guided by $T_1$ and $T_2$, we next construct two caterpillars $S_1$ and $S_2$. For $i\in\{1,2\}$, let $C_i'$ be the relaxed caterpillar $T_i(\{0,1,2,3,4,5\})$, and let $c_i$ be a leaf of the cherry in $C_i'$. By Observations (1) and (2), note that $C_i'$ is indeed a relaxed caterpillar. Now, let $S_i=T_i$. For each maximum-size pendant subtree $S$ of $S_i$ whose leaf set $\{x_1,x_2,\ldots,x_n\}$ is a subset of $L(S_i)-\{0,1,2,3,4,5\}$, repeat the following (illustrated in Figure~\ref{fig:tree-cat}) until the resulting trees are both caterpillars. If $S$ can be detached from $S_i$ by deleting an edge $(u_1,v)$ such that $u_1$  corresponds to a degree-2 vertex on the spine of $C_i'$, then delete $(u_1,v)$, subdivide the edge directed into $u_1$ with $n-1$ new vertices $u_2,u_3,\ldots, u_{n}$, and, for each $j\in\{1,2,\ldots,n\}$, add a new edge $(u_j,x_j)$. Otherwise, $S$ can be detached from $S_i$ by deleting an edge $(u,v)$ such that $u$ corresponds to a degree-2 vertex on a leg of $C_i'$. Let $\ell$ be the unique element in $\{0,1,2,3,4,5\}$ such that there is a directed path from $u$ to $\ell$ in $S_i$. Then, delete $(u,v)$ and contract the resulting degree-2 vertex $u$, subdivide the edge directed into $\lca_{S_i}(c_i,\ell)$ with $n$ new vertices $u_1,u_2,\ldots, u_{n}$, and, for each $j\in\{1,2,\ldots,n\}$, add a new edge $(u_j,x_j)$. 
\begin{figure}[h]
 \centering
  \includegraphics{./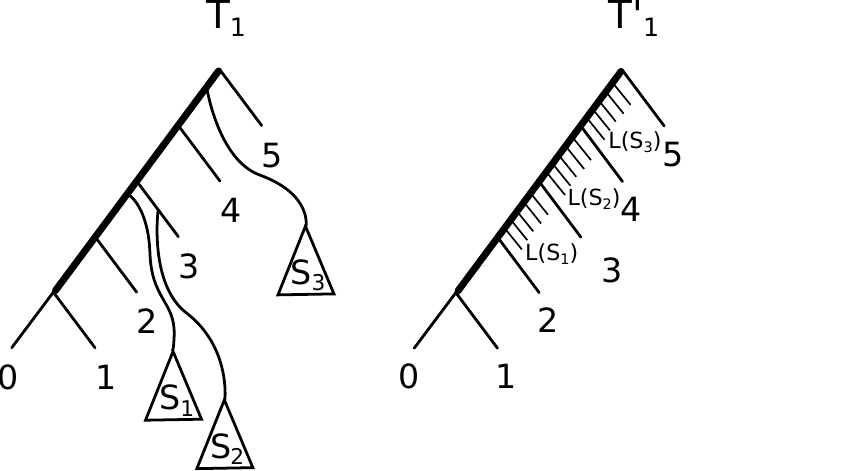}
 \caption{The construction of a caterpillar $T_1'$ from $T_1$ as described in the second part of the proof of Theorem~\ref{t:3tree}.}
 \label{fig:tree-cat}
\end{figure}

We complete the proof by showing that each triplet $ab|c\in\cI$ is displayed by $S_1$ or $S_2$. 
%\steven{To achieve this it is sufficient to show that $S_1, S_2, T_3$
%still display all triplets in $\cC \cup \cR$. In particular, as argued earlier in the proof,
%such instances have the property that each $ab|c \in \cI$ is displayed by $S_1$ or $S_2$.} 
Intuitively, the transformation \steven{from trees into caterpillars} is safe because, for any two leaves in $L(T_i)$ with $i\in\{1,2\}$ for which the `above'-relationship holds in $T_i$, the `above'-relationship also holds in $S_i$. 

%\steven{First, observe that all triplets in $\cC$ are still displayed after the
%transformation, since taxa $\{0,1,2,3,4,5\}$ are untouched.  It remains only to consider %triplets in $\cR$. By Observation (2), all triplets
%of the form $01|x$ are displayed by $T_1$, and all triplets of the form $05|x$ are
%displayed by $T_2$. These constitute all triplets from $\cR_2$. In $T_1$ the transformation %does not move any taxa below $\lca_{T_1}(0,1)$, and in $T_2$ the transformation does not %move any
%taxa below $\lca_{T_2}(0,5)$, so triplets in $\cR_2$ are preserved. Next, consider $\cR_1$.
%We know from Observation (1) that $T_1$ displays at least one of $2x|5, 3x|5, 4x|5$, for
%every $x \in \{a,b,c, \ab\}$. The transformation moves $x$ onto the spine, ensuring
%that $S_1$ displays all $2x|5, 3x|5, 4x|5$. Similarly...
%}

To ease reading in this part of the proof, we pause and introduce new terminology. Let $S$ be a subtree of $T_i$ with $L(S)\subseteq L(T_i)-\{0,1,2,3,4,5\}$. If $S$ can be detached from $T_i$ by deleting an edge $(u,v)$ such that $u$  corresponds to a degree-2 vertex on the spine of $C_i'$, we call $S$ a {\it spine subtree} of $T_i$. Otherwise, we call $S$, an {\it $\ell$-leg subtree} of $T_i$, where $\ell$ is the unique element in $\{0,1,2,3,4,5\}$ such that there exists a directed path from $u$ to $\ell$ in $T_i$.
Now, let $ab|c\in\cI$.

\steven{First, assume that $0\ab|c$ is displayed by $T_1$. (Then, because $1a|\ab$ and
$1b|\ab$ are definitely displayed by $T_1$, $ab|c$ is displayed by $T_1$.)}
Since $T_1$ displays $0\ab|c$, $1a|\ab$ and $1b|\ab$, it follows that 
%\steven{it is not possible for two or more elements from $\{a,b,c,\ab\}$ to be in the same
%spine or leg subtree}.
there does not exist a spine \steven{or leg} subtree $S$ in $T_1$ with $\{a,b,c\}\subseteq L(S)$. \steven{Clearly, $a$ and $c$ cannot be together in a leg or spine subtree without
$b$, and similarly $b$ and $c$ cannot be together without $a$, because this would contradict
the fact that $T_1$ displays $ab|c$. So suppose $a$ and
$b$ are together without $c$ in a leg or spine subtree. Due to the fact that
$0\ab|c$, $1a|\ab$ and $1b|\ab$ are displayed by $T_1$, the transformation ensures that in $S_1$ there is a directed path from the parent of $c$ to $\lca_{T_1}(a,b)$ i.e. that $ab|c$ is
displayed. Let us then consider the remaining case when $a$, $b$ and $c$ are in three distinct
subtrees (spine or leg). Observe that it is not possible for all three distinct subtrees to
be $\ell$-leg subtrees, for the same $\ell$. This, again, is because of the three
triplets $0\ab|c$, $1a|\ab$ and $1b|\ab$. Hence, at most $a$ and $b$ can be in distinct
$\ell$-leg subtrees, for the same $\ell$. Under such circumstances the transformation
again ensures that $ab|c$ will be displayed by $S_1$.}

%Hence, if each of $a$, $b$, and $c$ is contained in a spine subtree in $T_1$, it follows by the %construction of $S_1$ from $T_1$ that, as $T_1$ displays $ab|c$, so does $S_1$. Now assume %that not each of $a$, $b$, and $c$ is contained in a spine subtree in $T_1$. If there exists at %most two elements in $\{a,b,c\}$ that are not contained in a spine subtree of $T_1$, then, as %$T_1$ displays $ab|c$, $S_1$ also displays $ab|c$. We may therefore assume that no element %in $\{a,b,c\}$ is contained in a spine subtree. If $a$ is contained in an $\ell$-leg subtree of %$T_1$ and at least one of $b$ and $c$ is contained in an $\ell'$-leg subtree of $T_1$ with %$\ell\ne\ell'$, then it follows again, by construction of $S_1$, that $S_1$ displays $ab|c$. %Lastly, assume that there exists an $\ell$-leg subtree $S$ of $T_1$ such that %$\{a,b,c\}\subseteq L(S)$. Then, at least one of the three triplets $0\ab|c$, $1a|\ab$ and %$1b|\ab$ is not displayed by $T_1$; a contradiction. It now follows that $ab|c$ is displayed by %$S_1$.

Second, assume that \steven{$0\ab|c$} is displayed by $T_2$ \steven{(and hence
$ab|c$ is displayed by $T_2$)}. By replacing the triplets $1a|\ab$ and $1b|\ab$ with $5a|\ab$ and $5b|\ab$, respectively, we can use an analogous argument to show that $ab|c$ is displayed by $S_2$. In conclusion, each triplet $ab|c\in\cI$ is displayed by the caterpillar $S_1$ or $S_2$ and, thus, $\cI$ is a `yes'-instance of 2-{\sc Caterpillar Compatibility}.

By combining both cases, and noting that the construction described in the previous paragraph of this proof can be done in polynomial time, it follows that 3-{\sc Tree Compatibility} is NP-complete.
\end{proof}

%%%%%%%%%%%%%%%%%%%%%%%%%%%%%%%%%%%%%%%%%%%%%%%%%%%%%%%%%%%%%%%%%%%%%%%%%%

\section{A constant number of trees is not enough} 
\label{sec:notenough}

We begin this section with several new definitions. Let $T$ be an unrooted \steven{binary} phylogenetic tree, and let $T_r$ be a rooted binary phylogenetic tree.  We say that~$T_r$ is a \emph{rooting} of~$T$ if~$T_r$ can be obtained from~$T$ by subdividing an edge~$e$ of~$T$ by a new vertex~$\rho$, \blue{which is regarded as the root}, and directing all edges away from~$\rho$. The edge~$e$ is then called the \emph{root location} of~$T_r$ in~$T$.
The next definition introduces a special set of rooted triplets that will play an important role throughout this section. Let~$\cT_n$ denote the full set of triplets over~$n$ leaves, i.e.

\[
\cT_n = \{ ab|c \quad:\quad a,b,c\in \n,\quad a\neq b \neq c\neq a\}.
\]

Furthermore, we use $\tau(n)$ to denote the minimum number of binary rooted phylogenetic trees of order~$n$, such that each triplet in~$\cT_n$ is displayed by at least one of these trees.

We will show that $\tau(n)\rightarrow\infty$ when~$n\rightarrow\infty$. In other words, we show that a constant number of trees does not suffice to display all possible triplet sets.

To prove this, we will use the following theorem shown by Martin and Thatte \cite{martin2013maximum}, which improves an earlier bound by Sz\'ekely and Steel \cite{steel2009improved}. Let~$\umast_k(n)$ denote the smallest number~$L$ such that, for any collection~$\cC$ of~$k$ binary unrooted phylogenetic trees of order~$n$, there exists a binary unrooted phylogenetic tree~$T^*$ of order~$L$ that is displayed by each tree in~$\cC$.

\begin{theorem}[Martin and Thatte]\label{thm:thatte}
For some constant~$c>0$, $$\umast_2(n) > c\sqrt{\log(n)}.$$
\end{theorem}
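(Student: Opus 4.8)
The statement is the established lower bound of Martin and Thatte on the size of a maximum agreement subtree of two trees, so the plan is to reconstruct the Ramsey-theoretic strategy that underlies it rather than to invent something genuinely new. First I would pass from unrooted to rooted trees: rooting both input trees at a common leaf (or on a common edge) changes $\umast_2$ by at most a constant additive term, so it suffices to prove the analogous bound for rooted binary trees, where an agreement subtree on a leaf set $S$ is simply a set with $T_1|S = T_2|S$. I would then reformulate the claim as an inverse extremal estimate: let $f(k)$ be the least $n$ such that \emph{every} pair of rooted binary trees on $n$ leaves is guaranteed a common agreement subtree on $k$ leaves. The target bound $\umast_2(n) > c\sqrt{\log n}$ is equivalent to $f(k) \le 2^{O(k^2)}$, i.e.\ to showing that once $n$ exceeds a suitable $2^{\Theta(k^2)}$, a shared agreement subtree of size $k$ must appear.

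The heart of the argument is a greedy, recursive extraction driving this recurrence. I would build the common agreement subtree one leaf at a time, maintaining a set $S$ of size $j$ on which $T_1|S$ and $T_2|S$ already agree, together with a large reservoir $R$ of as-yet-unused leaves. Each leaf of $R$ can be classified by \emph{where} it attaches to the current structure in $T_1$ and, simultaneously, where it attaches in $T_2$; since a tree on $j$ leaves has only $O(j)$ edges, this yields a partition of $R$ into boundedly many classes, and a pigeonhole step isolates a still-large subreservoir whose members all attach in the same way in both trees. From such a class one can select a new leaf $x$ that extends $S$ to $S \cup \{x\}$ while preserving $T_1|(S\cup\{x\}) = T_2|(S\cup\{x\})$, and recurse. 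Tracking how the reservoir shrinks at each step produces the recurrence governing $f(k)$.

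The crux --- and the reason the bound is only $\sqrt{\log n}$ rather than something polynomial --- is that the cost of adding the $j$-th agreement leaf grows with $j$: the simultaneous attachment classes in the two trees multiply, so the reservoir must shrink by a factor that increases with the current size of $S$, forcing $\log f(k)$ to accumulate like $\sum_{j \le k} \Theta(j) = \Theta(k^2)$. The main obstacle is therefore the bookkeeping that controls both trees at once: a subset that is nested or monotone in $T_1$ need not be nested or monotone in $T_2$, so a single application of an Erd\H{o}s--Szekeres-type monotone-subsequence extraction does not suffice (it only matches linear orders, whereas full topological isomorphism of $T_1|S$ and $T_2|S$ is strictly more demanding). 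Optimising the number of attachment classes used at each step, so as to keep the exponent at $\Theta(k^2)$ rather than at $\Theta(k^2 \log k)$, is precisely the delicate part where Martin and Thatte improve on the earlier bound of Sz\'ekely and Steel, and it is where I would expect to spend most of the effort.
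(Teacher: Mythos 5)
First, a point of comparison: the paper does not prove this statement at all. Theorem~\ref{thm:thatte} is imported verbatim from Martin and Thatte \cite{martin2013maximum} (as an improvement on Steel and Sz\'ekely), so there is no in-paper argument to measure your proposal against; what you have written is a blind reconstruction of an external result that the paper only cites and then uses as a black box in Lemma~\ref{lem:umast}.

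Judged on its own terms, your sketch has a genuine gap at its central step. You maintain a set $S$ of size $j$ with $T_1|S = T_2|S$ and classify each reservoir leaf by the pair consisting of its attachment edge in $T_1|S$ and its attachment edge in $T_2|S$; since there are only $O(j^2)$ such pairs, pigeonhole yields a large class. But a large class of leaves all attaching at edge $e_1$ in $T_1|S$ and at edge $e_2$ in $T_2|S$ only extends the agreement when $e_1$ and $e_2$ correspond under the isomorphism between $T_1|S$ and $T_2|S$; if they do not, adding \emph{any} leaf of that class destroys the agreement, and no further selection inside the class repairs this. So the greedy step can simply get stuck. The accounting also betrays the problem: if the loss per step really were polynomial in $j$, as the ``only $O(j)$ edges'' count suggests, the recursion would give $\log f(k) = O(k \log k)$ and hence an agreement subtree of size $\Omega(\log n / \log\log n)$, far stronger than the theorem being proved --- a strong signal that the argument as stated cannot be capturing the true obstruction. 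You gesture at the resolution (``the attachment classes multiply,'' costing $\Theta(j)$ bits per step), but you never exhibit a mechanism that actually forces $\log f(k) = \Theta(k^2)$ while still guaranteeing progress; that missing mechanism is precisely the content of the Steel--Sz\'ekely and Martin--Thatte arguments. As it stands the proposal is a plausible plan with the hard step unexecuted, not a proof.
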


To use the above theorem, we first extend it to more than~2 trees.

\begin{lemma}\label{lem:umast}
For some constant~$c>0$ and~$k > 2$, $$\umast_k(n) > c\sqrt{\log(\umast_{k-1}(n))}.$$
\end{lemma}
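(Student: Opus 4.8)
We want: for some constant $c>0$ and all $k>2$, $$\umast_k(n) > c\sqrt{\log(\umast_{k-1}(n))}.$$

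The plan is to bootstrap Theorem~\ref{thm:thatte} by applying it not to $n$ leaves directly but to the common subtree guaranteed by the $(k-1)$-tree version of the problem. The key idea is that $\umast_k(n)$ can be lower-bounded by a two-stage argument: first extract a large common subtree from the first $k-1$ trees, then apply the 2-tree bound of Martin and Thatte to that subtree together with the $k$-th tree.

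First I would fix an arbitrary collection $\cC = \{T_1, \ldots, T_k\}$ of $k$ binary unrooted phylogenetic trees of order $n$. Applying the definition of $\umast_{k-1}(n)$ to the sub-collection $\{T_1, \ldots, T_{k-1}\}$, there exists a binary unrooted phylogenetic tree $T^*$ of order at least $\umast_{k-1}(n)$ that is displayed by each of $T_1, \ldots, T_{k-1}$. Let $Y = L(T^*)$, so $|Y| \geq \umast_{k-1}(n)$. Next I would restrict the $k$-th tree to this leaf set, i.e.\ consider $T_k | Y$, which is a binary unrooted phylogenetic tree of order $|Y|$. Now $T^*$ and $T_k|Y$ are two binary unrooted phylogenetic trees on the same leaf set $Y$. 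Applying Theorem~\ref{thm:thatte} (the 2-tree bound) to this pair, there is a common displayed subtree of order greater than $c\sqrt{\log(|Y|)} \geq c\sqrt{\log(\umast_{k-1}(n))}$, using monotonicity of $\log$ and $\sqrt{\cdot}$ together with $|Y| \geq \umast_{k-1}(n)$.

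The final step is to check that this common subtree of $T^*$ and $T_k|Y$ is in fact displayed by \emph{all} of $T_1, \ldots, T_k$. Since it is displayed by $T^*$, and $T^*$ is displayed by each $T_i$ for $i \leq k-1$, transitivity of ``displays'' gives that it is displayed by $T_1, \ldots, T_{k-1}$; and since it is displayed by $T_k|Y$ which is a restriction of $T_k$, it is displayed by $T_k$ as well. Hence this tree witnesses a common subtree of the whole collection $\cC$, so $\umast_k(n)$, being the guaranteed size over the worst-case collection, is at least its order, giving $\umast_k(n) > c\sqrt{\log(\umast_{k-1}(n))}$ for the same constant $c$ as in Theorem~\ref{thm:thatte}.

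The main obstacle I anticipate is verifying the transitivity and restriction properties of the ``displays'' relation carefully enough that the composition is rigorous: namely that a subtree displayed by $T^*$ is displayed by anything that displays $T^*$, and that displaying a restriction $T_k|Y$ implies displaying $T_k$. Both are standard facts about topological embedding of phylogenetic trees (suppressing degree-2 vertices composes), but they must be invoked cleanly so that the extracted tree is simultaneously a common subtree of all $k$ trees rather than only of the pair we directly applied the theorem to. A minor subtlety is simply ensuring the quantifiers line up with the ``smallest $L$ such that for any collection'' definition of $\umast_k$, so that exhibiting one large common subtree for an arbitrary collection yields the stated lower bound on $\umast_k(n)$.
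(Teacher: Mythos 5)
Your proof is correct and follows essentially the same route as the paper's: extract a maximum common subtree of $k-1$ of the trees, restrict the remaining tree to its leaf set, apply the Martin--Thatte two-tree bound to that pair, and note that the resulting subtree is displayed by all $k$ trees by transitivity of the displays relation. The only cosmetic difference is that you track the inequality $|Y|\geq\umast_{k-1}(n)$ explicitly via monotonicity, whereas the paper takes the common subtree to have order exactly $\umast_{k-1}(n)$.
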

\begin{proof}
Let~$k\geq 3$. Consider a collection~$\cC$ of~$k$ unrooted phylogenetic trees of order~$n$. Consider an arbitrary $k-1$-sized subset~$\cC'$ of~$\cC$ and let~$T^*$ be the only tree in~$\cC\setminus\cC'$. Let~$T'$ be an unrooted phylogenetic tree that is displayed by each tree in~$\cC'$ and has order $\umast_{k-1}(n)$ (i.e. it has a maximum number of leaves). Let~$T^{**}$ be the unique subtree of~$T^*$ that has the same leaf set as~$T'$. By Theorem~\ref{thm:thatte}, there exists an unrooted phylogenetic tree~$T^{'**}$ that is displayed by both~$T'$ and~$T^{**}$ and has order $\umast_2(\umast_{k-1}(n)) > c\sqrt{\log(\umast_{k-1}(n))}$. Since~$T^{'**}$ is displayed by each tree in~$\cC$, the lemma follows.
\end{proof}

The following lemma follows directly from Theorem~\ref{thm:thatte} and Lemma~\ref{lem:umast} by induction on~$k$.

\begin{lemma}\label{lem:umastinf}
For all~$k\in N$, $\umast_k(n)\rightarrow\infty$ when $n\rightarrow\infty$.
\end{lemma}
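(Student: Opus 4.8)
The plan is to prove Lemma~\ref{lem:umastinf} by induction on $k$, using Theorem~\ref{thm:thatte} as the base case and Lemma~\ref{lem:umast} as the inductive step. The statement asserts that for every fixed $k \in \mathbb{N}$, the quantity $\umast_k(n)$ tends to infinity as $n \to \infty$. Since $\umast_k(n)$ is (informally) the guaranteed size of a common displayed subtree among $k$ trees on $n$ leaves, and increasing $k$ only makes the common subtree smaller, the content of the lemma is that even with arbitrarily many (but constantly many) trees we can still force an arbitrarily large common subtree by taking $n$ large enough.

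First I would establish the base case $k = 2$. By Theorem~\ref{thm:thatte}, there is a constant $c > 0$ with $\umast_2(n) > c\sqrt{\log(n)}$. Since $\sqrt{\log(n)} \to \infty$ as $n \to \infty$, it follows immediately that $\umast_2(n) \to \infty$. (The case $k = 1$ is trivial, since a single tree displays itself, so $\umast_1(n) = n \to \infty$; one could include this as an even simpler base case.)

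Next I would carry out the inductive step. Assume as the induction hypothesis that $\umast_{k-1}(n) \to \infty$ as $n \to \infty$, for some $k \geq 3$. By Lemma~\ref{lem:umast}, there is a constant $c > 0$ with
\[
\umast_k(n) > c\sqrt{\log(\umast_{k-1}(n))}.
\]
By the induction hypothesis, $\umast_{k-1}(n) \to \infty$, so $\log(\umast_{k-1}(n)) \to \infty$, and hence $\sqrt{\log(\umast_{k-1}(n))} \to \infty$. Composing, the right-hand side tends to infinity, which forces $\umast_k(n) \to \infty$. This closes the induction and proves the lemma for all $k$.

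This proof is essentially a routine limit-chasing argument once the two preceding results are in hand, so I do not anticipate a genuine obstacle. The only point requiring a little care is the composition of limits: one must note that $\log$ and $\sqrt{\cdot}$ are monotone increasing and unbounded, so that feeding a quantity tending to infinity through them yields a quantity that still tends to infinity, and then that a lower bound by such a quantity forces $\umast_k(n)$ itself to diverge. It is also worth remarking that the constant $c$ furnished by Lemma~\ref{lem:umast} may depend on $k$, but since $k$ is fixed throughout the induction this causes no difficulty.
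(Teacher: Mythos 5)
Your proof is correct and follows exactly the route the paper intends: the paper simply states that the lemma ``follows directly from Theorem~\ref{thm:thatte} and Lemma~\ref{lem:umast} by induction on~$k$,'' and your argument fills in precisely that induction with the appropriate limit-composition details. No issues.
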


We now make the step from unrooted to rooted trees by proving the following lemma.

\begin{lemma}\label{lem:missingtriplet}
Let~$T$ be a binary unrooted phylogenetic tree of order~\steven{$n \geq 4$} and let~$T_1,\ldots ,T_k$ be~$k$ rootings of~$T$. If $n > \bound$, then there exists a triplet in~$\cT_n$ that is not displayed by any of~$T_1,\ldots ,T_k$.
\end{lemma}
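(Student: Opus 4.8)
The statement to prove is Lemma~\ref{lem:missingtriplet}: for an unrooted binary tree $T$ of order $n \geq 4$ with $k$ rootings $T_1,\ldots,T_k$, if $n > \bound$ then some triplet in $\cT_n$ is displayed by none of the $T_i$. Let me think about what rooting does. Each $T_i$ is obtained by subdividing some edge $e_i$ of $T$ and directing away from the new root. So each rooting is specified by a choice of root-location edge. The key structural fact I would exploit is that the rooted triplets displayed by a rooting $T_i$ are completely determined by the topology of the \emph{unrooted} tree $T$ together with the position of the root edge $e_i$: specifically, for leaves $a,b,c$, the rooting $T_i$ displays $ab|c$ precisely when, relative to $e_i$, the leaf $c$ "branches off first", i.e.\ the path structure of $a,b,c$ in $T$ combined with where $e_i$ sits determines which of the three rooted triplets (if any is even forced) is displayed.

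**The counting / pigeonhole core.** The plan is to reduce the problem to a statement about a single quartet or a small fixed sub-structure and then count. Since $n > \bound = k^2 - 6$, there are "many" leaves relative to $k$. I would first locate a structure in $T$ on which the $k$ rootings are forced to collectively miss a triplet. The natural candidate: consider the internal edges or the cherries of $T$. An unrooted binary tree on $n$ leaves has $n-3$ internal edges; each root location $e_i$ is one edge. The essential observation is that for a fixed triple $\{a,b,c\}$, the unrooted tree restricted to these three leaves together with any fourth "reference" leaf (or the root) determines the displayed triplet, and there are only three possible rooted triplets $ab|c$, $ac|b$, $bc|a$ on a given triple. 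So to cover all of $\cT_n$, for every triple we need at least one of the $k$ roots to realize each of the three orientations that can arise. The crux is to show that $k$ root placements cannot simultaneously cover every triple in all its needed orientations once $n$ exceeds the threshold.

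**Where the real work is.** I expect the main obstacle to be identifying the right combinatorial invariant that forces a miss. I would try the following concrete approach: for each rooting $T_i$, the root lies on an edge $e_i$, which partitions the leaf set into two sides $A_i, B_i$ (the two components of $T - e_i$). A triplet $ab|c$ is displayed by $T_i$ only if the three leaves are arranged so that $c$ is separated from $\{a,b\}$ in the appropriate rooted sense. The triplets $T_i$ \emph{fails} to display are controlled by which leaves fall on the same side near the root. The plan is to show that with only $k$ edges $e_1,\ldots,e_k$ chosen, the induced partitions are too coarse to separate every triple in every orientation: a pigeonhole argument on the $\binom{k}{2}$ or $k^2$-scale pairs of root placements versus the $n$ leaves. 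The bound $\bound = k^2-6$ strongly suggests that the argument counts something on the order of $k^2$ "separating events" that $k$ rootings can supply, and shows that once $n-3 > k^2-9$ internal edges (or leaves) are present, some triple escapes. I would formalize this by fixing attention on a fan/caterpillar-like subconfiguration of $T$ and showing that the displayed-triplet sets $\cC(T_i)$, each of which omits the triplets whose witness sits "below" $e_i$ on the wrong side, cannot jointly cover $\cT_n$; the counting step comparing the number of omitted triplets per rooting against the total $3\binom{n}{3}$ is where the inequality $n > k^2-6$ must be invoked tightly.

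**Verification of the threshold.** Finally I would check the boundary: the construction of a witness triplet should degrade gracefully exactly at $n = k^2 - 6$, confirming the hypothesis $n > \bound$ is what the counting forces and is not an artifact. I would double-check the small-$k$ cases (e.g.\ $k=3$ giving $n > 3$) against the explicit caterpillar examples used elsewhere in the paper to make sure the extremal constant $-6$ is correct and not off by a small additive term.
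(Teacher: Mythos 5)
Your proposal correctly frames the problem (a rooting is a choice of root-location edge, and the displayed triplets are determined by the unrooted topology plus that edge), and your instinct that cherries and caterpillar-like substructures are the right place to look is sound. But what you have written is a plan, not a proof: the "counting / pigeonhole core" and "where the real work is" paragraphs repeatedly defer the key step ("the crux is to show\dots", "I would formalize this by\dots") without ever exhibiting a triple that is missed or an inequality that is actually derived. The paper's argument supplies exactly the two ingredients you are missing. First, a dichotomy on $c$, the number of cherries: if $k < 2c$, some cherry $\{a,b\}$ has a pendant edge, say the one into $a$, that is not a root location of any $T_i$; then $a$ and $b$ form a cherry in every $T_i$, so $bc|a$ is displayed by none of them, for any third leaf $c$. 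Second, if $k \geq 2c$, one bounds the number of chains by $2c-3$ (via the tree obtained by deleting leaves and suppressing degree-$2$ vertices), gets $n \leq 2c + \ell(2c-3)$ and hence $\ell \geq (n-k)/(k-3) > k+2$ from $n > \bound$; a chain of length at least $k+3$ then contains three consecutive pendant leaves $a,b,c$ at $u,v,w$ where the edge $vb$ is not a root location, and $ac|b$ is displayed by a rooting only if its root lies on that very edge, so $ac|b$ is missed. This is where the constant $k^2-6$ actually comes from: (length needed per chain) $\times$ (number of chains), not from $\binom{k}{2}$ pairs of roots.

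Moreover, the specific fallback you propose --- "comparing the number of omitted triplets per rooting against the total $3\binom{n}{3}$" --- cannot work as a purely global count. Each rooting displays exactly one triplet per $3$-subset, i.e.\ exactly $\binom{n}{3}$ of the $3\binom{n}{3}$ triplets in $\cT_n$, so for $k\geq 3$ the total coverage $k\binom{n}{3}$ already meets or exceeds $|\cT_n|$ and no cardinality argument alone can produce a missed triplet. You must exploit the structure of \emph{which} triplets a single rooting forces (namely, that all roots off a given pendant edge induce the same local resolution), which is precisely what the cherry and chain cases do. So the gap is not a technicality: the decisive combinatorial mechanism is absent from the proposal.
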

\begin{proof}
Let~\steven{$c \geq 2$} be the number of cherries (i.e. pairs of leaves with a common neighbour) of~$T$. Let~$\ell$ be the length of the longest chain (i.e. a path of which each vertex is adjacent to exactly one leaf) of~$T$.

We claim that~$T$ has at most~$2c-3$ chains. To see this, consider the tree~$T^*$ obtained from~$T$ by deleting all leaves and subsequently suppressing all degree-2 vertices. Then,~$T^*$ has~$c$ leaves and hence~$2c-3$ edges. Each chain of~$T$ corresponds to an edge of~$T^*$ and hence~$T$ has at most~$2c-3$ chains.

First assume that~$k < 2c$. Then there exists some cherry $\{a,b\}$ such that at most one of the edges incident to~$a$ and~$b$ is a root location of at least one of the trees~$T_1,\ldots ,T_k$ in~$T$. Say that the edge incident to~$a$ is not a root location and let~$c$ be any leaf distinct from~$a$ and~$b$. Then it can easily be checked that the triplet~$bc|a$ is not displayed by any of~$T_1,\ldots ,T_k$.

Now assume that~$k\geq 2c$. Tree~$T$ has at most~2 leaves per cherry plus at most~$\ell$ leaves per chain, hence
\[
n \leq 2c + \ell(2c-3).
\]
It follows that
\[
\ell \geq \frac{n-2c}{2c-3} \geq \frac{n-k}{k-3}.
\]
Now, since~$n > \bound$, it follows that~$\ell > k+2$.

Consider some chain~$C$ of~$T$ with length
at least $k+3$.
%greater than~$3k$.
Since there are at most~$k$ edges that are root locations of the trees~$T_1,\ldots ,T_k$ in~$T$,  
there exists some
subpath~$(u,v,w)$ of~$C$, with $a,b,c$ being the leaves adjacent to $u,v,w$ respectively,
such that the edge from $v$ to $b$ is not a root location.
% such that the edges~$\{u,v\}$ and~$\{v,w\}$ are not root %locations. Let~$a,b,c$ be the %leaves adjacent to~$u,v,w$ respectively.
Then it can easily be checked that the triplet~$ac|b$ is  not displayed by any of~$T_1,\ldots ,T_k$.
\end{proof}

We can now prove the main theorem of this section.

\begin{theorem}\label{thm:infty}
$\tau(n)\rightarrow\infty$ when~$n\rightarrow\infty$
\end{theorem}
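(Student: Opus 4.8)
The plan is to argue by contradiction that a bounded number of trees can never suffice: I fix an arbitrary positive integer $k$ and show that $\tau(n)>k$ for all sufficiently large $n$. So suppose instead that there exist $k$ binary rooted phylogenetic trees $T_1,\dots,T_k$ of order $n$ whose displayed triplets together cover all of $\cT_n$. First I would pass to the unrooted setting: let $U_i$ be the tree obtained from $T_i$ by suppressing its root, so that each $T_i$ is a rooting of $U_i$. The strategy is to locate a large leaf subset $X\subseteq\n$ on which all the $U_i$ carry \emph{the same} unrooted topology, and then invoke Lemma~\ref{lem:missingtriplet} on that common restriction to manufacture an uncovered triplet.

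To produce $X$, I would apply Lemma~\ref{lem:umastinf}: since $\umast_k(n)\to\infty$, for all large $n$ the trees $U_1,\dots,U_k$ share a common unrooted subtree $T^*$ whose order $m$ exceeds both $3$ and $\bound$. Setting $X=L(T^*)$, the fact that $T^*$ is displayed by each $U_i$ gives $U_i|X=T^*$ for every $i$. Because restriction commutes with (un)rooting, each $T_i|X$ unroots to $T^*$, so $T_1|X,\dots,T_k|X$ are $k$ rootings of the single unrooted tree $T^*$.

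Next I would feed $T^*$ (of order $m\ge 4$ with $m>\bound$) together with these $k$ rootings into Lemma~\ref{lem:missingtriplet}, obtaining a triplet $t$ on three leaves of $X$ that is displayed by none of $T_1|X,\dots,T_k|X$. Since the leaves of $t$ lie in $X\subseteq\n$, we have $t\in\cT_n$, so by assumption some $T_i$ displays $t$; and as all leaves of $t$ lie in $X$, the restriction $T_i|X$ must then display $t$ as well, contradicting the previous sentence. This forces $\tau(n)>k$ for all large $n$, and since $k$ was arbitrary we conclude $\tau(n)\to\infty$.

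The step I expect to require the most care is the interface between the two lemmas: verifying cleanly that $T_i|X$ is genuinely a \emph{rooting} of $T^*$ in the precise sense demanded by Lemma~\ref{lem:missingtriplet}, i.e. that restriction and unrooting commute so that the root of $T_i|X$ really sits on an edge of $T^*$, and that displayed triplets on leaves of $X$ are preserved under restriction to $X$. Both are routine facts about phylogenetic restriction, but they are exactly what allows the ``missing triplet'' found inside $X$ to be lifted back to a missing triplet of $\cT_n$, which is the crux of the contradiction.
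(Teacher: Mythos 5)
Your proposal is correct and follows essentially the same route as the paper: both pass to the unrooted trees, use Lemma~\ref{lem:umastinf} to extract a common unrooted subtree of order exceeding $\bound$, observe that the restrictions of the rooted trees to its leaf set are rootings of that subtree, and then apply Lemma~\ref{lem:missingtriplet} to produce an uncovered triplet. The only cosmetic difference is that you fix an arbitrary $k$ and show $\tau(n)>k$ for large $n$, whereas the paper assumes a uniform bound $K$ and derives a contradiction at a single $N$; these are equivalent.
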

\begin{proof}
Suppose to the contrary that there exists some natural number~$K$ such that~$\tau(n)\leq K$ for all~$n\in\NN$. By Lemma~\ref{lem:umastinf}, there exists a natural number~$N\in\NN$ such that $\umast_K(N) > \boundcap$. Let~$\cC_r$ be a collection of~$K$ binary rooted phylogenetic trees of order~$N$ such that each triplet in~$\cT_N$ is displayed by at least one tree in~$\cC_r$. Such a collection exists by the assumption that~$\tau(N)\leq K$.

Let~$\cC_u$ be the collection of unrooted trees obtained from~$\cC_r$ by omitting the orientations of the edges and suppressing the vertices with degree~2 (the former roots). Then there exists some binary unrooted phylogenetic tree~$T_u$ that is displayed by each tree in~$\cC_u$ and has order~$\umast_K(N)$. Let~$L$ be the leaf-set of~$T_u$ and consider the set of rooted trees~$\cC_r|L$ obtained by restricting~$\cC_r$ to~$L$, i.e. for each tree~$T\in\cC_r$, the set~$\cC_r|L$ contains the tree obtained from the smallest subtree of~$T$ containing all elements of~$L$ by suppressing all indegree-1 outdegree-1 vertices. Observe that each tree in~$\cC_r|L$ is a rooting of~$T_u$ and has order $\umast_K(N)$. Hence, since~$\umast_K(N) > \boundcap$, it follows from Lemma~\ref{lem:missingtriplet} that there exists some triplet~$ab|c\in\cT_N$ that is not displayed by any tree in~$\cC_r|L$. Hence, $ab|c$ is not displayed by any tree in~$\cC_r$. This contradicts our assumption that each triplet in~$\cT_N$ is displayed by at 
least one tree in~$\cC_r$.
\end{proof}

Let $\tau_c(n)$ be the minimum number of binary rooted \emph{caterpillar} trees of order~$n$, such that each triplet in~$\cT_n$ is displayed by at least one of these trees. Clearly, $\tau(n) \leq \tau_c(n)$. We have the following logarithmic upper bound:

\begin{lemma}
\label{lem:log}
$\tau(n) \leq \tau_c(n) \leq \bigg \lceil \frac{ \log n(n-1)(n-2) - \log 2 }{\log (3/2)} \bigg \rceil$ 
\end{lemma}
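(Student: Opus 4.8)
The plan is to give a probabilistic / counting argument showing that a logarithmically-sized family of random caterpillars already covers every triplet in $\cT_n$. First I would count the total number of triplets: since each triplet $ab|c$ is determined by an ordered choice of witness $c$ together with an unordered pair $\{a,b\}$, there are exactly $\binom{n}{2}(n-2) = \frac{n(n-1)(n-2)}{2}$ distinct triplets in $\cT_n$. This quantity is what appears inside the logarithm in the statement, so the bound is really $\log_{3/2}(|\cT_n|)$ rounded up, and the task is to realise this as a covering number.

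The key combinatorial input is to estimate, for a single random caterpillar, the probability that it displays a fixed triplet $ab|c$. A caterpillar on $n$ leaves corresponds (via the discussion in Section~\ref{sec:order2phylo}) essentially to a linear order on the leaves, and $ab|c$ is displayed precisely when $c$ is the lowest of the three leaves $a,b,c$ on the spine (equivalently, when in the associated ordering $c$ comes before both $a$ and $b$). If I draw a caterpillar uniformly at random by choosing a uniformly random linear order of the leaves, then among the three positions occupied by $a,b,c$ each is equally likely to be the extremal one, so $\Pr[ab|c \text{ displayed}] = \tfrac{1}{3}$, and the probability it is \emph{not} displayed is $\tfrac{2}{3}$. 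Thus if I pick $t$ independent random caterpillars, the probability that a \emph{fixed} triplet is missed by all of them is $(2/3)^t$, and by a union bound the probability that \emph{some} triplet is missed is at most $|\cT_n| \cdot (2/3)^t$.

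The next step is the standard probabilistic-method conclusion: if $|\cT_n|\,(2/3)^t < 1$, then with positive probability every triplet is covered, so there exists a family of $t$ caterpillars displaying all of $\cT_n$, giving $\tau_c(n) \le t$. Solving $|\cT_n|\,(2/3)^t < 1$ for $t$ yields
\[
t > \frac{\log |\cT_n|}{\log(3/2)} = \frac{\log\frac{n(n-1)(n-2)}{2}}{\log(3/2)} = \frac{\log n(n-1)(n-2) - \log 2}{\log(3/2)},
\]
so taking $t = \big\lceil \tfrac{\log n(n-1)(n-2) - \log 2}{\log(3/2)} \big\rceil$ suffices (the ceiling makes the strict inequality hold, since it forces $t$ strictly above the threshold or exactly enough to push $|\cT_n|(2/3)^t$ below $1$). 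Combined with the trivial inequality $\tau(n) \le \tau_c(n)$ noted before the statement, this establishes the claimed bound.

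The only subtle point, and the part I would double-check rather than treat as routine, is the boundary case of the ceiling: I must confirm that the chosen $t$ really drives $|\cT_n|(2/3)^t$ strictly below $1$ so that a positive-probability (hence nonempty) event is obtained, and that the $\tfrac13$ display probability is exactly correct for the caterpillar model being used (i.e. that the uniform linear order corresponds to a uniformly chosen caterpillar in the sense that all three ``who is lowest'' outcomes are equally likely, with no parity or tie issues arising from the cherry convention described in Section~\ref{sec:order2phylo}). Everything else is a direct union bound, so I expect this boundary/model-consistency check to be the main—and quite minor—obstacle.
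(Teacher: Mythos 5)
Your proof is correct, but it reaches the bound by a different route than the paper. The paper's proof is a deterministic greedy argument: it invokes the known fact (citing Byrka et al.) that for any set of triplets one can find, in polynomial time, a caterpillar displaying at least $1/3$ of them, and then iterates, so that after $k$ caterpillars at most $3\binom{n}{3}(2/3)^{k}$ triplets remain uncovered; choosing the smallest $k$ with $3\binom{n}{3}(2/3)^{k}\leq 1$ gives the stated ceiling. Your version replaces the cited $1/3$-approximation with its underlying probabilistic content (a uniformly random caterpillar displays each fixed triplet with probability exactly $1/3$) and then applies a single global union bound over $t$ independent random caterpillars rather than a greedy recursion. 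The arithmetic is identical, so the two proofs land on the same expression. What your approach buys is self-containedness — no external result is needed, and your verification that the $1/3$ probability is exact under the cherry convention is sound, since if $c$ precedes both $a$ and $b$ in the order then $c$ cannot share the cherry with either of them. What the paper's approach buys is constructiveness (the greedy caterpillars can be found in polynomial time, which matters for the ILP comparison later in that section) and a slightly cleaner endgame, since it only needs the non-strict inequality $3\binom{n}{3}(2/3)^{k}\leq 1$ together with the remark that a single triplet is trivially compatible. Your flagged boundary worry — whether the ceiling forces $|\cT_n|(2/3)^{t}<1$ strictly — is real but vacuous: $\log\big(3\binom{n}{3}\big)/\log(3/2)$ is never an integer, because $3\binom{n}{3}=(3/2)^{k}$ would force $2^{k-1}n(n-1)(n-2)=3^{k}$, impossible by parity for $k\geq 2$ and directly checkable for $k\leq 1$; alternatively, the union bound is strict for $n\geq 4$ since distinct triplets can be simultaneously missed with positive probability. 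Either observation closes the gap you identified.
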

\begin{proof}
It is well known that, given a set of triplets, it is possible to find in polynomial time a tree (in fact, a caterpillar) that is consistent with at least 1/3 of the triplets, see \cite{ByrkaEtAl2008} for a discussion. Observe that $|\cT_n| = 3\binom{n}{3}$ and that a triplet set containing only one triplet is trivially compatible. Combining these facts shows that $k$ caterpillars are sufficient
to display all triplets in $\cT_n$, where $k$ is the smallest integer that satisfies the following
inequality:
\[
3 \binom{n}{3} (2/3)^{k} \leq 1
\]
Rearranging for $k$ gives the desired result.
\end{proof}

The lower bound implicit in Theorem \ref{thm:infty} and the upper bound in Lemma \ref{lem:log} are very weak. To highlight this we computed $\tau(n)$ and $\tau_c(n)$ exactly for small
values of $n$ using Integer Linear Programming (ILP). We defer the details of the ILP to the appendix. The results are shown in the following table. For $n \geq 13$ the ILP for computation
of $\tau(n)$ did not terminate in reasonable time, but the slightly more constrainted ILP for computation of $\tau_c(n)$ did.

\begin{table}[h]
\begin{center}
\begin{tabular}{|c|c|c|c|c|c|c|c|c|c|c|c|}
\hline
$n$          & 3 & 4 & 5 & 6 & 7 & 8 & 9 & 10 & 11 & 12 & $13 \leq n \leq 20$ \\ \hline
\hline
$\tau(n)$    & 3 & 3 & 4 & 4 & 4 & 4 & 4 & 4  & 4  & 4  & $\leq 5$       \\ \hline
$\tau_c(n)$ & 3 & 3 & 4 & 4 & 4 & 4 & 4 & 4  & 4  & 4  & 5       \\ \hline
Lemma \ref{lem:log} bound      & 3 & 7 & 9 & 11 & 12 & 13 & 14 & 15  & 16  & 17  & 17-21       \\ \hline
\end{tabular}
\end{center}
\end{table}

The possibility thus remains that, for all $n$, $\tau(n) = \tau_c(n)$. Resolving this is an interesting open problem. We do however already know that for some triplet sets with fewer
than $3\binom{n}{3}$ triplets the minimum number of caterpillars required is strictly larger
than the minimum number of trees required.  For example, the set of triplets
$\{1 3 | 4, 1 4 | 2, 1 4 | 3,  2 3 | 1, 2 4 | 1 \}$ requires at least 3 caterpillars but only
2 trees. This can be verified by hand or by a simple adaptation of the ILP.

\section{Open problems}

For each problem 2-$\Pi_i$ that is polynomial-time solveable, we know that all instances
of the problem are ``yes'' instances, which means that $k$-$\Pi_i$
is actually polynomial-time solveable for all $k \geq 2$. An obvious conjecture is
that for each problem 2-$\Pi_i$ that has been shown to be NP-complete in this article,  $k$-$\Pi_i$ is actually NP-complete for all $k \geq 2$.
%We consider it highly likely that this conjecture is true.
However,
generalizing the gadgetry used in this article will probably require a considerable effort and need to go beyond auxiliary computational proofs. In the
same spirit, it seems plausible that $k$-{\sc Tree Compatibility} is hard for every $k \geq 3$.
The fact that $\tau(n) \rightarrow \infty$ as $n \rightarrow \infty$ means that the problem
in any case does not become trivially polynomial-time solveable for a sufficiently large, constant number of trees. Other 
challenges include tightening the bounds on $\tau(n)$ and $\tau_c(n)$ and determining
whether $\tau(n) = \tau_c(n)$ for all $n$. In the applied domain, it will be interesting to explore whether the slow growth of $\tau(n)$ has implications for analysis of incongruent biological datasets.

\bibliographystyle{plain}
\bibliography{2LOarxiv}

%\newpage

%\begin{appendix}

\appendix

\section{An alternative proof of Theorem~\ref{t:2-pi1}}

In this section, we establish an alternative proof that 2-$\Pi_1$ is NP-complete. While the proof is more involved \blue{than} the one presented in Section~\ref{sec:two}, it highlights a link of 2-$\Pi_1$ to the graph-theoretic problem {\sc Dichromatic Number}. As an interesting by-product, we also strengthen the hardness result for {\sc Dichromatic Number} to a subclass of digraphs whose vertices have a bounded out-degree. 

We start by introducing the decision problem {\sc Dichromatic Number} that generalizes the concept of the chromatic number to directed graphs~\cite{neumann1982dichromatic}. More formally, {\sc Dichromatic Number} can be stated as follows.\\

\noindent {\sc Dichromatic Number}

\noindent {\bf Instance.} A digraph $D$ with vertex set $V(D)$ and a positive integer $k$.

\noindent {\bf Question.} Is $D$ {\it $k$-dicolorable}, i.e. does there exist a vertex coloring of $V(D)$ with at most $k$ colors such that the subgraph induced by each color does not contain any directed cycle? \\

It is shown in \cite{bokal2004circular} that {\sc Dichromatic Number} is an NP-complete problem. In particular, the authors showed that the special case of {\sc Dichromatic Number} in which $k$ is fixed to 2 is NP-complete. We refer to this variation of the problem as {\sc 2-Dichromatic Number} and it is exactly this problem that we use as a starting point to show that 2-$\Pi_1$ is NP-complete. 

Let $D$ be the input to an instance $\cI$ of {\sc 2-Dichromatic Number}. Following the notation introduced above, we say that $\cI$ is {\it 2-dicolorable} if $\cI$ is a `yes'-instance and refer to any valid coloring of $D$ with exactly two colors as a {\it 2-dicoloring}. We will see later that a digraph that is 2-dicolorable corresponds to a carefully constructed `yes'-instance of 2-$\Pi_1$.

Let $\cR$ be a set of rooted triplets, and let $X$ be the label set of $\cR$. Let $D(\cR)$ be the digraph whose vertex set is $X$ and for which $(c,a)$ and $(c,b)$ are arcs in $D(\cR)$  precisely if $ab|c$ is an element in $\cR$. We call $D(\cR)$ the {\it triplet digraph} of $\cR$. Furthermore, we say that $\cR$ is {\it caterpillar-compatible} if there exists a caterpillar that displays each triplet in $\cR$.

\begin{lemma} \label{lem:triplet_graph}
 A set $\cR$ of triplets is caterpillar-compatible if and only if the triplet digraph $D(\cR)$ is acyclic.
\end{lemma}

\begin{proof}
Throughout the proof, let $X$ be the label set of $\cR$ with $|X|=n$.

First, suppose that there exists a caterpillar $T$ such that  each triplet $r \in \cR$ is displayed by $T$. Towards a contradiction, assume that there is a  cycle \blue{$C=x_0,x_1,...,x_k,x_0$} in $D(\cR)$. Without loss of generality, we may assume that $C$ is a simple cycle. First note that, by construction of $D(\cR)$, each arc of $C$ corresponds to some triplet in $\cR$ and no two arcs of $C$ correspond to the same triplet. Now, for each arc $(x_i,x_j)$ in $C$  with \blue{$i\in\{0,1,\ldots,k\}$} and \blue{$j=(i+1\mod k)$}, let $r_{i,j}$ be a triplet in $\cR$ that corresponds to $(x_i,x_j)$ in $D(\cR)$. We next show that the triplets \blue{$r_{0,1}, r_{1,2},\ldots, r_{k,0}$} cannot all be displayed by $T$. To see that this is indeed not possible, observe that,  for $T$ to display each triplet $r_{i,j}$ with \blue{$i\in\{0,1,\ldots,k\}$} and \blue{$j=(i+1\mod k)$}, the path from $x_i$ to the root of $T$ is shorter than the path from $x_j$ to the root of $T$. This gives a contradiction and, hence, $D(\cR)$ 
is acyclic.
%But this is easy to see because if they were, triplet $r_{1,2}$ would require taxon 1 to be below taxon 2 in that caterpillar, $r_{2,3}$ taxon 2 to be below taxon 3, ..., $r_{k-1, k}$ taxon $k-1$ to be below taxon $k$. But at the same time triplet $r_{k,1}$ would require taxon $k$ to be below taxon 1 in the same caterpillar, which is clearly impossible. 

Second, suppose that $D(\cR)$ is acyclic. \blue{Since $D(\cR)$ is a directed acyclic graph, it has a topological ordering, say 
$(x_n,x_{n-1},\ldots,x_2,x_1)$, where $x_i$ has no incoming arc in the graph obtained from $D(\cR)$, by deleting the vertices in 
$\{x_n,x_{n-1},\ldots,x_{\steven{i+1}}\}$. Now, let $T$ be a caterpillar on $X$ such that $x_1$ and $x_2$ is the unique cherry of $T$ and, for each $i\in\{3,4,\ldots n\}$, the parent of $x_{i-1}$ is a child of the parent of $x_i$.} Let $x_ix_j|x_k$ be a triplet of $\cR$. By construction of $T$ and because $(x_k,x_i)$ and $(x_k,x_j)$ are arcs in $D(\cR)$, it  follows that $\lca_T(x_i,x_k)=\lca_T(x_j,x_k)<_T\lca_T(x_i,x_j)$. Hence $T$ displays each triplet in $\cR$. 
%We construct a caterpillar $T$ such that each $r \in \cR$ is displayed by $T$. Let $(x_1,x_2,\ldots,x_n)$ be an ordering on the elements in $X$ such that for each $i\in\{1,2,\ldots,n\}$ the vertex $x_i$ has out-degree zero in the digraph obtained from  $D(\cR)$ by deleting each vertex $x_1,x_2,\ldots,x_{i-1}$ and all arcs that are incident with these vertices. Furthermore, let $T$ be a caterpillar on $X$ such that $x_1$ and $x_2$ is the unique cherry of $T$ and, for each $i\in\{3,4,\ldots n\}$, the parent of $x_{i-1}$ is a child of the parent of $x_i$. Now let $x_ix_j|x_k$ be a triplet of $\cR$. By construction of $T$ and because $(x_k,x_i)$ and $(x_k,x_j)$ are arcs in $D(\cR)$, it  follows that $\lca_T(x_i,x_k)=\lca_T(x_j,x_k)<_T\lca_T(x_i,x_j)$. Hence $T$ displays each triplet in $\cR$. 
Combining both cases establishes the lemma.
\end{proof}

% \begin{figure}[h]
%  \centering
%  \includegraphics[width=0.8\textwidth]{./figs/lem_triplet_graph.png}
%  \caption{Here I either I explain this construction using in between step $S'$ and meta-taxa $L_i$ and keep the image, or I simply say like I did now that we take any linear order (caterpillar) induced by the partial order (tree) and remove the image. I'm tempted to do the later, but the image feels useful. (I don't care I spent time making it; if not useful just chop it off.)}
%  \label{fig:tree_to_caterpillar}
% \end{figure}

We next introduce a new variant of the {\sc Dichromatic Number} problem. Let $D$ be a digraph and $v$ be a vertex of $D$. We use $d^+(v)$ to denote the out-degree of $v$ in $D$ and $\Delta^+(D)$ to denote the maximum out-degree of all vertices in $D$. Now consider the following decision problem.\\

\noindent {\sc 2-Dichromatic Number Out-Degree 3} 

\noindent {\bf Instance.} A digraph $D$ with vertex set $V(D)$ and $\Delta^+(D)\leq 3$.

\noindent {\bf Question.} Is $D$ {\it $2$-dicolorable}, i.e. does there exist a vertex coloring of $V(D)$ with at most 2 colors such that the subgraph induced by each color does not contain any directed cycle? \\

The next theorem establishes NP-completeness of {\sc 2-Dichromatic Number Out-Degree 3}.

\begin{theorem}\label{t:2-dicro}
The problem {\sc 2-Dichromatic Number Out-Degree 3} is NP-complete.
\end{theorem}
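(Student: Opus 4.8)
The plan is to establish membership in NP and then prove NP-hardness by a polynomial-time reduction from {\sc 2-Dichromatic Number}, which is NP-complete~\cite{bokal2004circular}. Membership in NP is immediate: a candidate $2$-dicoloring of $V(D)$ is a certificate, and one checks in polynomial time that each of the two color classes induces an acyclic subdigraph by testing for directed cycles. The substance of the proof is a \emph{vertex-splitting} construction that, given an arbitrary digraph $D$, produces a digraph $D'$ with $\Delta^+(D')\le 3$ that is $2$-dicolorable if and only if $D$ is.

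For the construction I would replace each vertex $v\in V(D)$ by a \emph{gadget} $H_v$ that behaves like a single vertex with respect to $2$-dicolorings but spreads the out-arcs of $v$ over several low-out-degree copies. Writing $d=d^+(v)$ and listing the out-arcs of $v$ as $v\to w_1,\dots,v\to w_d$, the gadget $H_v$ consists of copies $v^{(1)},\dots,v^{(d)}$ together with three kinds of arcs, each contributing at most one to the out-degree of a copy: (i) a directed \emph{spine} $v^{(1)}\to v^{(2)}\to\cdots\to v^{(d)}$; (ii) for each $i$, the single original out-arc, realized as $v^{(i)}\to w_i^{(1)}$ (it points at the first copy of the gadget $H_{w_i}$); and (iii) an \emph{equality structure} that forces all copies of $v$ to receive the same color in any $2$-dicoloring. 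All incoming arcs of $v$ are routed into $v^{(1)}$; since the problem bounds only out-degrees, this is free. A counting check then shows that an internal copy carries exactly one spine arc, one original arc and one equality arc, so $\Delta^+(D')\le 3$, and $|V(D')|$ is linear in the number of vertices and arcs of $D$.

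The main obstacle is the equality structure, because forcing equal colors and permitting monochromatic transit pull in opposite directions. Two vertices joined by a \emph{digon} (a pair of opposite arcs, i.e.\ a directed $2$-cycle) must receive different colors, so digons naturally encode ``$\neq$'' rather than ``$=$'', and every digon costs one out-arc at \emph{both} of its endpoints. To force $v^{(1)},\dots,v^{(d)}$ all equal while spending only one out-arc per copy, I would make the copies the leaves of a rooted binary tree whose edges are digons and in which all leaves lie at a common depth; the unique (up to swapping the two colors) proper $2$-coloring of such a tree assigns one color to all leaves. Each leaf then has degree one, while the internal ``helper'' vertices have degree at most three and carry no spine or original arcs, so the out-degree bound is respected everywhere. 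Crucially, helper vertices touch only digons and hence can never lie on a monochromatic directed cycle, so they are inert for the cycle analysis; monochromatic transit between copies is instead supplied by the spine, which is monochromatic but acyclic.

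It then remains to verify the two directions. For the forward direction, given a $2$-dicoloring $c$ of $D$, I would color every copy $v^{(i)}$ with $c(v)$, color each equality tree by its forced alternating pattern, and argue that any monochromatic directed cycle in $D'$ must consist of copies linked only by spine and original arcs (digon arcs being bichromatic, and a single gadget's spine being acyclic); projecting such a cycle via $v^{(i)}\mapsto v$ yields a nontrivial monochromatic closed walk in $D$, hence a monochromatic directed cycle, contradicting the validity of $c$. For the converse, given a $2$-dicoloring $c'$ of $D'$, the equality structure guarantees that all copies of $v$ share a color, which I would take as $c(v)$; a monochromatic directed cycle $v_1\to\cdots\to v_k\to v_1$ in $D$ would then lift to $D'$ by entering each gadget $H_{v_{j+1}}$ at $v_{j+1}^{(1)}$, travelling along its (now monochromatic) spine to the copy emitting the required out-arc, and exiting, thereby producing a monochromatic closed walk and hence a monochromatic directed cycle in $D'$, a contradiction. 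Combining both directions with membership in NP establishes the theorem.
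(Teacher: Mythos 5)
Your proposal is correct and takes essentially the same approach as the paper: a reduction from {\sc 2-Dichromatic Number} that splits each high-out-degree vertex into low-out-degree copies carrying its out-arcs, together with an equality gadget built from digons arranged in a tree whose relevant vertices all sit at a common depth, forcing all copies to share a color while remaining inert for the cycle analysis. The only differences are cosmetic (you fan out the out-arcs along a directed path rather than the paper's balanced binary out-tree, and you make the copies themselves the leaves of the digon tree rather than attaching a separate digon tree by digons), and both verification directions go through exactly as you describe.
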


% For this figure have (a) vertex v, (b) balanced tree on v, (c) gadget on the tree.
\begin{figure}[h] \label{fig:gadget}
 \centering
  \includegraphics[width=0.8\textwidth]{./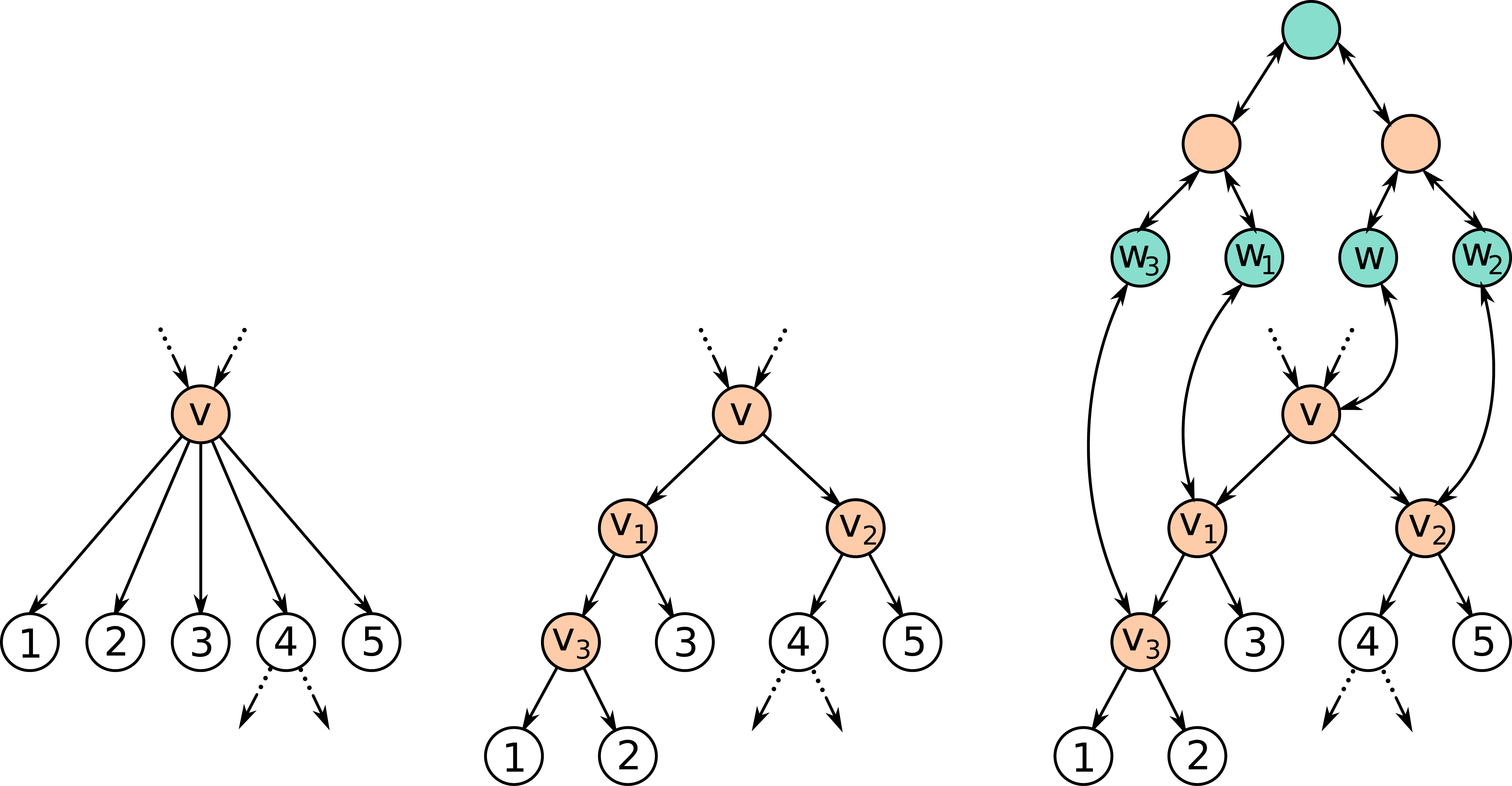}
 \caption{Left: A digraph $D$. Middle and right: A gadget for a vertex $v$ of $D$ whose out-degree is at least 3. For details, see the proof of Theorem~\ref{t:2-dicro}.}
\end{figure}
 %\marginpar {It would be good to add at least one more vertex (or outgoing arcs) to $D$ in the %figure such that not each child of $v$ is a leaf. Also, please change vertex labels $d_i$ to $w_i$ %(see proof).}

\begin{proof}
The problem {\sc 2-Dichromatic Number Out-Degree 3} is clearly in NP since, given a coloring of a digraph it can be verified in polynomial time whether or not each color induces a subgraph that does not contain any directed cycle. We complete the proof, by using a reduction from {\sc 2-Dichromatic Number} to {\sc 2-Dichromatic Number Out-Degree 3}. Let  a digraph $D$ be the input to an instance of {\sc 2-Dichromatic Number}. We construct a digraph $D'$ from $D$ as follows. Start with $D'=D$ and repeat the steps described in the next paragraph for each vertex $v$ in $D$ with $d^+(v)\geq 3$. 

Let $v$ be a vertex of $D'$ that corresponds to a vertex in $D$ with $d=d^+(v)\geq 3$. Replace $v$ with a rooted balanced binary tree $T_v$ on $d$ leaves with $v$ as its root (for an example, see Figure \ref{fig:gadget} (left and middle)). \blue{In other words, when we replace $v$ by $T_v$, we still consider $v$ to correspond to the vertex $v$ in $D$.} Furthermore, the leaves of $T_v$, which may or may not be leaves of the resulting graph, are labeled by the $d$ children of $v$ in $D$. Note that there are, excluding the root which is labeled with $v$, exactly $d-2$ internal vertices which are labeled with $v_1,v_2,\ldots,v_{d-2}$. It is now easy to verify that $d^+(v)=2\leq3$ in $D'$. Next, we add a gadget to $D'$ whose purpose is, as we will see shortly, to force $v$ in $D'$ to have the same color as all its descendants that are contained in $\{v_1,v_2,\ldots,v_{d-2}\}$ in any 2-dicoloring of $D'$. More precisely, we construct a rooted digraph $D_v$ on $d-1$ leaves bijectively labeled with $\{w,w_1,w_2,\ldots,w_{k-2}\}$ that satisfies the following properties: {\bf (i)} $(u,u')$ is an arc in $D_v$ if and only if $(u',u)$ is an arc in $D_v$; {\bf (ii)} each leaf of $D_v$ has the same distance from the root; and {\bf (iii)} $\Delta^+(D_v)\leq 3$. Now connect $D_v$ with $D'$ by adding the following $2(d-1)$ arcs in $$\{(w,v),(v,w),(w_1,v_1),(v_1,w_1),(w_2,v_2),(w_2,v_2),\ldots,(w_{d-2},v_{d-2}),(v_{d-2},w_{d-2})\}.$$ 

It is now easily checked that the digraph $D'$ that ultimately results from the construction described in the last paragraph has $\Delta^+(D')\leq 3$. Furthermore, the construction can be carried out in polynomial time since the size of $D'$ is polynomial in the size of $D$. Lastly, note that each arc $(v,u)$ in $D$ either corresponds to a single arc $(v,u)$ in $D'$ if $d^+(v)< 3$ in $D$ or, corresponds to a path of vertices such that each vertex on that path except for the last is an element in $\{v,v_1,v_2,\ldots,v_{d-2}\}$ if $d^+(v)\geq 3$ in $D$.

The remainder of the proof essentially consists of establishing the following claim.\\

\noindent{\bf Claim.}
 $D$ is 2-dicolorable if and only if $D'$ is 2-dicolorable.\\
 
Throughout the proof of the claim, let $V(D)$ (resp. $V(D')$) be the vertex set of $D$ (resp. $D'$).
 
First, suppose that $D$ is 2-dicolorable. Let $c$ be a 2-dicoloring of $D$. We obtain a coloring of $D'$ in the following way. Assign each vertex $v$ of $D'$ that corresponds to a vertex of $D$ to $c(v)$. Now, for a vertex $v$ of $D$ with $d^+(v)\geq 3$, consider  the rooted digraph $D_v$ as defined in the construction of $D'$. Since $D_v$  satisfies property {\bf (i)}, the only way to color $D_v$ with two colors $\zeta_1$ and $\zeta_2$ is to assign the root to $\zeta_1$ and each other vertex to $\zeta_1$ if it has an even distance to the root of $D_v$ and to $\zeta_2$ otherwise. Without loss of generality, we may assume that each leaf of $D_v$ is assigned to $\zeta_1$. Now consider  the rooted balanced binary tree $T_v$ as defined in the construction of $D'$ and, in particular the set of arcs that join a leaf of $D_v$ to a vertex of $T_v$ (and vice versa). Since $w$ and $v$ form a 2-cycle in $D'$ and, for each $i\in\{1,2,\ldots d-2\}$,  $w_i$ and $v_i$ form a 2-cycle in $D'$, it follows that each vertex in 
$\{v,v_1,v_2,\ldots,v_{d-2}\}$ is assigned to  $\zeta_2$. Up to interchanging $\zeta_1$ and $\zeta_2$, it now follows that this is the unique 2-dicoloring of the subgraph of $D'$ induced by the vertices of $D_v$ and $T_v$. By repeating the described procedure for each vertex in $D$ whose out-degree is at least 3, we obtain a coloring $c'$ of $D'$. We next show that $c'$ is a 2-dicoloring of $D'$. To the contrary, assume that there exists a directed cycle $C'$ in $D'$ whose vertices are all assigned to the same color. Clearly, if $C'$ contains only vertices that are also vertices in $D$ we are done. Therefore, $C'$  contains a vertex that is not a vertex in $D$. Since all vertices of $C'$ are assigned to the same color, no vertex of $C'$ is a vertex of any subgraph $D_v$. Hence, $C'$ only contains vertices of $D$ and, possibly, for a vertex $v$ in $D$ with $d=d^+(v)\geq 3$, the vertex $v$ itself and some vertices in $\{v_1,v_2,\ldots,v_{d-2}\}$. In summary, this implies that $C'$ contains a vertex $v$  with 
$d^+(v)\geq 3$ in $D$. Now, if we contract all vertices in $\{v_1,v_2,\ldots,v_{d-2}\}$ and repeat the procedure for each vertex of $C'$ that corresponds to a vertex of $D$ with out-degree at least 3, the procedure of vertex contractions in $C'$ results in a directed cycle $C$ that only contains vertices of $D$ and, by construction, implies that $c$ assigns all vertices in $C$ to the same color; a contradiction.

Second, suppose that $D'$ is 2-dicolorable.  Since  $V(D)\subseteq V(D')$, we can obtain a coloring $c$ of $D$ from a 2-dicoloring $c'$ of $D'$ by assigning each vertex in $D$ to the same color as in $D'$. Now, towards a contradiction, assume that $c$ is not a 2-dicoloring of $D$. Then, there exists a color $\zeta$ that induces a subgraph of $D$ that contains a directed cycle $C$. By construction of $D'$, it follows that $D'$ contains a directed cycle $C'$ such that the vertex set of $C$ is a subset of the vertex set of $C'$. Moreover,  each arc $(v,u)$ of $C$ either corresponds to an arc $(v,u)$ in $C'$ if  $d^+(v)<3$ in $D$ or corresponds to a directed path from $v$ to $u$ in $C'$ if $d^+(v)\geq 3$ in $D$. Furthermore, since all vertices on the directed path from $u$ to $v$ in $D'$, which consists of at least one arc, have the same color as $v$ (as justified in the last paragraph), it now follows that, under $c'$, $\zeta$ induces a subgraph of $D'$ that contains a directed cycle; a contradiction.
 
The theorem now follows by combining both cases.
\end{proof}

In the remainder of this section, we show that 2-$\Pi_1$ is NP-complete by using a reduction from {\sc 2-Dichromatic Number Out-Degree 3}. Formally, we reduce an instance of {\sc 2-Dichromatic Number Out-Degree 3} to an instance of {\sc 2-Caterpillar Compatibility}. Since 2-{\sc Caterpillar Compatibility} and 2-$\Pi_1$ are essentially equivalent (see Observation~\ref{obs}), the result then follows immediately.

\begin{theorem}
The problem {\em 2}-{\sc Caterpillar Compatibility} is NP-complete. 
\end{theorem}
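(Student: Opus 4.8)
The plan is to reduce from {\sc 2-Dichromatic Number Out-Degree 3}, which is NP-complete by Theorem~\ref{t:2-dicro}, and to translate everything through the triplet digraph so that Lemma~\ref{lem:triplet_graph} applies. By that lemma a triplet set is caterpillar-compatible exactly when its triplet digraph is acyclic, so a set $\cR$ is $2$-caterpillar-compatible precisely when $\cR$ admits a partition into two blocks $\cR_1,\cR_2$ whose triplet digraphs $D(\cR_1)$ and $D(\cR_2)$ are both acyclic. The reduction is driven by the following correspondence between vertex colourings and arc partitions: a triplet $ab|c$ contributes the two arcs $(c,a)$ and $(c,b)$ to $D(\cR)$, both with tail $c$, so assigning a triplet to a block amounts to assigning its witness $c$ to a colour. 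Since every vertex of a directed cycle is the tail of exactly one of that cycle's arcs, a directed cycle all of whose arcs lie in block $i$ is exactly a directed cycle all of whose vertices receive colour $i$. Hence ``$D(\cR_i)$ is acyclic'' will mirror ``colour class $i$ induces no directed cycle''.

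Concretely, given the input digraph $D$ with $\Delta^+(D)\le 3$, I would, for each vertex $v$, create triplets with witness $v$ realising the out-arcs of $v$: the single triplet $u_1u_2|v$ when $d^+(v)=2$; a triplet $u_1 z|v$ with $z$ a fresh sink-label when $d^+(v)=1$; and two triplets, say $u_1u_2|v$ and $u_2u_3|v$, when $d^+(v)=3$. Fresh sink-labels never acquire out-arcs and so never lie on a directed cycle, so they are inert. With this encoding the union of all triplet digraphs is, up to the inert sink arcs, a faithful copy of $D$, and the forward direction is immediate: from a $2$-dicolouring $c$ place every triplet with witness $v$ into block $c(v)$; a directed cycle in $D(\cR_i)$ would have all of its tails coloured $i$, hence all of its vertices coloured $i$, contradicting acyclicity of colour class $i$.

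The main obstacle is the converse, and it is caused precisely by the out-degree-$3$ vertices, the only ones represented by two triplets. A $2$-caterpillar solution partitions the triplets \emph{freely}, so it may place $u_1u_2|v$ and $u_2u_3|v$ into different blocks; this splits the out-arcs of $v$ across both blocks and no longer corresponds to a vertex colouring. The fix is to make each out-degree-$3$ vertex behave like a single vertex by attaching a small, constant-size gadget of auxiliary labels and triplets. The mechanism is the same ``inequality'' phenomenon already exploited in the proof of Theorem~\ref{t:2-dicro}: a $2$-cycle in a triplet digraph (symmetric arcs, arising from two triplets whose witnesses point at each other) cannot lie in a single acyclic block, so it forces the two witnesses into different blocks; chaining such gadgets through an auxiliary label forces two chosen witnesses into the \emph{same} block. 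Building these gadgets in the spirit of the uniqueness lemmas of Section~\ref{sec:two} (e.g.\ Lemma~\ref{l:2-Pi_5-caterpillar}), so that their $2$-caterpillar solutions are forced up to the choice of block, pins the two witness-$v$ triplets to a common block while still permitting either block to be chosen. It is exactly to keep these gadgets of bounded size, and the number of triplets per vertex equal to two, that we first reduced to the out-degree-$3$ case. Once the gadgets guarantee that each vertex's triplets stay together, any $2$-caterpillar solution induces a well-defined colour $c(v)$, and the arc/cycle correspondence above shows $c$ is a $2$-dicolouring of $D$.

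Finally, I would verify that the construction uses only a constant amount of material per vertex and per arc, hence has size polynomial in $D$ and is computable in polynomial time; together with membership in NP (a caterpillar solution is checkable in polynomial time) the two directions establish NP-completeness of {\sc 2-Caterpillar Compatibility}, and by Observation~\ref{obs} this re-proves that $2$-$\Pi_1$ is NP-complete. The delicate part, and the step I expect to absorb most of the effort, is the design and verification of the forcing gadget: one must certify both that it is $2$-caterpillar-compatible under either choice of block (so the forward direction survives) and that it truly excludes the split configurations (so the backward direction yields a consistent colour).
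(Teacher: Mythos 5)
Your overall architecture matches the paper's: reduce from {\sc 2-Dichromatic Number Out-Degree 3} (Theorem~\ref{t:2-dicro}), use Lemma~\ref{lem:triplet_graph} to identify caterpillar-compatibility of a block with acyclicity of its triplet digraph, and encode the out-arcs of each vertex as triplets whose witness is that vertex. The forward direction and the out-degree-1 and out-degree-2 cases are handled exactly as in the paper. However, there is a genuine gap at precisely the point you flag as ``the delicate part'': for out-degree-3 vertices your two triplets $u_1u_2|v$ and $u_2u_3|v$ may be split across the two caterpillars, and you propose to prevent this with a constant-size ``same-block forcing'' gadget that you never construct or verify. This is not a routine omission. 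The natural forcing primitive you invoke --- a $2$-cycle in the triplet digraph forcing two triplets into different blocks --- requires the gadget triplet to contribute an arc \emph{into} $v$ and to receive an arc \emph{from} $v$, i.e.\ its witness must be one of $v$'s actual children $u_j$; fresh auxiliary labels have no incoming arcs from $v$ and so cannot close a cycle with $v$'s triplets. Wiring the gadget through real children alters their out-degrees and can create new cycles in the underlying digraph, so both the existence of the gadget and the preservation of the forward direction would need a careful argument that is absent. As written, the backward direction does not go through.

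The paper sidesteps the gadget entirely with a small but crucial change of encoding: an out-degree-3 vertex $v$ with children $a,b,c$ is represented by \emph{all three} triplets $ab|v$, $ac|v$, $bc|v$. By pigeonhole, any $2$-caterpillar solution places at least two of these in a common caterpillar, which defines the colour of $v$; and since any two of the three triplets jointly mention all of $a,b,c$, the majority block's triplet digraph already contains every out-arc of $v$. No same-block forcing is needed --- the third triplet is allowed to stray into the other block. The cycle-lifting argument then proceeds exactly as you describe. I would recommend replacing your unproved gadget with this three-triplet encoding; the rest of your proof then stands.
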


\begin{proof}
The problem 2-{\sc Caterpillar Compatibility} is clearly in NP because, given two caterpillars $T$ and $T'$, it can be checked in polynomial time if each element of a set of rooted triplets is displayed by $T$ or $T'$. We complete the proof, by using a reduction from {\sc 2-Dichromatic Number Out-Degree 3} to {\sc 2-Caterpillar Compatibility}. Let  a digraph $D$ be the input to an instance of {\sc 2-Dichromatic Number Out-Degree 3}. Let $v$ be a vertex of $D$. Depending on $d^+(v)$, which is at most 3, let $\cR_v$ be one of the followings.
\begin{enumerate}
\item [(i)] If $v$ has exactly one child, say $a$, set $\cR_v=\{ad_v|v\}$,  where $d_v$ does not label any vertex of $D$.
\item [(ii)] If $v$ has exactly two children, say $a$ and $b$, set $\cR_v=\{ab|v\}$.
\item [(iii)] If $v$ has exactly three children, say $a$, $b$, and $c$, set $\cR_v=\{ab|v, ac|v, bc|v\}$.
\end{enumerate}
Now, let $$\cR=\bigcup_{v\in V(D)}\cR_v,$$ where $V(D)$ denotes the vertex set of $D$, be the input to an instance of {\sc 2-Caterpillar Compatibility}. Clearly, the reduction can be carried out in polynomial time and $\cR$ has size polynomial in the size of $V(D)$. Moreover, we make the following crucial observation that we will freely use throughout the rest of the proof. Note that $D(\cR)$, the triplet digraph of $\cR$, is essentially the same graph as $D$. The only difference is that $D(\cR)$  has an extra leaf $d_v$ for each vertex $v$ in $V(D)$ \blue{with} $d^+(v)=1$. However, these extra leaves have no impact upon the acyclicity of $D(\cR)$.

We next establish the following claim.\\

\noindent {\bf Claim.}  $D$ is 2-dicolorable if and only if there exists \blue{two} caterpillars $T$ and $T'$ such that each triplet in $\cR$ is displayed by $T$ or $T'$.\\

First, suppose that $D$ is 2-dicolorable. Let $\zeta_1$ or $\zeta_2$ be the two colors that are used in a 2-dicoloring $c$ of $D$. Furthermore, let $\cR_1$ and $\cR_2$ be the two sets of a bipartition of $\cR$ such that $\cR_1$ contains each triplet of $\cR$ whose witness corresponds to a vertex in $D$ that is assigned to color $\zeta_1$ and, similarly, $\cR_2$ contains each triplet of $\cR$ whose witness corresponds to a vertex in $D$ that is assigned to color $\zeta_2$. Now assume that $\cR_1$ is not caterpillar-compatible. It then follows by Lemma~\ref{lem:triplet_graph} that $D(\cR_1)$ contains a directed cycle $C$. Moreover, the tail of each edge in $C$ corresponds to a witness of some triplet in $\cR_1$. Hence, all vertices of $C$ are assigned to $\zeta_1$; thereby contradicting that $c$ is a 2-dicoloring of $D$.

Second, suppose that there exist two caterpillars $T$ and $T'$ such that each triplet in $\cR$ is displayed by either $T$ or $T'$. Let $\cR_1$ be the subset of triplets of $\cR$ that are displayed by $T$, and let $\cR_2$ be the subset of triplets of $\cR$ that are not displayed by $T$. Observe that each triplet in $\cR$ is encoded by a unique vertex of $D$ and that a vertex $v$ of $D$ is encoded by either a single triplet in $\cR$ (if $d^+(v)<3$) or three triplets in $\cR$ (if $d^+(v)=3$). Now, let $v$ be a vertex of $D$. If $v$ is encoded by a single triplet in $\cR$ that is displayed by $T$ (resp. not displayed by $T)$, then assign $v$ to color $\zeta_1$ (resp. $\zeta_2$). Furthermore, if $v$ is encoded by three triplets in $\cR$, then, by the pigeonhole principle, at least two of those triplets are displayed by $T$ (resp. are not displayed by $T$) in which case we assign $v$ to $\zeta_1$ (resp. $\zeta_2$). It remains to show that the resulting coloring $c$ of $D$ is a 2-dicoloring. Assume the contrary. 
Then there exist a directed cycle \blue{$C=v_0,v_1,\ldots,v_k,v_0$} in $D$ whose vertices are all assigned to the same color under $c$. Without loss of generality, we may assume that this color is $\zeta_1$. Let \blue{$i\in\{0,1,\ldots,k\}$}, and let $j=i+1 \mod k$. We next apply an iterative procedure on each vertex in $C$ and construct a set $\cR_1'$ of triplets. Initially, set $\cR_1'=\emptyset$. Now, repeat the following for each $v_i$ in $C$. If $d^+(v_i)=1$ in $D$, reset $\cR_1'$ to $\cR_1'\cup\{v_jd_{v_i}|v_i\}$. If $d^+(v_i)=2$ in $D$, let $a$ be the child of $v_i$ that is not contained in $C$ and reset $\cR_1'$ to $\cR_1'\cup\{v_ja|v_i\}$. Lastly, if $d^+(v_i)=3$ in $D$, let $a$ and $b$ be the two children of $v_i$ that are not contained in $C$, let $r_1$ and $r_2$ be the two triplets in $\{ab|v_i,av_j|v_i,bv_j|v_i\}$ that are displayed by $T$ and reset $\cR_1'$ to $\cR_1'\cup\{r_1,r_2\}$. In all three cases, it follows by construction of $c$ that, as $v_i$ is assigned to $\zeta_1$, each triplet in 
$\cR_1'$ is displayed by $T$. It is now easily checked that $D(\cR_1')$ contains $C$. Furthermore, by Lemma~\ref{lem:triplet_graph}, we deduce that $\cR_1'$ is incompatible and, hence, $\cR_1$ is also incompatible; thereby contradicting that each triplet in $\cR$ is displayed by $T$ or $T'$.
%If $v_i$ is a vertex of $D$ with $d^+(v_i)<3$, then the arc $(v_i, v_j)$ corresponds to some triplet that is displayed by $T$. Furthermore, if $v_i$ is a vertex of $D$ with $d^+(v_i)=3$, then the arc $(v_i, v_j)$  also corresponds to some triplet that is displayed by $T$. To justify that this is true, note that at least two of the three triplets, say $r_1$ and $r_2$, that are encoded by $v_i$ are displayed by $T$. Since $r_1$ and $r_2$ but those two span all three arcs that leave $v_i$ ????????. Hence, the existence of $C$ in $D$ implies that there is a subset $\cR_1'\subseteq\cR_1$ such that $D(\cR_1')$ contains a directed cycle. It now follows from Lemma~\ref{lem:triplet_graph} that $\cR_1'$ is incompatible and, hence, $\cR_1$ is also incompatible; thereby contradicting that each triplet in $\cR$ is displayed by $T$ or $T'$.

Combining both cases establishes the claim and, therefore the theorem.
\end{proof}

\begin{corollary}
The problem 2-$\Pi_1$ is NP-complete.
\end{corollary}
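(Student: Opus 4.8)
The plan is to derive this corollary with essentially no additional work, by combining two results that have already been established. The immediately preceding theorem shows that 2-{\sc Caterpillar Compatibility} is NP-complete, and Observation~\ref{obs} (specialized to $k=2$) asserts that $k$-$\Pi_1$ is NP-complete if and only if $k$-{\sc Caterpillar Compatibility} is NP-complete. Setting $k=2$ and chaining these two statements yields precisely the claim.

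Concretely, first I would invoke the theorem to record that 2-{\sc Caterpillar Compatibility} is NP-complete. Then I would apply the relevant direction of Observation~\ref{obs} with $k=2$: since 2-{\sc Caterpillar Compatibility} is NP-complete, so is 2-$\Pi_1$. For completeness one may also note that membership of 2-$\Pi_1$ in NP was already handled by Observation~\ref{obs:np}, so the content of the corollary reduces entirely to NP-hardness, which the chain above supplies.

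There is no genuine obstacle at this step; all of the difficulty has been absorbed into the proof of the theorem (the reduction from {\sc 2-Dichromatic Number Out-Degree 3}, which in turn rested on Theorem~\ref{t:2-dicro} and the triplet-digraph correspondence of Lemma~\ref{lem:triplet_graph}) and into the translation between linear orderings and caterpillars that is encoded in Observation~\ref{obs}. The only point worth checking is that the equivalence in Observation~\ref{obs} is indeed stated for general $k$, so that the instance $k=2$ is legitimately covered --- which it is, making the corollary an immediate consequence.
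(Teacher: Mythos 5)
Your derivation matches the paper's exactly: the corollary is intended as an immediate consequence of the preceding theorem (NP-completeness of 2-{\sc Caterpillar Compatibility}) combined with Observation~\ref{obs} at $k=2$, just as the paper signals when it says the two problems are ``essentially equivalent.'' Nothing is missing.
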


%\section{3-$\Pi_1$ is NP-hard}
%%\input{./3treesishard/3treesishard.tex}

\section{Auxiliary computational proofs used in the hardness reductions}
In several of the hardness proofs we make use of small instances which, up to symmetry,
are the unique solution to the set of constraints they induce. The symmetries differ
per problem. For $\Pi_5$ (betweenness) and $\Pi_9$ (non-betweenness)
the set of constraints induced by a linear order $\alpha$ is equal to the set of constraints
\steven{induced by $\bar \alpha$, so a linear order is indistinguishable from its reversal.}
For $\Pi_1$ the first two elements of $\alpha$ can be swapped without altering
the set of constraints induced by the order. For $\Pi_6$ no symmetries need to be eliminated.

In each case we located these uniquely defined instances by a brute-force search,
written in Java. The Java code already verifies the uniqueness property itself, but to increase
confidence in the computational proof we then use the constraint programming solver
MiniZinc \steven{(\url{http://www.minizinc.org}, see also \cite{nethercote2007minizinc})} as a second check. In particular, the Java code outputs the discovered linear orders $\{\alpha, \beta\}$ - or in the case of 3-$\Pi_1$, $\{ \alpha, \beta, \gamma \}$ - and the set of constraints
induced by the linear orders. We then ask MiniZinc to look for linear orders that satisfy these
constraints but which differ from $\{ \alpha, \beta \}$ in at least \steven{one position}. MiniZinc
concludes unsatisfiability. If we remove the ``differs
in at least one place'' constraint, and ask MiniZinc to generate all possible solutions, it
\steven{generates} $\{\alpha, \beta\}$ as the only valid solution. The combination of these results
verifies uniqueness. 

One of the results in the paper relies on the existence of linear orders $\{ \alpha, \beta, \gamma\}$ such that these are unique instances not just for 3-$\Pi_1$ but also for $\Pi_1$
on the space of 3 phylogenetic trees. The proof of uniqueness for 3-$\Pi_1$ is described
above. For the variant on the space of 3 phylogenetic trees, we tackle the proof slightly
differently. At a high level, we ask for 3 phylogenetic trees that satisfy all the induced
constraints such that at least one of the trees is not a caterpillar. The goal, again, is to generate
an unsatifiability. To encode the ``at least one of the trees is not a caterpillar'', we demand
that one of the trees contains at least two \emph{cherries}. A cherry is an internal vertex
$u$ such that $u$ has exactly two children, both \blue{leaves}. Every phylogenetic tree on at least
two taxa has at least one \blue{cherry. 
%A phylogenetic tree on two or more taxa is a caterpillar
%if and only if it has exactly one cherry.    
%%%caterpillar definition is given in Section 2
To} search through the space of phylogenetic
trees, we observe that  the $\{ \alpha, \beta, \gamma\}$ in question each have exactly
6 elements. A rooted binary phylogenetic tree on 6 taxa has exactly 4 internal edges. Each
internal edge can be identified by a specific \emph{cluster} i.e. the subset of taxa reachable in a downwards direction from it. The tree is uniquely defined by these 4 clusters. These 4 clusters
form a laminar family (which corresponds to the concept of \emph{compatibility}, in the phylogenetics literature). We use this characterisation to search through tree space, noting that a tree contains two or more cherries if and only if two or more of its 4 defining clusters have
cardinality 2. MiniZinc, as expected, reports unsatisfiability. When the ``at least two cherries'' constraint is removed and MiniZinc is asked to generate all solutions, it generates
$\{ \alpha, \beta, \gamma \}$ as the only solution (as usual, up to elimination of symmetries).

Full source-code for the computational proofs can be downloaded from \url{http://skelk.sdf-eu.org/ternary}.

\section{An Integer Linear Program for computation of $\tau(n)$ and $\tau_c(n)$}

We begin with computation of $\tau(n)$. We describe an ILP that, for a fixed integer $k \geq 1$,
determines whether $\tau(n) \leq k$. The ILP has $3k\binom{n}{3}$ binary variables
$x_{ab|c, t}$ where $t \in \{1, \ldots, k\}$ and $ab|c \in \cT_n$. Variable $x_{ab|c, t}$ is
1 if and only if tree $k$ displays $ab|c$. To ensure that every triplet in $\cT_n$ is displayed
by some tree we add the following covering constraint, where $t$ is assumed to range over $\{1, \ldots, k\}$,
\begin{align}
\sum_{t} x_{ab|c, t} \geq 1, 
\end{align}
for every $ab|c \in \cT_n$.
We also need to add constraints to ensure that we correctly model trees. To do this, we observe firstly that for every subset $\{a,b,c\}$ of 3 distinct leaves a rooted binary tree displays exactly one of the triplets $ab|c, ac|b, bc|a$. Hence, for every tree $t$, and for every subset $\{a,b,c\}$ of 3 distinct leaves, we add the constraint
\begin{align}
x_{ab|c,t} + x_{ac|b,t} + x_{bc|a,t} = 1.
\end{align}
Next we make use of the well-known fact that a compatible set of triplets can be characterised by avoidance of local conflicts. In particular, consider a set of triplets in which, for each subset $\{a,b,c\}$ of 3 distinct leaves, exactly one of the three possible triplets with leaf set $\{a,b,c\}$ is in the set. Then, as proven in \cite{guillemot2013kernel}, this set is compatible (and thus corresponds exactly to the set of triplets displayed by some rooted phylogenetic tree) if and only
if, for every subset of four distinct leaves $\{a,b,c,d\}$, the subset of triplets whose leaves
are in $\{a,b,c,d\}$ are compatible. This, again referring to \cite{guillemot2013kernel}, is equivalent to constraining, for every subset of four distinct leaves $\{a,b,c,d\}$, that
if $ab|c$ and $bc|d$ are in the set, then so are $ab|d$ and $ac|d$. This naturally leads to an ILP with $O(kn^4)$ constraints. Specifically, for every tree $t$ and for every subset $\{a,b,c,d\}$
of 4 distinct leaves, we add the constraints
\begin{align}
x_{ab|c, t} + x_{bc|d, t} - x_{ab|d,t}  &\leq 1\\
x_{ab|c, t} + x_{bc|d, t} - x_{ac|d,t}  &\leq 1.
\end{align}
This concludes the ILP for computation of $\tau(n)$. The ILP for $\tau_c(n)$ is similar. The
only difference is that we add extra constraints which forbid the trees from having
two or more cherries. This can be achieved by adding, for every tree $t$ and for
every ordered tuple $(a,b,c,d)$ of 4 distinct leaves, the constraint
\begin{align}
x_{ab|c, t} + x_{cd|a, t} \leq 1.
\end{align}
%In fact, for computation of $\tau_c(n)$, constraints of type (3) and (4) can actually be omitted,
%although we do not include details.

%\end{appendix}

\end{document}